\definecolor{BrickRed}{rgb}{0.9,0,0}
\definecolor{RoyalBlue}{rgb}{0,0.5,0.7}
\definecolor{Gray}{rgb}{0.3,0.3,0.6}
\newtheorem{thm}{Theorem}
\newtheorem{cor}[thm]{Corollary}
\newtheorem{prop}[thm]{Proposition}
\newtheorem{problem}{Problem}
\newcommand{\Sym}{{\mathbf S}}
\newcommand{\Herm}{{\mathbf H}}
\newcommand{\cS}{{\mathcal S}}
\newcommand{\cR}{{\mathcal R}}
\newcommand{\cL}{{\mathcal L}}
\newcommand{\cN}{{\mathcal N}}
\newcommand{\cE}{{\mathcal E}}
\newcommand{\mR}{{\mathbb R}}
\newcommand{\mC}{{\mathbb C}}
\newcommand{\bx}{{\mathbf x}}
\newcommand{\bhx}{{\hat{\mathbf x} }}
\newcommand{\btx}{{\tilde{\mathbf x}}}
\newcommand{\succe}{\succ_{\hspace*{-1pt}_e}}
\newcommand{\succeqe}{\succeq_{\hspace*{-1pt}_e}}
\newcommand{\prece}{\prec_{\hspace*{-1pt}_e}}
\newcommand{\preceqe}{\preceq_{\hspace*{-1pt}_e}}
\newcommand{\ii}{{\rm i}}
\newcommand{\mr}{{\operatorname{mr}}}
\newcommand{\mrdual}{{\operatorname{mr_{dual}}}}
\newcommand{\trace}{{\operatorname{trace}}}
\title{Linear models based on noisy data\\ and the Frisch scheme}
\author{Lipeng Ning\thanks{L. Ning is with the Dept.\ of Electrical \& Comp.\ Eng.,
University of Minnesota, Minneapolis, Minnesota 55455, {\tt
ningx015@umn.edu}}\and Tryphon T. Georgiou\thanks{T. T.  Georgiou is
with the Dept.\ of Electrical \& Comp.\ Eng., University of
Minnesota, Minneapolis, Minnesota 55455, {\tt tryphon@umn.edu}}\and
Allen Tannenbaum\thanks{A. Tannenbaum is with the Comprehensive
Cancer Center and Dept.\ of Electrical \& Comp.\ Eng., University of
Alabama, Birmingham, AL 35294, {\tt tannenba@uab.edu}} \and {Stephen~P.~Boyd}\thanks{S. P. Boyd is with the Department of Electrical Engineering, Stanford University, Stanford, CA 94305, {\tt boyd@stanford.edu}}
}
\begin{document}

\maketitle

\begin{abstract}
We address the problem of identifying linear relations among
variables based on noisy measurements. This is, of course, a central
question in problems involving ``Big Data.'' Often a key assumption
is that measurement errors in each variable are independent. This
precise formulation has its roots in the work of Charles Spearman in
1904 and of Ragnar Frisch in the 1930's. Various topics such as
errors-in-variables, factor analysis, and instrumental variables,
all refer to alternative formulations of the problem of how to
account for the anticipated way that noise enters in the data. In
the present paper we begin by describing the basic theory and
provide alternative modern proofs to some key results. We then go on
to consider certain generalizations of the theory as well applying
certain novel numerical techniques to the problem.
 A central role is played by the Frisch-Kalman dictum which aims at a noise contribution that allows
a maximal set of simultaneous linear relations among the noise-free
variables --a rank minimization problem. In the years since
Frisch's original formulation, there have been several insights
including trace minimization as a convenient heuristic to replace
rank minimization. We discuss convex relaxations and
certificates guaranteeing global optimality. A complementary point
of view to the Frisch-Kalman dictum is introduced in which models
lead to a min-max quadratic estimation error for the error-free
variables. Points of contact between the two formalisms are
discussed and various alternative regularization schemes are indicated.
\end{abstract}

\section{Introduction}

The standard paradigm in modeling is to postulate that measured quantities contain a contribution of ``accidental deviation''  \cite{Spearman} from the otherwise ``uniformities'' that characterize an underlying law.
Therefore, a key issue when identifying dependencies between variables is how to account for the contribution of noise in the data. Various assumptions on the structure of noise and of the possible dependencies lead to a number of corresponding methodologies.

The purpose of the present paper is to consider from a modern
computational point of view, the important situation where the noise
components are assumed independent, and the consequences of this
assumption --the data is typically abstracted into a corresponding
(estimated) covariance statistic. This independence assumption
underlies the errors-in-variables model \cite{Durbin,KlepperLeamer}
and factor analysis
\cite{AndersonRubin,Ledermann,Harman1966,Joreskog1969,Shapiro}, and
has a century-old history \cite{Frisch2,Reiersol,Koopmans}; see also
\cite{Kalman1982,Kalman1985,Los,Woodgate1,Guidorzi95,Soderstrom2007errors,Anderson2008,Forni2000}.
Accordingly, given the large classical literature on this problem,
this paper will also have a tutorial flavor.

The precise formulation has its roots in the work of Ragnar Frisch
in the 1930's. The central assumption is that the noise components
are independent of the underlying variables and are also mutually
independent \cite{Kalman1982,Kalman1985}. In addition, since several
alternative linear relations are typically consistent with the data,
a maximal set of simultaneous dependencies is sought as a means to
limit uncertainty and to provide canonical models
\cite{Kalman1982,Kalman1985}. This particular dictum gives rise to a
(non-convex) rank-minimization problem. Thus, it is somewhat
surprising that the special case where the maximal number of
possible simultaneous linear relations is equal to $1$ can be
explicitly characterized --this was accomplished over half a
century ago by Reiers{\o}l  \cite{Reiersol}; see also
\cite{Kalman1982,KlepperLeamer}. To date no other case is known that
admits a precise closed-form solution.

In recent years, emphasis has been shifting from hard, non-convex
optimization to convex regularizations, which in addition scale
nicely with the size of the problem. Following this trend we revisit
the Frisch problem from several alternative angles. We first present
an overview of the literature, and present several new insights and
proofs. In the process, we also give an extension of Reiers{\o}l's
result to complex matrices. Our main interest is in exploring
recently studied convex optimization problems that approximate rank
minimization by use of suitable surrogates. In particular, we study
iterative schemes for treating the general Frisch problem and focus
on certificates that guarantee optimality. In parallel, we consider
a viewpoint that serves as an alternative to the Frisch problem
where now, instead of a maximal number of simultaneous linear
relations, we seek a uniformly optimal estimator for the unobserved
data under the independence assumption of the Frisch scheme. The
optimal estimator is obtained as a solution to a min-max
optimization problem. Rank-regularized and min-max alternatives are
discussed and an example is given to highlight the potential and limitations of the techniques.

The remainder of this paper is organized as follows. We first
introduce the errors-in-variables problem in
Section~\ref{sec:datastrcuture}. In Section~\ref{sec:Frisch}, we
revisit the Frisch problem, and a related problem due to Shapiro,
and provide a geometric interpretation of Reiers{\o}l's result along
with a generalization to complex-valued covariances. In
Section~\ref{sec:MinTrace}, we present an iterative
trace-minimization scheme for solving the Frisch problem and provide
computable lower-bounds for the minimum-rank. In
Section~\ref{correspondence}, we bring up the question of estimation
in the context of the Frisch scheme and motivate a suitable a
rank-regularized min-max optimization problem in
Section~\ref{sec:regularized}. Some concluding remarks are provided
in Section~\ref{sec:conclusion}.

\section{Notation}

$\;$\\[.1in]
\noindent
\begin{tabular}{ll}
    $\cR(\cdot)$, $\cN(\cdot)$   & range space, null space\\
    $\Pi_{\mathcal X}$ & orthogonal projection onto ${\mathcal X}$\\
    $>0\;\; (\geq 0)$ & positive definite (resp., positive semi-definite) \\
    $\Sym_n$& $=\;\;\left\{M \mid M\in \mR^{n\times n},\; M=M' \right\}$\\
    $\Sym_{n,+}$& $=\;\;\left\{M \mid M\in \Sym_n,\; M\geq0 \right\}$\\
    $\Herm_n$& $=\;\;\left\{M \mid M\in \mC^{n\times n},\; M=M^* \right\}$\\
    $\Herm_{n,+}$& $=\;\;\left\{M \mid M\in \Herm_n,\; M\geq0 \right\}$\\
    $[\cdot ]_{k\ell},\;\; ([\cdot ]_{k})$ & $(k, \ell)$-th entry (resp., $k$-th entry)\\
    $|M|$& determinant of $M\in \mR^{n\times n}$\\
    $n_+(\cdot)$& number of positive eigenvalues\\
    $\diag: \mR^{n\times n} \to \mR^n: M\mapsto d$ &  where $[d]_i=[M]_{ii}$ for $i=1, \ldots n$\\
    $\diag^*: \mR^{n} \rightarrow \mR^{n\times n}:  d\mapsto D$&  where $D$ is diagonal and $[D]_{ii}=[d]_{i}$ for $i=1,\ldots n$\\
    $M\succe 0\;(\succeqe 0,\; \prece 0,\;\preceqe 0)$& the off-diagonal entries are $>0$ (resp.\ $\geq 0$, $<0$, $\leq 0$),\\&or 
    can be made so by changing the signs of selected\\&rows and corresponding columns\\
\end{tabular}

\section{Data and basic assumptions}\label{sec:datastrcuture}

Consider a Gaussian vector $\bx$ taking values in $\mR^{n\times 1}$ having zero mean and covariance $\Sigma$.  We
assume that it represents an additive mixture of a Gaussian ``noise-free'' vector $\bhx$
and a ``noise component'' $\btx$, thus
\begin{equation}\label{eq:xa}
\bx=\bhx+\btx.
\end{equation}
The entries of $\btx$ are assumed independent of one another
and independent of the entries of $\bhx$ with both vectors having zero mean
and covariances $\hat\Sigma$ and $\tilde\Sigma$, respectively.
Thus,
\begin{subequations}\label{eq:firstsetofconstraints}
\begin{eqnarray}
&&\cE(\btx \btx') =: \tilde\Sigma \mbox{ is diagonal} \label{eq:xc}\\
&&\cE(\bhx \btx')=0. \label{eq:xb}
\end{eqnarray}
Throughout $\cE(\cdot)$ denotes the expectation operation and $0$
denotes the zero vector/matrix of appropriate size. The noise-free
entries of $\bhx$ are assumed to satisfy a set of $q$ simultaneous
linear relations. Hence, $M'\bhx=0$, with $M\in \mR^{n\times q}$ and
$n>\rank(M)=q>0$. The problem is mainly to infer these relations.
Equivalently, $\cE(\bhx \bhx') =: \hat\Sigma$ has
\begin{eqnarray}
&&\rank(\hat\Sigma)= n-q \label{eq:xd}
\end{eqnarray}
\end{subequations}
and $\hat\Sigma M=0$. Statistics are typically estimated from
observation records. To this end, consider a sequence
\[
x_t\in\mR^{n\times 1},\; t=1,\ldots,T
\]
of independent measurements (realizations) of $\bx$
and, likewise, let $\hat x_t$ and $\tilde x_t$ represent the corresponding values of the noise-free
variable and noise components. Denote by
\[
X=\left[\begin{matrix} x_1\;x_2\; \ldots\; x_T\end{matrix}\right]\in \mR^{n\times T}
\]
the matrix of observations of $\bx$ and similarly denote by $\hat X$
and $\tilde X$ the corresponding matrices of the noise-free and
noise entries, respectively. Data for identifying relations among
the noise-free variables are typically limited to the observation
matrix $X$ and, neglecting a scaling factor of $1/T$, the data is
typically abstracted in the form of a sample covariance $XX^\prime$.
For the most part we will assume that sample covariances are
accurate approximations of true covariances, and hence the modeling
assumptions amount to
\begin{subequations}
\begin{eqnarray}
&& \tilde X \tilde X ^\prime  \simeq \mbox{ diagonal}\label{eq:diagonal}\\
&& \hat X  \tilde X ^\prime\simeq  0 \label{eq:orthogonality}\\
&&\rank(\hat X) =n-q \label{eq:rank}
\end{eqnarray}
\end{subequations}
since $M^\prime \hat X=0$.

The number of possible linear relations among the noise free
variables and the corresponding coefficient matrix need to be
determined from either $X$ or $\Sigma$. This motivates the Frisch
and Shapiro problems discussed in Section~\ref{sec:Frisch}. An
alternative set of problems can be motivated by the need to
determine $\hat X$ from $X$ via suitable decomposition
\begin{equation}\label{eq:decompose}
X=\hat X+\tilde X
\end{equation}
in a way that is consistent with the existence of a set of $q$
linear relations. We will return to this in
Section~\ref{sec:min-max}.

\section{The problems of Frisch and Shapiro}\label{sec:Frisch}

We begin with the Frisch problem concerning the decomposition of a
covariance matrix $\Sigma$ that is consistent with the assumptions
in Section~\ref{sec:datastrcuture}. The fact that, in practice,
$\Sigma$ is an empirical sample covariance motivates relaxing
(\ref{eq:xc}-\ref{eq:xd}) in various ways. In particular, relaxation
of the constraint $\tilde \Sigma\geq 0$ leads to the Shapiro
problem.

\begin{problem}[\em The Frisch problem]\label{problem1} Given $\Sigma\in\Sym_{n,+}$, determine
\begin{eqnarray}\nonumber
\mr_+(\Sigma)&:=&\min\{\rank(\hat\Sigma) \mid \Sigma=\tilde \Sigma+\hat \Sigma,\\&&\tilde\Sigma, \hat\Sigma\geq 0,\;\tilde\Sigma \mbox{ is diagonal}\}.\label{eq:mc}
\end{eqnarray}
\end{problem}

\begin{problem}[\em The Shapiro problem]\label{problemShapiro} Given $\Sigma\in\Sym_{n,+}$, determine
\begin{eqnarray}\nonumber
\mr(\Sigma)&:=&\min\{\rank(\hat\Sigma) \mid \Sigma=\tilde \Sigma+\hat \Sigma,\\&& \hat\Sigma\geq 0,\;\tilde\Sigma \mbox{ is diagonal}\}.\label{eq:mc2}
\end{eqnarray}
\end{problem}

The Frisch problem was studied by several researchers, see e.g.,
\cite{Kalman1985,Los,Woodgate1,woodgate2} and the references therein. On the other hand, Shapiro \cite{Shapiro} introduced the above relaxed
version, removing the requirement that $\tilde \Sigma\geq 0$, in an
attempt to gain understanding of the algebraic constraints imposed
by the off-diagonal elements of $\Sigma$ on the decomposition. We
refer to $\mr_+(\cdot)$ as the {\em Frisch minimum rank} and
$\mr(\cdot)$ as the {\em Shapiro minimum rank}. The former is lower
semicontinuous whereas the latter is not, as stated next. This
difference is crucial if one wants to apply this type of methodology
to real data, namely some sort of continuity is necessary.

\begin{prop}\label{lemma:lowersc}
$\mr_+(\cdot)$ is lower semicontinuous whereas $\mr(\cdot)$ is not.
\end{prop}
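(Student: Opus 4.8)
The plan is to reduce the first assertion to showing that every sublevel set of $\mr_+$ is closed, and to dispose of the second assertion with an explicit $3\times 3$ counterexample; the dichotomy will be traced to the constraint $\tilde\Sigma\ge 0$, which supplies compactness in the Frisch problem but is absent in the Shapiro problem.

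\textbf{Lower semicontinuity of $\mr_+$.} Fix $r\ge 0$ and consider the sublevel set $L_r:=\{\Sigma\in\Sym_{n,+}\mid\mr_+(\Sigma)\le r\}$; since $\mr_+$ is integer-valued, $\mr_+$ is lower semicontinuous iff each $L_r$ is closed. I would take $\Sigma_k\to\Sigma$ with $\Sigma_k\in L_r$ and pick admissible decompositions $\Sigma_k=\tilde\Sigma_k+\hat\Sigma_k$, where $\tilde\Sigma_k=\diag^*(d_k)\ge 0$, $\hat\Sigma_k\ge 0$, and $\rank(\hat\Sigma_k)\le r$. The crucial observation is that $\hat\Sigma_k\ge 0$ forces $0\le\tilde\Sigma_k\le\Sigma_k$, so the diagonals satisfy $0\le [d_k]_i\le[\Sigma_k]_{ii}$ and hence stay bounded. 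Passing to a subsequence, $d_k\to d\ge 0$, so $\tilde\Sigma_k\to\tilde\Sigma:=\diag^*(d)\ge 0$ and $\hat\Sigma_k=\Sigma_k-\tilde\Sigma_k\to\hat\Sigma:=\Sigma-\tilde\Sigma\ge 0$. Because $\{A\mid\rank(A)\le r\}$ is closed (it is the common zero set of all $(r+1)\times(r+1)$ minors), $\rank(\hat\Sigma)\le r$, so $\Sigma=\tilde\Sigma+\hat\Sigma$ is an admissible decomposition and $\Sigma\in L_r$. Thus each $L_r$ is closed.

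\textbf{Failure of lower semicontinuity for $\mr$.} Here the boundedness step above collapses, since $\tilde\Sigma$ need not be positive semidefinite, and I would build the counterexample around exactly that. Take, for $k=1,2,\dots$,
\[
\Sigma_k=\begin{pmatrix}3&1&1\\[1pt]1&3&1/k\\[1pt]1&1/k&3\end{pmatrix}\ \longrightarrow\ \Sigma=\begin{pmatrix}3&1&1\\[1pt]1&3&0\\[1pt]1&0&3\end{pmatrix},
\]
all strictly diagonally dominant, hence in $\Sym_{3,+}$. With $\xi_k=(\sqrt k,\ 1/\sqrt k,\ 1/\sqrt k)'$ one checks $\Sigma_k-\xi_k\xi_k'=\diag^*(3-k,\,3-1/k,\,3-1/k)$, so $\mr(\Sigma_k)=1$ for every $k$. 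For the limit: if $\mr(\Sigma)\le 1$, then $\Sigma=\diag^*(d)+\xi\xi'$ for some real vector $\xi$, and since $\Sigma$ is not diagonal, $\xi\neq 0$; comparing off-diagonal entries gives $\xi_1\xi_2=\xi_1\xi_3=1$, which forces $\xi_1,\xi_2,\xi_3$ all nonzero and hence $\xi_2\xi_3\neq 0$, contradicting $[\Sigma]_{23}=0$. Therefore $\mr(\Sigma)\ge 2>1=\lim_k\mr(\Sigma_k)$, so $\mr$ fails to be lower semicontinuous at $\Sigma$; appending an identity block $\Sigma\oplus I_{n-3}$ transports the example to $\Sym_{n,+}$ for every $n\ge 3$.

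\textbf{Where the difficulty lies.} I do not expect a genuine technical obstacle: once one reformulates the first claim as closedness of sublevel sets, both halves are short. The single point requiring care is the compactness argument for $\{\tilde\Sigma_k\}$ — it is the joint semidefiniteness $\tilde\Sigma_k\ge 0$, $\hat\Sigma_k\ge 0$, not the latter alone, that traps $\tilde\Sigma_k$ in the order interval $[0,\Sigma_k]$ — and noticing that this is precisely the mechanism missing in the Shapiro problem is what directs the search toward a sequence whose diagonal corrector runs off to $-\infty$, i.e.\ the example above.
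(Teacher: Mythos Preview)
Your proof is correct and follows essentially the same route as the paper's. The lower semicontinuity argument for $\mr_+$ is identical in substance (boundedness of the diagonal summand via $0\le\tilde\Sigma_k\le\Sigma_k$, then a compactness/subsequence step and lower semicontinuity of rank), merely phrased via closedness of sublevel sets rather than by contradiction; your $3\times3$ counterexample for $\mr$ is the paper's example with the off-diagonal signs flipped and $\epsilon=1/k$, and your verification that $\mr(\Sigma)\ge 2$ at the limit is slightly more explicit than the paper's.
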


\begin{proof}
Assume that for a given $\Sigma>0$ there exists a sequence $\Sigma_1,\,\Sigma_2,\,\ldots$ of positive definite matrices such that
$\Sigma_i\rightarrow \Sigma$
while
\[
\mr_+(\Sigma_i)<\mr_+(\Sigma)=r,\; \mbox{ for all }i=1,\,2,\,\dots.
\]
Decompose $\Sigma_i=\hat\Sigma_i+D_i$ with $\rank(\hat\Sigma_i)<r$, $\Sigma_i\geq D_i\geq 0$ and $D_i$ diagonal.
Then there exist convergent subsequences $\hat\Sigma_{i_k}\rightarrow \hat\Sigma$ and $D_{i_k}\rightarrow D$, as $k\to \infty$. Since
 $\Sigma_{i_k}\rightarrow \hat\Sigma+D=\Sigma$, by the lower semicontinuity of the rank,
\[
\rank(\hat\Sigma)\leq \lim_{k\rightarrow \infty}\inf \rank(\hat\Sigma_{i_k})<r=\mr_+(\Sigma).
\]
This is a contradiction.
On the other hand, to see that $\mr(\cdot)$ is not lower semicontinuous consider
\[
\Sigma= \left[\begin{matrix} 3 & -1 &-1 \\ -1 &3 & 0\\ -1 & 0 &3\end{matrix}\right] \mbox{ and }\Sigma_\epsilon
=\left[\begin{matrix} 3 & -1 &-1 \\ -1 &3 & \epsilon \\ -1 & \epsilon &3\end{matrix}\right],\;\
{\hat\Sigma_\epsilon}=\left[\begin{matrix} \frac{1}{\epsilon} & -1 &-1 \\ -1 &\epsilon & \epsilon\\ -1 & \epsilon &\epsilon\end{matrix}\right]
\]
for $\epsilon>0$.
Clearly $\mr(\Sigma)=2$. Also $\lim_{\epsilon\to 0}\Sigma_\epsilon=\Sigma$. Yet $\Sigma_\epsilon=
\hat\Sigma_\epsilon+D_\epsilon$ while $\Sigma_\epsilon$
has rank $1$ and $D_\epsilon$ is diagonal ($\not\geq 0$). Hence $\mr(\Sigma_\epsilon)=1$.
\end{proof}

Assuming that the off-diagonal entries of $\Sigma>0$ of size
$n\times n$ are known with absolute certainty, any ``minimum rank''
($\mr_+(\cdot)$ and $\mr(\cdot)$) is bounded below by the so-called
Lederman bound, i.e.,
\begin{align}\label{ledermann}
\frac{2n+1-\sqrt{8n+1}}{2}\leq \mr(\Sigma)\leq \mr_+(\Sigma),
\end{align}
which holds on a generic set of positive definite matrices $\Sigma$,
that is, on a (Zariski open) subset of positive definite matrices.
Equivalently, the set of matrices $\Sigma$ for which $\mr(\Sigma)$
is lower than the Lederman bound is non-generic --their entries
satisfy algebraic equations which fail under small perturbation. To
see this, consider any factorization
\[\Sigma =FF^\prime,
\]
with $F\in\mR^{n\times r}$. There are $(n-r)r + \frac{r(r+1)}{2}$ independent entries in $F$ (when accounting for the action of a unitary transformation of $F$ on the right), whereas the value of the off-diagonal entries of $\Sigma$ impose $\frac{n(n-1)}{2}$ constraints. Thus, the number of independent entries in $F$ exceeds the number of constraints when $(n-r)^2\geq n+r$ which then leads to the inequality $\frac{2n+1-\sqrt{8n+1}}{2}\leq r$. The bound was first noted in \cite{Ledermann} while the independence of the constraints has been detailed in \cite{Bekker1997}.
In general, the computation of the exact value for $\mr_+(\Sigma)$ and $\mr(\Sigma)$ is a non-trivial matter.
Thus, it is rather surprising that an exact analytic result is available for both, in the special case when $r=n-1$.
We review this next in the form of two theorems.

\begin{thm}[\em Reiers\o l's theorem \cite{Reiersol}] \label{thm:Reiersol}
Let $\Sigma\in \Sym_{n,+}$ and $\Sigma>0$, then
\[
\mr_+(\Sigma)=n-1 \Leftrightarrow \Sigma^{-1} \succe 0.
\]
\end{thm}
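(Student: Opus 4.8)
The plan is to establish the two implications separately, after two quick reductions. First, for a diagonal sign matrix $S$ (entries $\pm1$), the congruence $\Sigma\mapsto S\Sigma S$ sends any admissible decomposition $\Sigma=\tilde\Sigma+\hat\Sigma$ to $S\Sigma S=S\tilde\Sigma S+S\hat\Sigma S$ (preserving diagonality, positive semidefiniteness and rank) and sends $\Sigma^{-1}$ to $S\Sigma^{-1}S$; hence both sides of the equivalence are invariant under such congruences, and in each direction I may normalize the sign pattern of $\Sigma^{-1}$ as convenient. Second, $\mr_+(\Sigma)\le n-1$ for every $\Sigma>0$: starting from $\tilde\Sigma=0$ and increasing one diagonal entry of $\tilde\Sigma$ until $\Sigma-\tilde\Sigma$ first reaches the boundary of the positive semidefinite cone produces an admissible $\hat\Sigma$ of rank $\le n-1$. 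So only the lower bound $\mr_+(\Sigma)\ge n-1$ is at issue.

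For $\Sigma^{-1}\succe 0\Rightarrow\mr_+(\Sigma)=n-1$ I normalize so that $E:=\Sigma^{-1}$ has all off-diagonal entries strictly positive, hence $E$ is entrywise positive, and suppose for contradiction that $\Sigma=D+\hat\Sigma$ with $D\ge0$ diagonal, $\hat\Sigma\ge0$ and $\operatorname{rank}\hat\Sigma\le n-2$. If $D>0$, set $C:=D^{1/2}ED^{1/2}$, again entrywise positive; then $\hat\Sigma=\Sigma-D\ge0$ is equivalent to $C\preceq I$ and $\operatorname{rank}(\Sigma-D)=\operatorname{rank}(I-C)$, so $C$ has eigenvalue $1$ with multiplicity $\ge2$ while all eigenvalues of $C$ are $\le1$; thus $\lambda_{\max}(C)=1$ is a non-simple top eigenvalue, contradicting the Perron--Frobenius theorem for the positive matrix $C$. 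If instead $d_i=0$ for some $i$ (say $i=n$), I pass to Schur complements: $\hat\Sigma/\hat\Sigma_{nn}=(\Sigma/\Sigma_{nn})-D'$, with $D'$ the leading $(n-1)\times(n-1)$ block of $D$, is positive semidefinite of rank $\le n-3$, and $(\Sigma/\Sigma_{nn})^{-1}$ is the leading principal block of $\Sigma^{-1}$, hence still entrywise positive; so $\mr_+(\Sigma/\Sigma_{nn})\le n-3=(n-1)-2$, contradicting the theorem in dimension $n-1$. Induction on $n$ (base $n=2$: $\mr_+(\Sigma)=0$ exactly when $\Sigma$, equivalently $\Sigma^{-1}$, is diagonal) then closes this direction.

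For the converse I prove the contrapositive, $\Sigma^{-1}\not\succe0\Rightarrow\mr_+(\Sigma)\le n-2$. Failure of $\Sigma^{-1}\succe 0$ means either (a) some off-diagonal entry of $\Sigma^{-1}$ is $0$, or (b) all are nonzero but some triple $\{i,j,k\}$ is \emph{unbalanced}: $\sign(\Sigma^{-1}_{ij})\,\sign(\Sigma^{-1}_{jk})\,\sign(\Sigma^{-1}_{ik})=-1$. In case (a), relabel so the zero entry lies in the trailing $2\times2$ block; that block of $\Sigma^{-1}$ is then diagonal, so the Schur complement $\Sigma/A$ of the leading $(n-2)\times(n-2)$ block $A$ is diagonal, and $\Sigma=\left(\begin{smallmatrix}0&0\\0&\Sigma/A\end{smallmatrix}\right)+\left(\begin{smallmatrix}I\\ B'A^{-1}\end{smallmatrix}\right)A\left(\begin{smallmatrix}I&A^{-1}B\end{smallmatrix}\right)$, with $B$ the upper-right block of $\Sigma$, exhibits $\Sigma$ as a nonnegative diagonal matrix plus a positive semidefinite matrix of rank $n-2$. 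In case (b), the $3\times3$ Schur complement $\Sigma'':=\Sigma/\Sigma[\{i,j,k\}^{c}]$ is positive definite with $(\Sigma'')^{-1}$ equal to the $\{i,j,k\}$ block of $\Sigma^{-1}$, hence still unbalanced; writing $\Sigma=\left(\begin{smallmatrix}\Sigma''&0\\0&0\end{smallmatrix}\right)+MCM'$ with $C=\Sigma[\{i,j,k\}^{c}]\succ0$ and $M$ of full column rank $n-3$ lifts a rank-$\le1$ decomposition of $\Sigma''$ to a rank-$\le(n-2)$ decomposition of $\Sigma$. It thus remains to handle $n=3$ with $\Sigma^{-1}$ unbalanced: after a sign congruence the off-diagonal signs of $\Sigma^{-1}$ are $(+,+,-)$, so those of $\Sigma=\operatorname{adj}(\Sigma^{-1})/\det(\Sigma^{-1})$ are $(-,-,+)$, and there is a real vector $w$, unique up to sign, with $w_iw_j=\Sigma_{ij}$ for all $i\ne j$; the one remaining inequality $w_i^2\le\Sigma_{ii}$ follows from $\Sigma_{ij}\Sigma_{ik}-\Sigma_{ii}\Sigma_{jk}=(\Sigma^{-1})_{jk}/\det(\Sigma^{-1})$ (a form of $\operatorname{adj}(\operatorname{adj}X)=\det(X)X$) together with $\det(\Sigma^{-1})>0$ and the fixed sign of $(\Sigma^{-1})_{jk}$. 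Then $\hat\Sigma:=ww'$ is admissible, so $\mr_+(\Sigma)\le1$.

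The steps I expect to be most delicate are: (i) in the first implication, the optimal $D$ may be singular, and this is not a removable technicality --- both the equivalence $\Sigma\succeq D\iff D^{1/2}\Sigma^{-1}D^{1/2}\preceq I$ and the Perron argument genuinely need $D>0$ --- which is exactly why the Schur-complement induction enters; and (ii) in the $n=3$ base of the converse, one must track how the off-diagonal sign pattern of $\Sigma$ relates to that of $\Sigma^{-1}$ and verify $w_i^2\le\Sigma_{ii}$: the plausible guess that an admissible rank-one $ww'$ exists as soon as the signs of the $\Sigma_{ij}$ are consistent is correct, but confirming it requires a determinantal identity such as $\operatorname{adj}(\operatorname{adj}X)=\det(X)X$, not just $\Sigma>0$ in the abstract.
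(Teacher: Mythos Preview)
Your proof is correct and takes a genuinely different route from the paper's. The paper establishes Reiers{\o}l's theorem indirectly through the dual Frisch problem: it first proves Theorem~\ref{thm:dualreiersol} characterizing $\mrdual(S)=n-1$ via the sign condition $S\succeqe 0$ (using Lemmas~\ref{lemma:previous} and~\ref{lemma:next} on matrices with nonpositive off-diagonals), and then transfers this to $\mr_+$ through the inequalities \eqref{eq:inequality}--\eqref{eq:inequality2} combined with the lower semicontinuity of $\mr_+$ (Proposition~\ref{lemma:lowersc}) and a perturbation argument. Your argument bypasses the dual problem entirely. For the forward implication you work directly with $C=D^{1/2}\Sigma^{-1}D^{1/2}$ and invoke Perron--Frobenius to force simplicity of the top eigenvalue when $D>0$, handling singular $D$ by a Schur-complement induction; this is closer in spirit to the classical Reiers{\o}l--Kalman line the paper alludes to but deliberately replaces. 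For the converse you give explicit rank-$(n-2)$ constructions: the Schur-complement splitting in case~(a) and the lift-from-$3\times 3$ reduction in case~(b), with the $3\times 3$ base settled by the adjugate identity $\Sigma_{ij}\Sigma_{ik}-\Sigma_{ii}\Sigma_{jk}=\det(\Sigma)\,(\Sigma^{-1})_{jk}$.

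What each approach buys: the paper's dual formulation yields Shapiro's theorem essentially for free via \eqref{eq:mrmrdual} and produces the geometric byproduct of Corollary~\ref{cor:numberofobtuseangles}, at the cost of the semicontinuity/perturbation scaffolding. Your proof is more self-contained and, in the converse direction, actually \emph{constructs} the admissible low-rank decomposition rather than inferring its existence, which is arguably more informative; on the other hand it is tailored to $\mr_+$ and does not immediately deliver the Shapiro analogue.
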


\begin{thm}[\em Shapiro's theorem \cite{Shapiro1982b}]\label{thm:Shapiro}
Let $\Sigma\in\Sym_{n,+}$ and irreducible,
\[
\mr(\Sigma)=n-1\Leftrightarrow \Sigma\preceqe 0.
\]
\end{thm}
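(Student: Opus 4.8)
The plan hinges on the fact that $\mr(\Sigma)\le n-1$ always: if $\Sigma$ is already singular this is immediate, while if $\Sigma>0$ then subtracting $t\,e_ie_i'$ with $t=1/[\Sigma^{-1}]_{ii}$ produces a positive semidefinite matrix of rank $n-1$. Hence the theorem is equivalent to the statement that $\mr(\Sigma)=n-1$ --- equivalently, that \emph{every} singular positive semidefinite $\hat\Sigma=\Sigma-D$ ($D$ diagonal) has $\nullity(\hat\Sigma)=1$ --- if and only if $\Sigma\preceqe 0$.

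For ``$\Sigma\preceqe 0\Rightarrow\mr(\Sigma)\ge n-1$'' I would first apply the signature similarity $\Sigma\mapsto S\Sigma S$, $S=\operatorname{diag}(\pm1)$, so that all off-diagonal entries of $\Sigma$ become $\le 0$; this preserves rank, positive semidefiniteness, diagonality of the noise term and irreducibility, so there is no loss of generality. Any admissible $\hat\Sigma=\Sigma-D\ge0$ then has the same nonpositive off-diagonal entries and is irreducible, and the crux is a Perron--Frobenius-type lemma: a positive semidefinite, irreducible matrix $A$ with nonpositive off-diagonal entries has $\nullity(A)\le 1$, with the kernel (when nontrivial) spanned by an entrywise positive vector. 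To prove it, $Av=0$ forces $|v|'A|v|\le v'Av=0$, hence $A|v|=0$; tracking the equality in this chain and using connectedness of the support graph shows that the nonzero entries of $v$ share a single sign and that in fact $|v|>0$; then if $\nullity(A)\ge 2$, picking independent $v,w$ in the kernel, both normalized to be positive, the vector $v-tw$ with $t=\min_i v_i/w_i>0$ would be a nonzero kernel element that is nonnegative with a vanishing entry --- a contradiction. Thus $\rank(\hat\Sigma)\ge n-1$ for every admissible $D$, and with the preliminary bound $\mr(\Sigma)=n-1$.

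For the converse, ``$\Sigma\not\preceqe 0\Rightarrow\mr(\Sigma)\le n-2$'', I would use the reformulation that $\mr(\Sigma)\le r$ if and only if there exist $f_1,\dots,f_n\in\mR^{r}$ with $\langle f_i,f_j\rangle=[\Sigma]_{ij}$ for all $i\ne j$ --- the diagonal of $FF'$ being free, and absorbed into the (possibly indefinite) diagonal noise that the Shapiro problem permits. That $\Sigma$ is irreducible and not $\preceqe 0$ means its support graph carries an unbalanced cycle; a shortest such cycle $(i_1,\dots,i_k)$ is chordless and satisfies $\prod_\ell\sign[\Sigma]_{i_\ell i_{\ell+1}}=(-1)^{k+1}$. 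One then first realizes the cycle --- producing $g_1,\dots,g_k\in\mR^{k-2}$ with $\langle g_\ell,g_{\ell+1}\rangle=[\Sigma]_{i_\ell i_{\ell+1}}$ cyclically and $\langle g_\ell,g_m\rangle=0$ for the non-adjacent pairs (for $k=3$ this is exactly the scalar system $w_\ell w_m=[\Sigma]_{i_\ell i_m}$, solvable precisely because the triple product is positive; for larger $k$ a similar explicit construction works, again using the sign condition) --- and then adjoins the remaining $n-k$ vertices one at a time inside $\mR^{n-2}$, using the degrees of freedom of the construction (including the free norms $\|g_\ell\|$) to meet the prescribed inner products with the vectors already placed. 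Exhibiting all $n$ vectors in $\mR^{n-2}$ certifies $\mr(\Sigma)\le n-2$, hence $\mr(\Sigma)\ne n-1$.

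The main obstacle is this second stage of the converse. The cycle vectors span only a $(k-2)$-dimensional subspace, so a vertex adjoined later and adjacent to many cycle vertices is a priori over-determined; indeed a local eigenvalue computation shows that rank reduction cannot be achieved by any diagonal perturbation supported on two coordinates, so the reduction must be genuinely global. Carrying the adjunction through for \emph{every} unbalanced sign pattern --- not merely generically --- seems to require either an induction that carefully maintains enough genericity of the partial configuration, or a transversality / dimension-count argument showing that the $n$-dimensional affine space of diagonal modifications of $\Sigma$ must meet the codimension-$3$ variety of positive semidefinite matrices of rank $\le n-2$ exactly when the sign obstruction to $\Sigma\preceqe 0$ is absent. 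That is where the real work lies; the Perron--Frobenius half is, by comparison, routine.
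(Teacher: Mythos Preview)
Your forward direction is correct and the Perron--Frobenius argument you give for the key lemma (that an irreducible $A\ge 0$ with nonpositive off-diagonal entries has $\nullity(A)\le 1$) is a clean alternative to the paper's inductive proof of the same fact. The paper derives this lemma by partitioning $M=\begin{pmatrix}A&B\\B'&C\end{pmatrix}$ with $A$ nonsingular of maximal size and comparing the sign patterns of $C$ and $B'A^{-1}B$; your $|v|$-trick is arguably more direct.

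The genuine gap is exactly where you locate it: the converse. Your plan---realize a shortest unbalanced cycle in $\mR^{k-2}$ and then adjoin the remaining $n-k$ vertices---runs into the over-determination you describe, and you do not resolve it. The paper avoids the geometric realization problem entirely by passing to the \emph{dual} decomposition. For $\lambda$ large enough that $\lambda I-\Sigma>0$, one has the identity
\[
\Sigma-D \;=\; (\lambda I-D)-(\lambda I-\Sigma),
\]
so $\mr(\Sigma)=\mrdual(\lambda I-\Sigma)$, where $\mrdual(S):=\min\{\rank(E-S):E\ \text{diagonal},\ E-S\ge 0\}$. The converse then reduces to: if $S>0$ is irreducible and $S\not\succeqe 0$, then $\mrdual(S)<n-1$. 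This is proved by induction on $n$ via a Schur-complement step: writing $S=\begin{pmatrix}A&b\\b'&c\end{pmatrix}$ and $\tilde S=\begin{pmatrix}E&0\\0&e\end{pmatrix}$, the nullity of $\tilde S-S$ equals that of $E-A_e$ with $A_e:=A+(e-c)^{-1}bb'$, so it suffices to find some $e>c$ for which the $(n-1)\times(n-1)$ matrix $A_e$ is either $\not\succeqe 0$ (apply the inductive hypothesis) or reducible (then make each diagonal block of $E-A_e$ singular separately, forcing nullity $\ge 2$). The base case $n=3$ is precisely your scalar system $w_\ell w_m=[\Sigma]_{i_\ell i_m}$.

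In short: your obstruction is real for the direct vector-realization route, and the paper sidesteps it by the substitution $S=\lambda I-\Sigma$ together with an induction on the matrix size rather than on the graph structure. If you want to complete your argument along its own lines, the natural fix is to mimic that induction: instead of adjoining vertices to a fixed cycle configuration, peel off one coordinate and push the unbalancedness into the $(n-1)\times(n-1)$ Schur-complement block.
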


The characterization of covariance matrices $\Sigma$ for which
$\mr_+(\Sigma)=n-1$ was first recognized by T.~C.~Koopmans in 1937
\cite{Koopmans} and proven by Reiers\o l \cite{Reiersol} who used
the Perron-Frobenius theory to improve on Koopmans' analysis. Later
on, R.~E.~Kalman streamlined and completed the steps in
\cite{Kalman1982} relying again on the Perron-Frobenius theorem (see
also Klepper and Leamer \cite{KlepperLeamer} for a detailed
analysis). Our treatment below takes a slightly different angle and
provides some geometric insight by pointing as a key reason that the
maximal number of vectors at an obtuse angle from one another can
exceed the dimension of the ambient space by at most one
(Corollary~\ref{cor:numberofobtuseangles}). We provide new proofs
where we also utilize a dual formulation with an analogous
decomposition of the inverse covariance.

\subsection{A geometric insight}

We begin with two basic lemmas for irreducible matrices in $M\in\Sym_{n,+}$. Recall that a matrix is reducible if by permutation of rows and columns can be brought into a block diagonal form, otherwise it is irreducible.
\begin{lemma}\label{lemma:previous} Let $M>0$ and irreducible. Then,
\begin{eqnarray}\label{eq:first}
M\preceqe 0 &\Rightarrow & M^{-1}\succe 0.
\end{eqnarray}
\end{lemma}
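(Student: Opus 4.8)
The plan is to exploit the sign structure of $M$ and $M^{-1}$ together with irreducibility, without assuming anything about the magnitude of entries. After possibly changing the signs of selected rows and corresponding columns (which affects neither the hypothesis $M \preceqe 0$ nor the conclusion $M^{-1} \succe 0$, since this operation conjugates $M$ by a signature matrix $S$ and conjugates $M^{-1}$ by the same $S$), I may assume outright that $M$ itself has all off-diagonal entries $\leq 0$, i.e. $[M]_{k\ell} \leq 0$ for $k \neq \ell$; such a matrix that is also positive definite is a (symmetric) $M$-matrix. The goal then becomes: a positive definite, irreducible matrix with nonpositive off-diagonal entries has an inverse with strictly positive entries.

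The key step is the representation of $M^{-1}$ via a Neumann series. Write $M = cI - N$ where $c > 0$ is chosen larger than the largest diagonal entry of $M$ and $N := cI - M$; then $N$ has all entries $\geq 0$ (the diagonal entries are $c - [M]_{ii} > 0$ and the off-diagonal entries are $-[M]_{k\ell} \geq 0$), and $N$ is irreducible because $M$ is. Since $M > 0$, its eigenvalues lie in $(0, 2c)$ if $c$ is chosen large enough (e.g. $c$ exceeding the spectral radius of $M$ as well), so the spectral radius of $c^{-1}N$ is strictly less than $1$ and
\[
M^{-1} = (cI - N)^{-1} = \frac{1}{c}\sum_{j=0}^{\infty} \left(\frac{1}{c}N\right)^{j} = \frac{1}{c}\left(I + \frac{1}{c}N + \frac{1}{c^2}N^2 + \cdots\right).
\]
Every term on the right has nonnegative entries, so $M^{-1} \geq 0$ entrywise. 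To upgrade this to strict positivity of every off-diagonal entry, I use irreducibility: for a nonnegative irreducible $n \times n$ matrix $N$, the matrix $I + N + N^2 + \cdots + N^{n-1}$ has all entries strictly positive (this is the standard combinatorial fact that the associated directed graph on $n$ nodes, having a self-loop at every vertex via the $I$ term and being strongly connected, has a walk of length at most $n-1$ between any ordered pair of vertices). Hence the partial sum already forces $[M^{-1}]_{k\ell} > 0$ for all $k, \ell$, and the remaining terms of the series only add nonnegative quantities; therefore $M^{-1}$ has strictly positive entries, which after undoing the initial signature conjugation is exactly the statement $M^{-1} \succe 0$.

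The main obstacle is making the irreducibility argument clean: one must check that $N = cI - M$ inherits irreducibility from $M$ (immediate, since off-diagonal zero patterns are preserved up to sign), and then invoke the graph-theoretic lemma that strong connectivity of the digraph of $N$ plus the present self-loops yields a strictly positive $\sum_{j=0}^{n-1} N^j$. One should also be slightly careful that "irreducible" for the symmetric matrix $M$ is the same notion as irreducibility of the nonnegative matrix $N$ in the Perron--Frobenius sense — it is, because reducibility of a symmetric matrix via a permutation is precisely the existence of a proper nonempty subset of indices with no cross-edges, which is the failure of strong connectivity of the digraph. Everything else is routine, and the Neumann-series positivity gives the result with no appeal to Perron--Frobenius eigenvector arguments, matching the "geometric/elementary" spirit the authors announce.
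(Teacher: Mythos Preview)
Your proof is correct but follows a genuinely different route from the paper. The paper argues by induction on $n$: partitioning
\[
M=\begin{bmatrix}A & b\\ b' & c\end{bmatrix}
\]
with $c$ scalar, it expresses the blocks of $M^{-1}$ via Schur complements and applies the induction hypothesis to the $(n-1)\times(n-1)$ block $A$ (or, when $A$ is reducible, to its irreducible diagonal blocks), tracking signs of $F=(A-bc^{-1}b')^{-1}$ and $g=-A^{-1}bh$ by hand. Your approach instead recognizes the sign-adjusted $M$ as a symmetric $M$-matrix and reads off entrywise positivity of $M^{-1}$ from the Neumann series $M^{-1}=c^{-1}\sum_{j\ge 0}(c^{-1}N)^j$ together with the graph-theoretic fact that irreducibility of $N\ge 0$ forces $\sum_{j=0}^{n-1}N^j$ to be strictly positive. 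The Neumann-series argument is cleaner and avoids the case split (irreducible vs.\ reducible $A$) that the paper's induction requires. On the other hand, the paper's block-Schur induction is not wasted effort: the same template is reused verbatim for Lemma~\ref{lemma:next} (bounding the nullity) and for the complex-valued generalization in Lemma~\ref{lemma:complexprevious}, settings where a Neumann-series positivity argument does not transfer as directly because one is no longer working with entrywise nonnegative matrices.
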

\begin{lemma}\label{lemma:next} Let $M\geq 0$ and irreducible. Then,
\begin{eqnarray}\label{eq:nullitybound}
M\preceqe 0
&\Rightarrow & {\rm nullity}(M)\leq 1.
\end{eqnarray}
\end{lemma}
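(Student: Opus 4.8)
The plan is to convert the hypothesis into a statement about a nonnegative matrix and then invoke Perron--Frobenius. First I would note that every ingredient of the lemma is invariant under a signature similarity $M\mapsto SMS$, where $S$ is a diagonal matrix with entries $\pm1$: such a transformation preserves positive semidefiniteness, preserves rank and hence $\nullity$, preserves irreducibility, and by the very definition of $\preceqe 0$ it can be chosen so that $[SMS]_{k\ell}\le 0$ for all $k\neq\ell$. So without loss of generality $M$ itself has nonpositive off-diagonal entries. The case $n=1$ is trivial; and for $n\ge 2$, irreducibility together with $M\geq 0$ forces every $[M]_{ii}$ to be strictly positive, since a vanishing diagonal entry of a positive semidefinite matrix forces the whole corresponding row and column to vanish, contradicting irreducibility.

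Next I would introduce the shifted matrix
\[
A:=dI-M,\qquad d:=\max_{1\le i\le n}[M]_{ii}>0 .
\]
Because the off-diagonal entries of $M$ are $\le 0$ we have $[A]_{k\ell}=-[M]_{k\ell}\ge 0$ for $k\neq\ell$, while $[A]_{ii}=d-[M]_{ii}\ge 0$ by the choice of $d$; hence $A\ge 0$ entrywise. Moreover $A$ and $M$ have the same off-diagonal zero pattern, so $A$ is irreducible. Since $M$ is symmetric, the eigenvalues of $A$ are exactly $d-\lambda$ as $\lambda$ ranges over the eigenvalues of $M$, with the same multiplicities; and $M\ge 0$ forces every eigenvalue of $A$ to be $\le d$, so in particular the spectral radius obeys $\rho(A)\le d$.

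Finally I would apply the Perron--Frobenius theorem for irreducible nonnegative matrices: $\rho(A)$ is an eigenvalue of $A$ of algebraic multiplicity one. Now $\nullity(M)$ equals the multiplicity of $0$ as an eigenvalue of $M$, equivalently the multiplicity of $d$ as an eigenvalue of $A$. If $d$ is not an eigenvalue of $A$ this multiplicity is $0$; if it is, then $d\le\rho(A)\le d$ gives $d=\rho(A)$, which is simple. In both cases $\nullity(M)\le 1$, as claimed.

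I expect the only delicate point to be the reduction step: one must check that passing to a matrix with genuinely nonpositive off-diagonal entries (rather than merely $\preceqe 0$) leaves $M\ge 0$, the nullity, and irreducibility untouched, and then that the shifted matrix $A$ is simultaneously entrywise nonnegative and irreducible, so that Perron--Frobenius is legitimately applicable. Once this bookkeeping is in place, the comparison $\rho(A)\le d$ together with the simplicity of $\rho(A)$ yields the bound immediately.
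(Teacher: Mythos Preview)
Your argument is correct. The one place where the wording is slightly loose is the inference ``every eigenvalue of $A$ is $\le d$, so $\rho(A)\le d$'': for a symmetric matrix $\rho(A)=\max_i|\mu_i|$, so an upper bound on the eigenvalues alone does not bound $\rho(A)$. What rescues it is precisely Perron--Frobenius, which you are invoking anyway: $\rho(A)$ is itself an eigenvalue of the nonnegative irreducible $A$, hence $\rho(A)\le d$. With that one clarification the proof is complete.

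Your route, however, is genuinely different from the paper's. The paper deliberately avoids Perron--Frobenius: it partitions $M=\begin{bmatrix}A&B\\B'&C\end{bmatrix}$ with $A$ invertible of maximal size, uses the Schur relation $C=B'A^{-1}B$, and then invokes Lemma~\ref{lemma:previous} (proved separately by induction) to show $A^{-1}\succeqe 0$, forcing $B'A^{-1}B\succeqe 0$ while $C\preceqe 0$; the only way these match is for $C$ to be $1\times 1$. Your approach is shorter and self-contained once Perron--Frobenius is granted, and it bypasses Lemma~\ref{lemma:previous} entirely. What the paper's approach buys is twofold: Lemma~\ref{lemma:previous} is reused later (e.g., in the proofs of Theorem~\ref{thm:dualreiersol} and Proposition~\ref{shapiro_parametrization}), and the block-decomposition argument is what the authors generalize to the complex Hermitian case in Lemma~\ref{lemma:complexnext}, where no off-the-shelf Perron--Frobenius statement applies because the off-diagonal entries carry phases. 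Your method is the cleaner path to this particular lemma, but it does not extend to that complex generalization.
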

\begin{proof} 
It is easy to verify that for matrices of size $2\times 2$,
(\ref{eq:first}) holds true. Assume that the statement also holds true for matrices of size up to $k\times k$, for a certain value of $k$,
and consider a matrix $M$ of size $(k+1)\times(k+1)$ with $M>0$ and $M\preceqe 0$. Partition
\[M=\left[\begin{matrix}A &b\\b' &c\end{matrix}\right]
\]
so that $c$ is a scalar and, hence, $A$ is of size $k\times k$.
Partitioning conformably,
\[M^{-1}=\left[\begin{matrix}F &g\\g' &h\end{matrix}\right]
\]
where
\[F=(A-bc^{-1}b')^{-1}, ~g=-A^{-1}bh, \mbox{ and }h=(c-b'A^{-1}b)^{-1}>0.
\]

For the case where $A$ is irreducible, because $A$ has size $k\times
k$ and $A\preceqe 0$,  invoking our hypothesis we conclude that
$A^{-1}\succe  0$. Now, since $b$ has only non-positive entries and
$b\neq0$, $g=-A^{-1}bh$ has positive entries. Since
$-bc^{-1}b'\preceqe 0$ and $A\preceqe 0$, then $A-bc^{-1}b'\preceqe
0$ is also irreducible. Thus $F=(A-bc^{-1}b')^{-1}$ has positive
entries by hypothesis.

For the case where $A$ is reducible, permutation of columns and rows
brings $A$ into a block-diagonal form with irreducible blocks. Thus,
$A^{-1}$ is also block diagonal matrix with each block entry-wise
positive. Because $M$ is irreducible,  $b$ must have at least one
non-zero entry corresponding to the rows of each diagonal blocks of
$A$. Then $A-bc^{-1}b'$ is irreducible and $\preceqe 0$. Also
$A^{-1}b$ has all of its entries negative. Therefore
$F=(A-bc^{-1}b')^{-1}$ and $g=-A^{-1}bh$ have positive entries.
Therefore $M^{-1}\succe 0$.
\end{proof}

\begin{proof} 
Rearrange rows and columns and partition
\[M=\left[\begin{matrix}A &B\\B' &C\end{matrix}\right]
\]
so that $A$ is nonsingular and of maximal size, equal to the rank of $M$.
Then
\begin{equation}\label{eq:equality}
C=B'A^{-1}B.
\end{equation}

We first show that $B'A^{-1}B\succeqe 0$. Assume that $A$ is irreducible.
Then $A^{-1}\succe 0$. At the same time $B$ has negative entries and not all zero (since $M$ is irreducible). In this case, $B'A^{-1}B\succe 0$.
If on the other hand $A$ is reducible,  Lemma \ref{lemma:previous} applied to the (irreducible) blocks of $A$ implies that $A^{-1}\succeqe 0$.
Therefore, in this case, $B'A^{-1}B\succeqe 0$.

Returning to \eqref{eq:equality} and in view of the fact that $C\preceqe 0$ while $B'A^{-1}B\succeqe 0$ we conclude that, either $C$ is a scalar (and hence there are no off-diagonal negative entries), or both $C$ and $B'A^{-1}B$ are diagonal. The latter contradicts the assumption that $M$ is irreducible. Hence, the nullity of $M$ can be at most $1$.
\end{proof}

Lemma \ref{lemma:next} provides the following geometric insight, stated as a corollary.
\begin{cor}\label{cor:numberofobtuseangles} In any Euclidean space of dimension $n$,
there can be at most $n+1$ vectors forming an obtuse angle with one another.
\end{cor}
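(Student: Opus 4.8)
The plan is to observe that Corollary~\ref{cor:numberofobtuseangles} is just Lemma~\ref{lemma:next} transcribed into the language of Gram matrices. Suppose $v_1,\dots,v_m$ are (necessarily nonzero) vectors in a Euclidean space of dimension $n$ with $\langle v_i,v_j\rangle<0$ for every $i\neq j$; if $m\le 2$ there is nothing to prove, so I would assume $m\ge 3$. First I would introduce the Gram matrix
\[
G=\big[\langle v_i,v_j\rangle\big]_{i,j=1}^{m}\in\Sym_{m,+}.
\]

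Next I would check that $G$ meets the hypotheses of Lemma~\ref{lemma:next}. Being a Gram matrix, $G\ge 0$. Its off-diagonal entries are all strictly negative, hence $G\preceqe 0$; and for the same reason $G$ is irreducible, since any permutation bringing $G$ to block-diagonal form would force some off-diagonal entry to vanish. Lemma~\ref{lemma:next} then yields $\nullity(G)\le 1$.

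Finally, since $\rank(G)=\dim\operatorname{span}\{v_1,\dots,v_m\}\le n$, the rank–nullity relation gives
\[
m=\rank(G)+\nullity(G)\le n+1,
\]
which is the assertion. There is no substantive obstacle here: all the work resides in Lemma~\ref{lemma:next}, and passing through the Gram matrix is routine. The only points requiring care are bookkeeping ones — the vectors are understood to be nonzero, and ``obtuse'' must be read as strictly obtuse (negative inner product) rather than merely non-acute, since otherwise the $2n$ vectors $\pm e_1,\dots,\pm e_n$ would violate the bound. (That the bound $n+1$ is sharp, e.g.\ via the vertices of a regular simplex centered at the origin, is not needed for the statement.)
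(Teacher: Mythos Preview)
Your argument is correct and follows exactly the paper's approach: form the Gram matrix, note that its negative off-diagonal entries make it irreducible and $\preceqe 0$, invoke Lemma~\ref{lemma:next} to bound the nullity by $1$, and conclude via rank--nullity. The paper's proof is simply a terser version of what you wrote.
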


\begin{proof} The Grammian $M=[v_k'v_\ell]_{k,\ell=1}^{n+q}$ of a selection $\{v_k\mid k=1,\ldots, n+q\}$ of such vectors
has off-diagonal entries which are negative. Hence, by Lemma \ref{lemma:next}, the nullity of $M$ cannot exceed $1$.
\end{proof}

The necessity part of Theorem \ref{thm:Shapiro} is also a direct corollary of Lemma \ref{lemma:next}.
\begin{cor}\label{prop:weaker} Let $\Sigma\in\Sym_{n,+}$ and irreducible. Then
\[
\Sigma \preceqe 0 \Rightarrow\mr(\Sigma)=n-1.
\]
\end{cor}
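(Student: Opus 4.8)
The plan is to prove the equality $\mr(\Sigma)=n-1$ by establishing the two inequalities $\mr(\Sigma)\ge n-1$ and $\mr(\Sigma)\le n-1$ separately: the first is exactly where Lemma~\ref{lemma:next} enters, while the second is an elementary construction that does not use the sign hypothesis at all. The first step is a sign normalization. Since $\Sigma\preceqe 0$, there is a diagonal matrix $S$ with diagonal entries in $\{+1,-1\}$ for which all off-diagonal entries of $S\Sigma S$ are $\le 0$. Conjugation by $S$ (an involution) preserves positive semidefiniteness, irreducibility, rank, and the property of being diagonal, so it maps the decompositions $\Sigma=\tilde\Sigma+\hat\Sigma$ competing in the definition of $\mr(\Sigma)$ bijectively onto those competing for $\mr(S\Sigma S)$; hence $\mr(S\Sigma S)=\mr(\Sigma)$, and we may assume from now on that every off-diagonal entry of $\Sigma$ is $\le 0$.

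For the lower bound, let $\Sigma=\tilde\Sigma+\hat\Sigma$ be \emph{any} decomposition with $\hat\Sigma\ge 0$ and $\tilde\Sigma$ diagonal. Since $\tilde\Sigma$ is diagonal, $\hat\Sigma=\Sigma-\tilde\Sigma$ has precisely the same off-diagonal entries as $\Sigma$. In particular $\hat\Sigma\preceqe 0$, and the off-diagonal support of $\hat\Sigma$ coincides with that of the irreducible matrix $\Sigma$, so $\hat\Sigma$ is itself irreducible. As $\hat\Sigma\ge 0$ as well, Lemma~\ref{lemma:next} applies and gives $\nullity(\hat\Sigma)\le 1$, that is $\rank(\hat\Sigma)\ge n-1$. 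Since the decomposition was arbitrary, $\mr(\Sigma)\ge n-1$.

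For the upper bound I would exhibit a single competing decomposition of rank at most $n-1$. Let $e_1\in\mR^n$ be the first standard unit vector and set $t^*:=\max\{\,t\ge 0 : \Sigma-te_1e_1'\ge 0\,\}$, a maximum that is attained because the set is nonempty (it contains $0$), closed, and bounded above (if $\Sigma-te_1e_1'\ge 0$ then its $(1,1)$ entry $[\Sigma]_{11}-t$ is $\ge 0$). Put $\hat\Sigma:=\Sigma-t^*e_1e_1'\ge 0$ and $\tilde\Sigma:=t^*e_1e_1'$, which is diagonal. Then $\hat\Sigma$ must be singular, for if $\hat\Sigma>0$ then, the positive-definite cone being open, $\hat\Sigma-\epsilon e_1e_1'>0$ for all sufficiently small $\epsilon>0$, contradicting the maximality of $t^*$. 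Hence $\rank(\hat\Sigma)\le n-1$, so $\mr(\Sigma)\le n-1$; combined with the lower bound this gives $\mr(\Sigma)=n-1$.

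The argument is short once Lemma~\ref{lemma:next} is in hand, and the only place that needs a moment's care is verifying that the positive semidefinite summand $\hat\Sigma$ of an \emph{arbitrary} competing decomposition still satisfies the hypotheses of that lemma. The point there is simply that subtracting a diagonal matrix changes neither the off-diagonal entries of $\Sigma$ nor, therefore, its irreducibility, so the lemma is genuinely applicable to every such $\hat\Sigma$. The upper bound, by contrast, is just the familiar observation that the Shapiro relaxation can always lower the rank of a positive semidefinite matrix by one.
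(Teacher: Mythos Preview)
Your proof is correct and follows essentially the same route as the paper: the core step---observing that any admissible $\hat\Sigma=\Sigma-\tilde\Sigma$ inherits the off-diagonal pattern of $\Sigma$, hence is irreducible with $\hat\Sigma\preceqe 0$, so Lemma~\ref{lemma:next} forces $\rank(\hat\Sigma)\ge n-1$---is exactly the paper's argument. Your explicit construction for the upper bound $\mr(\Sigma)\le n-1$ (and the preliminary sign normalization) are details the paper leaves implicit, so if anything your version is slightly more complete.
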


\begin{proof}
Let
$\Sigma =\hat \Sigma+\tilde\Sigma$, with $\tilde\Sigma$ diagonal and $\hat\Sigma\geq 0$. $\hat\Sigma$ is irreducible since $\Sigma$ is irreducible. From Lemma \ref{lemma:next}, the nullity of $\hat\Sigma$ is at most $1$. Thus $\mr(\Sigma)=n-1$.
\end{proof}

\subsection{A dual decomposition}

The matrix inversion lemma provides a correspondence between an
additive decomposition of a positive-definite matrix and a
decomposition of its inverse, albeit with a different sign in one of
the summands. This is stated next.

\begin{lemma}\label{lemma:decompositions} Let
\begin{equation}\label{eq:first_decomposition}
\Sigma=D+FF'
\end{equation}
with $\Sigma,D\in\Sym_{n,+}$,
with $\Sigma,D>0$ and $F\in\mR^{n\times r}$. Then
\begin{equation}\label{eq:second_decomposition}
S:=\Sigma^{-1} = E - GG'
\end{equation}
for $E=D^{-1}$ and $G=D^{-1}F(I+F'D^{-1}F)^{-1/2}$. Conversely, if (\ref{eq:second_decomposition}) holds with $G\in\mR^{n\times r}$, then so does (\ref{eq:first_decomposition}) for $D=E^{-1}$ and $F=E^{-1}G(I-G'E^{-1}G)^{-1/2}$.
\end{lemma}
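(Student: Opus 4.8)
The statement is a direct application of the Sherman–Morrison–Woodbury (matrix inversion) identity, so the plan is to verify the two claimed formulas by substitution rather than to derive them from scratch. First I would apply Woodbury to $\Sigma = D + FF' = D + F I F'$, which yields
\[
\Sigma^{-1} = D^{-1} - D^{-1}F\bigl(I + F'D^{-1}F\bigr)^{-1}F'D^{-1}.
\]
Setting $E = D^{-1}$ and $G = D^{-1}F(I+F'D^{-1}F)^{-1/2}$ — which is well defined since $I + F'D^{-1}F > 0$ — one checks that $GG' = D^{-1}F(I+F'D^{-1}F)^{-1}F'D^{-1}$, so $\Sigma^{-1} = E - GG'$ exactly as claimed. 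This is the forward direction.

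For the converse, I would argue by symmetry rather than repeat the computation. We are given $S = E - GG'$ with $S = \Sigma^{-1} > 0$ and $E = D^{-1} > 0$; I want to recover $D$ and $F$. The natural move is to observe that $S + GG' = E$, apply Woodbury again (now to a \emph{subtraction}, i.e. with the sign absorbed), and get
\[
S^{-1} = E^{-1} + E^{-1}G\bigl(I - G'E^{-1}G\bigr)^{-1}G'E^{-1},
\]
which has the form $D + FF'$ with $D = E^{-1}$ and $F = E^{-1}G(I - G'E^{-1}G)^{-1/2}$. The one genuine subtlety here is the sign condition: for the expression $(I - G'E^{-1}G)^{-1/2}$ to make sense we need $I - G'E^{-1}G > 0$. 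This is not an extra hypothesis — it follows from $S = E - GG' = E^{1/2}(I - E^{-1/2}GG'E^{-1/2})E^{1/2} > 0$, which forces $E^{-1/2}GG'E^{-1/2} < I$, hence $G'E^{-1}G < I$ (same nonzero eigenvalues), hence $I - G'E^{-1}G > 0$. I would state this verification explicitly since it is the only place positivity of $\Sigma$ (equivalently of $S$) is actually used, and it guarantees $D = E^{-1}$ together with the square-root factor $F$ are legitimately defined.

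The remaining work is purely mechanical: expand $FF'$ in the converse direction and confirm it equals $E^{-1}G(I-G'E^{-1}G)^{-1}G'E^{-1}$, then confirm $D + FF' = S^{-1}$ using the Woodbury expansion above, and finally note that $D = E^{-1} > 0$ and $\Sigma = D + FF' \ge D$ so $\Sigma \in \Sym_{n,+}$ with $\Sigma > 0$, consistent with the hypotheses of \eqref{eq:first_decomposition}. I do not expect any real obstacle; the only thing to be careful about is bookkeeping of the two square-root normalizations (one with a $+$ inside, one with a $-$) and making sure the map $F \mapsto G$ and its inverse are mutually consistent, which they are precisely because applying Woodbury twice returns the original matrix.
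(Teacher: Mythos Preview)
Your proposal is correct and follows the same approach as the paper, which simply cites the identity $(I\pm MM')^{-1}=I\mp M(I\mp M'M)^{-1}M'$ (a form of Woodbury) as the entire proof. Your explicit verification that $I-G'E^{-1}G>0$ follows from $S>0$ is a welcome addition, since the paper leaves this well-definedness check implicit.
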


\begin{proof} This follows from the identity
$(I\pm MM')^{-1}=I\mp M(I\mp M'M)^{-1}M'$.
\end{proof}

Application of the lemma suggests the following variation to Frisch's problem.
\begin{problem}[\em The dual Frisch problem]\label{problem2} Given a positive-definite $n\times n$ symmetric matrix $S$ determine
the {\em dual minimum rank}:
\begin{eqnarray}\nonumber
\mrdual(S)&:=&\min\{\rank(\hat S \mid S=E -\hat S,\\&& \hat S,E\geq 0,\;E \mbox{ is diagonal}\nonumber\}.\label{eq:mcdual}
\end{eqnarray}
\end{problem}

Clearly, if $S=\Sigma^{-1}=E-GG^\prime$ (as in (\ref{eq:second_decomposition})), then $E>0$. Furthermore, a
decomposition of $S$ always gives rise to
a decomposition $\Sigma=D+FF^\prime$ (as in (\ref{eq:first_decomposition})) with the terms $FF'$ and
$GG'$ having the same rank. Thus, it is clear that
\begin{equation}\label{eq:inequality}
\mr_+(\Sigma)\leq \mrdual(\Sigma^{-1}),
\end{equation}
and that the above holds with equality when an optimal choice of $D\equiv\tilde \Sigma$ in (\ref{eq:mc}) is invertible.
However, if $D$ is allowed to be singular,
the rank of the summands $FF'$ and $GG'$ may not agree. This is can be seen using the following example. Take
\[\Sigma=\left[\begin{matrix}
2&1&1\\
1&2&1\\
1&1&1\end{matrix}\right].
\]
It is clear that $\Sigma$ admits a decomposition
$\Sigma=\tilde\Sigma+\hat\Sigma$, in correspondence with
(\ref{eq:first_decomposition}), where
$\tilde\Sigma=D=\diag\{1,1,0\}$
while $\hat\Sigma=FF'$ as well as $F'=[1,\,1,\,1]$ are of rank
one. On the other hand,
\[
S=\Sigma^{-1}=\left[\begin{matrix}
\;\;1&\;\;0&-1\\
\;\;0&\;\;1&-1\\
-1&-1&\;\;3\end{matrix}\right].
\]
Taking $E=\diag\{e_1,\;e_2,\;e_3\}$ in (\ref{eq:second_decomposition}), it is evident that the rank of
\[
GG'=E-S=\left[\begin{matrix}
e_1-1&0&1\\
0&e_2-1&1\\
1&1&e_3-3\end{matrix}\right]
\]
cannot be less than $2$ without violating the non-negativity
assumption for the summand $GG'$. The minimal rank for the factor
$G$ is $2$ and is attained by taking $e_1=e_2=2$ and $e_3=5$.

On the other hand, in general, if we perturb $\Sigma$ to $\Sigma+\epsilon I$ and, accordingly, $D$ to $D+\epsilon I$, then
\begin{equation}\label{eq:inequality2}
\mrdual((\Sigma+\epsilon I)^{-1})\leq \mr_+(\Sigma),~ \forall \epsilon>0.
\end{equation}
Equality in \eqref{eq:inequality2} holds for sufficiently small
value of $\epsilon$. Thus, $\mr_+$ and $\mrdual$ are closely
related. However, it should be noted that $\mrdual(\cdot)$ fails to
be lower semi-continuous since a small perturbation of the
off-diagonal entries can reduce $\mrdual(\cdot)$. Yet,
interestingly, an exact characterization of the $\mrdual(S)=n-1$ can
be obtained which is analogous to those for $\mr_+$ and $\mr$ being
equal to $n-1$; the condition for $\mrdual$ will be used to prove
the Reiers\o l and Shapiro theorems.

\begin{thm}\label{thm:dualreiersol} For $S\in\Sym_{n,+}$, with $S>0$ and irreducible,
\begin{equation}\label{dualreiersol}
\mrdual(S)=n-1 \Leftrightarrow S \succeqe 0.
\end{equation}
\end{thm}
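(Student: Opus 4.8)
The plan is to prove the two implications separately. The implication $S\succeqe 0\Rightarrow\mrdual(S)=n-1$ is a short consequence of Lemma~\ref{lemma:next}, whereas $\mrdual(S)=n-1\Rightarrow S\succeqe 0$ is the genuinely hard direction, of the same depth as the necessity part of Shapiro's theorem, and must be argued directly.

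For the reverse implication, observe first that both $\mrdual(\cdot)$ and the relation $\succeqe 0$ are preserved under a congruence $S\mapsto JSJ$ by a diagonal $\pm1$ matrix $J$, so we may assume every off-diagonal entry of $S$ is nonnegative. Write $S=\Delta+P$ with $\Delta=\diag^*(\diag(S))>0$ and $P$ the off-diagonal part, which is symmetric, entrywise nonnegative, nonzero, and irreducible; let $\rho>0$ be its Perron eigenvalue. Since $\rho$ is the largest, and a simple, eigenvalue of the symmetric matrix $P$, the matrix $\hat S:=\rho I-P$ is positive semidefinite of rank $n-1$, and $S=(\rho I+\Delta)-\hat S$ is a feasible decomposition, so $\mrdual(S)\leq n-1$. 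Conversely, in any feasible decomposition $S=E-\hat S$ the matrix $\hat S$ is positive semidefinite, irreducible (its off-diagonal support is that of $S$), and has off-diagonal entries $-S_{ij}\leq 0$; hence $\nullity(\hat S)\leq 1$ by Lemma~\ref{lemma:next}, i.e.\ $\rank(\hat S)\geq n-1$. Therefore $\mrdual(S)=n-1$.

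For the forward implication I would first note that in any feasible decomposition $S=E-\hat S$ one has $E_{ii}=S_{ii}+\hat S_{ii}>0$, so $E$ is automatically positive definite and $\mrdual(S)$ is the least rank of a positive semidefinite matrix $\hat S$ with the prescribed off-diagonal entries $(-S_{ij})_{i\neq j}$ and free diagonal. Fix such an $\hat S$ of rank $n-1$ and let $v$ span $\cN(\hat S)$. The key preliminary fact is that $v$ has no zero coordinate: if $v_i=0$, then $e_i\in\cR(\hat S)$, and subtracting the largest admissible multiple of $e_ie_i'$ from $\hat S$ keeps it positive semidefinite and keeps the prescribed off-diagonal entries, keeps $v$ in the kernel, and creates a second independent kernel vector — a feasible completion of rank $\leq n-2$, contradicting $\mrdual(S)=n-1$. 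Conjugating by the invertible diagonal matrix $\diag^*(v)$ replaces $\hat S$ by $A:=\diag^*(v)\,\hat S\,\diag^*(v)$, which is positive semidefinite, irreducible, of rank $n-1$, satisfies $A\mathbf 1=0$ ($\mathbf 1$ the all-ones vector), and — since conjugation by an invertible diagonal matrix is a rank-preserving bijection between the free-diagonal completions of two off-diagonal patterns — has the property that its off-diagonal pattern admits no positive semidefinite completion of rank $\leq n-2$. If such an $A$ must have nonpositive off-diagonal entries, then $v_iv_jS_{ij}\geq0$ for all $i\neq j$, which is exactly the statement that the sign change $S\mapsto\diag^*(\sign v)\,S\,\diag^*(\sign v)$ makes all off-diagonal entries of $S$ nonnegative; that is, $S\succeqe0$.

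The main obstacle is therefore the following claim: if $A\geq0$ is irreducible with $A\mathbf 1=0$, $\rank(A)=n-1$, and no positive semidefinite matrix having the same off-diagonal part as $A$ has rank $\leq n-2$, then $A_{ij}\leq0$ for all $i\neq j$. I would argue by contradiction: assuming $A_{kl}>0$, construct a completion of rank $n-2$. Splitting $A=L^--L^+$, where $L^{\pm}$ are the positive semidefinite weighted graph Laplacians of the negative, resp.\ positive, off-diagonal part of $A$, irreducibility forces the graph underlying $L^-$ to be connected, so $\cN(L^-)=\mathrm{span}(\mathbf 1)$, while $L^+\neq0$ because $A_{kl}>0$, and $L^-\geq L^+\geq0$. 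One then seeks a diagonal $D$ with $A+D\geq0$ and $\nullity(A+D)\geq2$; the positive entry $A_{kl}$ supplies precisely the slack that permits lowering two diagonal entries (with compensation on a third) so as to force a second null direction. This rank-dropping step — in effect a self-contained proof of the necessity direction of Shapiro's theorem — is the delicate part; an alternative is an induction on $n$ patterned on the proof of Lemma~\ref{lemma:next}, partitioning $A$ to expose a nonsingular $(n-1)\times(n-1)$ principal block and analyzing the Schur complement.
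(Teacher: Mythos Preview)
Your reverse implication is correct and matches the paper's (both rest on Lemma~\ref{lemma:next}); the Perron--Frobenius construction of a rank-$(n-1)$ feasible $\hat S$ is a nice extra. The forward implication, however, has a genuine gap. The preliminary reduction is sound: the null vector $v$ of a rank-$(n-1)$ completion $\hat S$ has no zero coordinate (your rank-drop via subtracting a multiple of $e_ie_i'$ works), conjugation by $\diag^*(v)$ gives an irreducible $A\geq 0$ with $A\mathbf 1=0$, and even the connectedness of the negative-edge graph underlying $L^-$ is correct (test $A\geq 0$ against a signed indicator vector of a putative bipartition). But the decisive step---that a positive off-diagonal entry $A_{kl}>0$ allows a diagonal $D$ with $A+D\geq 0$ and $\nullity(A+D)\geq 2$---is never proved. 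The Laplacian splitting $A=L^--L^+$ is only a restatement, and ``the positive entry supplies precisely the slack that permits lowering two diagonal entries (with compensation on a third)'' is an intuition, not a construction. You yourself flag this as ``the delicate part'' and defer to an unspecified induction; but that induction \emph{is} the proof, and nothing in your write-up substitutes for it.

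The paper supplies precisely this missing piece by inducting on $n$ directly on $S$, without your normalization. The base case $n=3$ is explicit: when $S\not\succeqe 0$ one may arrange all off-diagonal $s_{ij}<0$, set $e_i=s_{ii}-s_{ij}s_{ki}/s_{jk}$, and verify that $\diag^*(e)-S$ has rank one. For $n\geq 4$, partition $S=\left[\begin{smallmatrix}A&b\\b'&c\end{smallmatrix}\right]$; for any diagonal $\tilde S=\left[\begin{smallmatrix}E&0\\0&e\end{smallmatrix}\right]$ with $e>c$ and $\tilde S\geq S$, the nullity of $\tilde S-S$ equals that of $E-A_e$ where $A_e:=A+b(e-c)^{-1}b'$. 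The paper then shows that if $S\not\succeqe 0$ some choice of $e>c$ makes $A_e$ either reducible or $\not\succeqe 0$, and the inductive hypothesis (applied to $A_e$ or to its irreducible blocks) delivers nullity at least two. This concrete Schur-complement construction is exactly what your outline lacks.
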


\begin{proof} 
If $S\succeqe 0$ and $E$ is diagonal satisfying $E\geq S>0$, then $E-S=GG'\preceqe 0$.
By invoking Lemma~\ref{lemma:next} we deduce that if $E-S$ is singular, $\rank(G)=n-1$. Hence, $\mrdual(S)=n-1$.

To establish that $\mrdual(S)=n-1\Rightarrow S \succeqe 0$, we
assume that the condition $S\succeqe 0$ fails and show that
$\mrdual(S)<n-1$. We first argue the case for a $3\times 3$ matrix
$S=[s_{ij}]_{i,j=1}^3$. Provided $S\not\succeqe 0$ we can 
assume that it has strictly negative off-diagonal entries (which can be done by
reflecting the signs of rows and columns).
We now let
\begin{eqnarray*}
e_{i}&=&s_{ii}-\frac{s_{ij}s_{ki}}{s_{jk}}
\end{eqnarray*}
for $i\in\{1,2,3\}$ and $(i,j,k)$ being permutations of $(1,2,3)$.
These are all positive. Let $\tilde S=\diag^*(e_1,e_2,e_3)$.
It can be seen that $\tilde S-S\geq0$ while $\rank(\tilde S-S)=1$.
To verify the latter observe that $\tilde S-S=vv'$
for
\[v'=\left[\begin{matrix}
\sqrt{e_1-s_{11}},&
\sqrt{e_2-s_{22}},&
\sqrt{e_3-s_{33}}\end{matrix}\right].
\]
This establishes the reverse implication for matrices of size $3\times 3$.

We now assume that the statement holds true for matrices of size up to $(n-1)\times (n-1)$ for some $n\geq 4$ and use induction.
So let $S,\;\tilde S$ be of size $n\times n$ with $S\not\succeqe 0$ and $\tilde S$ diagonal. We need to prove that $\mrdual(S)<n-1$.
We partition
\[
S=\left[\begin{matrix} A&b\\ b'&c \end{matrix}
\right]
,\; \tilde S=\left[\begin{matrix} E&0\\ 0& e\end{matrix}\right]
\]
with $A,\;E$ being $(n-1)\times (n-1)$.
For any $\tilde S$ such that $\tilde S-S\geq 0$, $e$ cannot be equal to $c$, otherwise $b=0$ and $S$ is reducible.
Further, $\tilde S-S\geq 0$ if and only if $e>c$ and
\[
M:=E-(A+b(e-c)^{-1}b')\geq0.
\]
The nullity of $\tilde S-S$ coincides with that of $M$. To prove our claim, it suffices to show
that $A_e:=A+b(e-c)^{-1}b'\not\succeqe 0$, or that $A_e$ is reducible for some $e>c$. (Since, in either case, by our hypothesis, the nullity of $M$ for a suitable $E$ exceeds $1$.)

We now consider two possible cases where $S\succeqe 0$ fails. First, we consider the case where already $A\not \succeqe 0$.
Then obviously $A_e\not\succeqe 0$ for $e-c$ sufficiently large.
The second possibility is $S\not \succeqe 0$ while $A\succeqe 0$. But if $A$ is (transformed into) element-wise nonnegative, then $bb'$ must have at least one pair of negative off-diagonal entries. Then, consider $A_e=A+\lambda bb'$ for $\lambda=(e-c)^{-1}\in(0,\infty)$. Evidently, for certain values of $\lambda$ entries of $A_e$ change sign. If a whole row becomes zero for a particular value of $\lambda$, then $A_e$ is reducible. In all other cases, there are values of $\lambda$ for which $A_e\not\succeqe 0$. This completes the proof.
\end{proof}

\subsection{Proof of Reiers{\o}l's theorem (Theorem \ref{thm:Reiersol})}\label{sec:proofReiersol}

We first show that $\Sigma^{-1}\succe 0$ implies
$\mr_+(\Sigma)=n-1$. From the continuity of the inverse,
$(\Sigma+\epsilon  I)^{-1}\succe 0$ for sufficiently small
$\epsilon>0$. Applying Theorem~\ref{thm:dualreiersol}, we conclude
that
\[
\mrdual((\Sigma+\epsilon  I)^{-1})=n-1.
\]
Since  $\mr_+(\Sigma)\geq \mrdual((\Sigma+\epsilon  I)^{-1})$ as in
\eqref{eq:inequality2}, we conclude that $\mr_+(\Sigma)=n-1$.

To prove that $\mr_+(\Sigma) =n-1\Rightarrow \Sigma^{-1}\succe 0$,
we show that assuming $\Sigma^{-1}\not\succe 0$ and $\mr_+(\Sigma)
=n-1$ together leads to a contradiction. From the continuity of the
inverse and the lower semicontinuity of $\mr_+(\cdot)$ (Proposition
\ref{lemma:lowersc}), there exists a symmetric matrix $\Delta$ and
an $\epsilon>0$ such that
\[
(\Sigma+\epsilon  \Delta)^{-1} \not \succeqe 0, \text{~and~} \mr_+(\Sigma+\epsilon  \Delta)=n-1.
\]
Then, from Theorem \ref{thm:dualreiersol},
$
\mrdual((\Sigma+\epsilon  \Delta)^{-1})< n-1
$
while from \eqref{eq:inequality}
\[
\mr_+(\Sigma+\epsilon  \Delta) \leq \mrdual((\Sigma+\epsilon  \Delta)^{-1}).
\]
Thus, we have a contradiction and therefore $\Sigma^{-1}\succe 0$. $\Box$

\subsection{Proof of Shapiro's theorem (Theorem \ref{thm:Shapiro})}\label{sec:proofShapiro}
Given $\Sigma\geq 0$ consider $\lambda>0$ such that $\lambda I-\Sigma\geq0$, a diagonal $D$, and let $E:=\lambda I-D$.
Since
$\Sigma-D=E-(\lambda I -\Sigma)$,
\begin{align}\label{eq:mrmrdual}
\mr(\Sigma)=\mrdual(\lambda I-\Sigma).
\end{align}
If $\Sigma$ is irreducible and $\Sigma\preceqe 0$, then $\lambda  I-\Sigma$ is irreducible and $\lambda  I-\Sigma\succeqe 0$. It follows (Theorem \ref{thm:dualreiersol}) that
$\mrdual(\lambda I-\Sigma)=n-1$, and therefore $\mr(\Sigma)=n-1$ as well.

For the the reverse direction, if $\mr(\Sigma)=n-1$ then $\mrdual(\lambda I-\Sigma)=n-1$, which implies that
$\lambda  I-\Sigma\succeqe 0$ and therefore that $\Sigma\preceqe 0$. $\Box$

The original proof in \cite{Shapiro1982b} claims that for any $\Sigma\geq 0$ of size $n\times n$ with $n>3$ and $\Sigma\not \preceqe 0$, there exists a $(n-1)\times (n-1)$ principle minor that is $\not\preceqe 0$. This statement fails for the following sign pattern
\[\footnotesize{
\left[\begin{matrix}+&0&-&-\\0&+&-&+\\-&-&+&0\\-&+&0&+    \end{matrix} \right].}
\]
This matrix can not transformed to have all nonpositive off-diagonal entries, yet all its $3\times 3$ principle minors $\preceqe 0$.

\subsection{Parametrization of solutions under Reiers{\o}l's and Shapiro's conditions}\label{section:parametrization}

For either the Frisch or the Shapiro problem, a solution is not
unique in general. The parametrization of solutions to the Frisch
problem when $\mr_+(\Sigma)=n-1$ has been known and is briefly
explained below (without proof). Interestingly, an analogous
parametrization is possible for Shapiro's problem and this is given
in Proposition~\ref{shapiro_parametrization} that follows, and both
are presented here for completeness of the exposition.

\begin{prop}\label{lemma:parameter1}
Let $\Sigma\in\Sym_{n,+}$ with $\Sigma>0$ and $\Sigma^{-1}\succe0$. The following hold:
\begin{itemize}
\item[i)] For $D\geq0$ diagonal with $\Sigma-D\geq0$ and singular, there is a probability vector $\rho$ ($\rho$ has entries $\geq 0$ that sum up to $1$) such that $(\Sigma-D)\Sigma^{-1}\rho=0$.
\item[ii)] For any probability vector $\rho$,
\[D=\diag^*\left(\left[\frac{[\rho]_i}{[\Sigma^{-1}\rho]_i}, i=1,\ldots, n\right] \right)
\]
satisfies $\Sigma-D\geq0$ and $\Sigma-D$ is singular.
\end{itemize}
\end{prop}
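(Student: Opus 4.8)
Both parts follow from a Perron--Frobenius analysis of the entrywise nonnegative matrix $D\Sigma^{-1}$ (equivalently of $\Sigma^{-1}D$, which is similar to it). As a preliminary reduction I would apply a sign change to a selection of rows and the corresponding columns: this leaves unchanged the diagonal summand in any Frisch decomposition and turns the hypothesis $\Sigma^{-1}\succe 0$ into genuine entrywise positivity of $\Sigma^{-1}$, which I assume from now on (and which is exactly what makes the formula in (ii) well posed). Two elementary reformulations are used throughout. First, since $\Sigma>0$, the inequality $\Sigma-D\geq 0$ is equivalent, by congruence with $\Sigma^{-1/2}$, to $\lambda_{\max}(\Sigma^{-1/2}D\Sigma^{-1/2})\leq 1$, hence — as $\Sigma^{-1/2}D\Sigma^{-1/2}$, $\Sigma^{-1}D$ and $D\Sigma^{-1}$ are mutually similar — to the statement that the spectral radius of $D\Sigma^{-1}$ is at most $1$; and, given $\Sigma-D\geq 0$, the matrix $\Sigma-D$ is singular exactly when this spectral radius equals $1$. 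Second, $(\Sigma-D)\Sigma^{-1}=I-D\Sigma^{-1}$, so the relation $(\Sigma-D)\Sigma^{-1}\rho=0$ is the eigenvector identity $D\Sigma^{-1}\rho=\rho$.

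For part (i), let $D\geq 0$ be diagonal with $\Sigma-D\geq 0$ and singular. Then $A:=D\Sigma^{-1}$ is entrywise nonnegative (a nonnegative diagonal matrix times an entrywise positive matrix) and, by the first reformulation, has spectral radius equal to $1$. The Perron--Frobenius theorem for nonnegative matrices then yields a nonnegative eigenvector $\rho_0\neq 0$ of $A$ for the eigenvalue $1$. Normalizing $\rho_0$ so that its entries sum to $1$ (possible since $\rho_0\geq 0$, $\rho_0\neq 0$) produces a probability vector $\rho$ with $D\Sigma^{-1}\rho=\rho$, i.e.\ $(\Sigma-D)\Sigma^{-1}\rho=0$.

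For part (ii), let $\rho$ be a probability vector and set $p:=\Sigma^{-1}\rho$. Since $\Sigma^{-1}$ is entrywise positive and $\rho\geq 0$, $\rho\neq 0$, the vector $p$ is \emph{strictly} positive, so the diagonal matrix $D$ with $[D]_{ii}=[\rho]_i/[p]_i$ (that is, the matrix in the statement) is well defined and satisfies $D\geq 0$. Componentwise $[Dp]_i=[\rho]_i=[\Sigma p]_i$, hence $(\Sigma-D)p=0$; since $p\neq 0$ this already shows $\Sigma-D$ is singular. To get $\Sigma-D\geq 0$, rewrite $(\Sigma-D)p=0$ as $\Sigma^{-1}D\,p=p$: this exhibits a \emph{strictly} positive eigenvector of the entrywise nonnegative matrix $\Sigma^{-1}D$ for the eigenvalue $1$. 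By the standard consequence of Perron--Frobenius that a nonnegative matrix admitting a strictly positive eigenvector has that eigenvalue equal to its spectral radius (pair $p$ with a nonnegative Perron eigenvector of $(\Sigma^{-1}D)'$), the spectral radius of $\Sigma^{-1}D$ equals $1$; by the first reformulation, $\Sigma-D\geq 0$.

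The one delicate point — and the place to be careful — is the Perron--Frobenius input in part (ii): a merely nonnegative eigenvector is not enough to identify the spectral radius, so it matters that $p=\Sigma^{-1}\rho$ is \emph{strictly} positive even when $\rho$ lies on the boundary of the simplex, which is precisely where entrywise positivity of $\Sigma^{-1}$ is genuinely used. If one prefers to avoid that consequence, part (ii) can be proved first for $\rho$ with all positive entries — then $\Sigma^{-1}D$ is entrywise positive, hence irreducible, and classical Perron--Frobenius applies directly — and then extended to boundary $\rho$ by continuity, using that $p^{(k)}=\Sigma^{-1}\rho^{(k)}\to p$ stays bounded away from $0$ (so $D^{(k)}\to D$) and that $\{D:\Sigma-D\geq 0\}$ is closed. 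The remaining steps are routine linear algebra.
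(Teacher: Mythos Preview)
Your proof is correct and complete. The paper itself does not give a proof of this proposition --- it simply cites \cite{Kalman1982,KlepperLeamer} --- and your Perron--Frobenius argument is precisely the classical route taken in that literature (and consistent with the paper's own remarks that Reiers{\o}l and Kalman relied on Perron--Frobenius). Your two reformulations (reducing $\Sigma-D\geq 0$ to a spectral-radius bound via congruence, and reading $(\Sigma-D)\Sigma^{-1}\rho=0$ as the eigenvector equation $D\Sigma^{-1}\rho=\rho$) are clean, and you correctly flag the only genuinely delicate step: in part~(ii) one needs the eigenvector $p=\Sigma^{-1}\rho$ to be \emph{strictly} positive to invoke the ``positive eigenvector $\Rightarrow$ spectral radius'' consequence of Perron--Frobenius, and this is exactly where the entrywise positivity of $\Sigma^{-1}$ is used.
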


\begin{proof} See \cite{Kalman1982,KlepperLeamer}.
\end{proof}

Thus, solutions of Frisch's problem under Reiers{\o}l's conditions
are in bijective correspondence with probability vectors. A very
similar result holds true for Shapiro's problem.

\begin{prop}\label{shapiro_parametrization}
Let $\Sigma\in\Sym_{n,+}$ be irreducible and have $\leq 0$ off-diagonal entries. The following hold:
\begin{itemize}
\item[i)]
For $D$ diagonal with $\Sigma-D\geq0$ and singular, there is a strictly positive vector $v$ such that $(\Sigma-D)v=0$.
\item[ii)] For any strictly positive vector $v\in \mR^{n\times 1}$,
\begin{align}\label{eq:Dshapiro}
D=\diag^*\left(\left[\frac{[\Sigma v]_i}{[v]_i}, i=1,\ldots, n\right] \right)
\end{align}
satisfies that $\Sigma-D\geq0$ and $\Sigma-D$ is singular.
\end{itemize}
\end{prop}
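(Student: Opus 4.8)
The plan is to handle both parts through two elementary facts about $\Sigma-D$: its off-diagonal entries are exactly those of $\Sigma$, hence $\le 0$; and a symmetric matrix with nonpositive off-diagonal entries that is positive semidefinite with nontrivial kernel is, up to conjugation by a positive diagonal matrix, a weighted graph Laplacian. Throughout I would use only the sign pattern and irreducibility of $\Sigma$ --- unlike in Proposition~\ref{lemma:parameter1}, here $D$ may be indefinite, which is precisely the reason the kernel ends up being parametrized by strictly positive vectors rather than by probability vectors.

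For part (ii), I would take the given $v>0$ and define $D$ by \eqref{eq:Dshapiro}. By construction $[Dv]_i=[\Sigma v]_i$ for each $i$, so $(\Sigma-D)v=0$ and $\Sigma-D$ is singular. For the semidefiniteness, put $V:=\diag^*(v)$, which is invertible, and set $L:=V(\Sigma-D)V$. Then $L$ is symmetric, has nonpositive off-diagonal entries, and satisfies $L\mathbf 1=V(\Sigma-D)v=0$; consequently $L=\sum_{i<j}(-[L]_{ij})(e_i-e_j)(e_i-e_j)'$ is a nonnegative combination of rank-one positive semidefinite matrices, so $L\ge 0$ and therefore $\Sigma-D=V^{-1}LV^{-1}\ge 0$.

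For part (i), I would write $N:=\Sigma-D\ge 0$, which is assumed singular. Its off-diagonal entries coincide with those of $\Sigma$, so $N\preceqe 0$ and $N$ is irreducible; by Lemma~\ref{lemma:next}, $\nullity(N)\le 1$, and singularity forces $\nullity(N)=1$. Pick $0\neq v\in\cN(N)$. Since $[N]_{ij}\le 0$ for $i\neq j$, replacing each coordinate of $v$ by its absolute value cannot increase the already minimal value of the quadratic form, $|v|'N|v|\le v'Nv=0$, so $N|v|=0$; as $\cN(N)$ is one-dimensional, $v$ and $|v|$ are parallel, so $v$ has all coordinates of one sign and we may take $v\ge 0$. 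Finally $v>0$: if $[v]_i=0$ then $0=[Nv]_i=\sum_{j\neq i}[N]_{ij}[v]_j$ is a sum of nonpositive terms, forcing $[v]_j=0$ whenever $[N]_{ij}\neq 0$, and irreducibility propagates this to $v=0$, a contradiction. (Alternatively, strict positivity and simplicity of the kernel direction follow from Perron--Frobenius applied to $cI-N$ for $c$ large.)

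The only genuine computations are the Laplacian identity and the sign bookkeeping in $|v|'N|v|\le v'Nv$; the step that needs care is making sure the whole argument uses $\Sigma\preceqe 0$ and irreducibility alone, importing no positivity of $D$, which is what separates this Shapiro-side statement from its Frisch-side analogue.
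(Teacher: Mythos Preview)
Your proof is correct. Both parts differ from the paper's argument in presentation, though the underlying ideas are close.

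For part (ii), the paper normalizes $v$ to the all-ones vector by scaling rows and columns of $\Sigma$, observes that $M=\Sigma-D$ then has $[M]_{ii}=\sum_{j\neq i}|[M]_{ij}|$, and invokes the Ger\v{s}gorin circle theorem to conclude $M\ge 0$. Your conjugation $L=V(\Sigma-D)V$ achieves the same normalization, and your Laplacian identity $L=\sum_{i<j}(-[L]_{ij})(e_i-e_j)(e_i-e_j)'$ is a constructive form of the same diagonal-dominance fact. The two are essentially equivalent; your version is perhaps slightly more explicit.

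For part (i), the paper applies Lemma~\ref{lemma:previous} to $(\Sigma-D+\epsilon I)$ to obtain $(\Sigma-D+\epsilon I)^{-1}\succe 0$, and then argues by power iteration that the eigenvector for the largest eigenvalue (equivalently, the null vector of $\Sigma-D$) is entry-wise positive. You instead invoke Lemma~\ref{lemma:next} to pin down $\nullity(N)=1$, and then run a direct Perron--Frobenius-style argument: $|v|'N|v|\le v'Nv=0$ forces $|v|\in\cN(N)$, hence $v$ has constant sign, and irreducibility rules out zero entries. Your route avoids the limiting/power-iteration step and is a bit more self-contained, at the cost of needing the one-dimensionality of the kernel up front; the paper's route gets positivity of $v$ directly but leans on the $\epsilon$-perturbation and Lemma~\ref{lemma:previous}. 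Both are standard within the Perron--Frobenius circle.
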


\begin{proof}
To prove $(i)$, we note that if $(\Sigma-D)v=0$, then $v\succe 0$.
To see this consider $(\Sigma-D+\epsilon I)^{-1}$ for $\epsilon>0$.
From Lemma~\ref{lemma:previous},
\[
(\Sigma-D+\epsilon I)^{-1}\succe 0
\]
and since $v$ is an eigenvector corresponding to its largest eigenvalue, a power iteration argument concludes that $v\succe 0$.

To prove $ii)$, it is easy to verify that the diagonal matrix $D$ in \eqref{eq:Dshapiro} for $v\succe 0$
satisfies $(\Sigma-D)v=0$. We only need to prove that $\Sigma-D\geq0$. Without loss of generality we assume that all the entries of $v$ are equal.
(This can always be done by scaling the entries of $v$ and scaling accordingly rows and columns of $\Sigma$.)
Since $v$ is a null vector of $\Sigma-D$ and since $M:=\Sigma-D$ has $\leq 0$ off-diagonal entries
\[
[M]_{ii}=\sum_{j\neq i}|[M]_{ij}|.
\]
Gersgorin Circle Theorem (e.g., see \cite{Varga2004})
now states that every eigenvalue of $M$ lies within at least one of the closed discs $\left\{{\rm Disk}\left([M]_{ii}, \sum_{j\neq i}|[M]_{ij}| \right), i=1, \ldots, n\right\}$. No disc intersects the negative real line. Therefore $\Sigma-D\geq0$.
\end{proof}

\subsection{Decomposition of complex-valued matrices}

Complex-valued covariance matrices are commonly used in radar and
antenna arrays \cite{vantrees}. The rank of $\Sigma-D$, for
noise covariance $D$ as in the Frisch problem, is an indication of
the number of (dominant) scatterers in the scattering field. If this
is of the same order as the number of array elements (e.g., $n-1$),
any conclusion about their location may be suspect. Thus, it is
natural to seek conditions for $\mr_+(\Sigma)=n-1$ analogous to
those given by Reiers{\o}l, for the case of complex covariances, as
a possible warning. This we do next.

Consider complex-valued observation vectors
$
x_t=y_t+ \ii z_t,~ t=1,\ldots T,
$
where $\ii=\sqrt{-1}$ and $y_t, z_t \in \mR^{n\times 1}$, and
set
\[
X=[x_1,\; \ldots x_T]=Y+ \ii Z
\]
with
$Y=[y_1,\; \ldots y_T]$,
$Z=[z_1,\; \ldots z_T]$.
The (scaled) sample covariance is
\begin{align*}
\Sigma=XX^*
&=\Sigma_{\rm r}+\ii \Sigma_{\rm i}\in\Herm_{n,+},
\end{align*}
where the real part
$\Sigma_{\rm r}:=YY'+ZZ'$ is symmetric,
the imaginary part $\Sigma_{\rm i}:=ZY'-YZ'$ is anti-symmetric,
and ``$*$'' denotes complex-conjugate transpose.
As before, we consider a decomposition
\[
\Sigma=\hat\Sigma+D
\]
with $\hat\Sigma\geq 0$ singular and $D\geq 0$ diagonal.
We refer to  \cite{Anderson1988,Deistler1989} for the special case where
$\mr_+(\Sigma)=1$. In this section we present a sufficient condition for a Reiers{\o}l-case
where $\mr_+(\Sigma)=n-1$.

Before we proceed we note that re-casting the problem in terms
of the real-valued
\[
R:=\left[
     \begin{array}{cc}
       \Sigma_{\rm r} & \Sigma_{\rm i} \\
       \Sigma_{\rm i}^\prime & \Sigma_{\rm r} \\
     \end{array}
   \right]\in\Sym_{2n,+}
\]
does not allow taking advantage of earlier results. The structure of $R$ with antisymmetric off-diagonal blocks implies that if $[a',\;b']'$ is a null vector then so is
$[-b',\;a']'$ (since, accordingly, $a+ \ii b$ and $\ii a - b$ are both null vectors of $\Sigma$). Thus, in general, the nullity of $R$ is not $1$ and the theorem of Reiers{\o}l is not applicable. Further, the corresponding noise covariance is diagonal with repeated blocks.

The following lemmas for the complex case echo Lemma \ref{lemma:previous} and Lemma \ref{lemma:next}.
\begin{lemma}\label{lemma:complexprevious}
Let $M\in\Herm_{n,+}$ be irreducible. If the argument of each non-zero off-diagonal entry of $-M$ is in $\left(-\frac{\pi}{2^n},~ \frac{\pi}{2^n} \right)$, then
each entry of $M^{-1}$ has argument in $\left(-\frac{\pi}{2}+\frac{\pi}{2^n}, ~ \frac{\pi}{2}-\frac{\pi}{2^n}\right)$.
\end{lemma}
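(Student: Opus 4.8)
The plan is to adapt the inductive proof of Lemma~\ref{lemma:previous}, propagating the angular bounds through the block (Schur-complement) formulas for $M^{-1}$. Write $\alpha_k:=\pi/2^{k}$, so that the hypothesis reads $\arg(-[M]_{ij})\in(-\alpha_n,\alpha_n)$ for every nonzero off-diagonal entry. I would induct on $n$; the base case $n=2$ is a direct computation of $M^{-1}$ (its determinant is positive, so the two diagonal entries of $M^{-1}$ are positive reals, and the single off-diagonal entry of $M^{-1}$ has the same argument as $-[M]_{12}$, which lies in $(-\alpha_2,\alpha_2)=(-\pi/2+\alpha_2,\pi/2-\alpha_2)$).

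For the inductive step ($n\ge3$) partition
\[
M=\left[\begin{matrix} A & b\\ b^* & c\end{matrix}\right],\qquad
M^{-1}=\left[\begin{matrix} F & g\\ g^* & h\end{matrix}\right],
\]
with $c>0$ a scalar and $A$ Hermitian $(n-1)\times(n-1)$ with $A>0$, so that $h=(c-b^*A^{-1}b)^{-1}>0$, $g=-hA^{-1}b$, and $F=(A-c^{-1}bb^*)^{-1}>0$. The diagonal entries of $F$ and the entry $h$ are positive reals, hence have argument $0\in(-\pi/2+\alpha_n,\pi/2-\alpha_n)$; since conjugation merely negates arguments and the target interval is symmetric, it remains to bound the arguments of the off-diagonal entries of $F$ and of the entries of $g$. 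Two elementary observations drive the bookkeeping: \emph{(i)} a nonempty finite sum of nonzero complex numbers whose arguments all lie in one open interval of length $<\pi$ is nonzero and has argument in that interval (an open half-cone is convex); and \emph{(ii)} whenever $[b]_i,[b]_j\neq0$ one has $[b]_i\overline{[b]_j}=(-[b]_i)\overline{(-[b]_j)}$, so $\arg([b]_i\overline{[b]_j})=\arg(-[b]_i)-\arg(-[b]_j)\in(-2\alpha_n,2\alpha_n)=(-\alpha_{n-1},\alpha_{n-1})$. Combining \emph{(ii)} with $\arg(-[A]_{ij})\in(-\alpha_n,\alpha_n)\subset(-\alpha_{n-1},\alpha_{n-1})$, the $(i,j)$ off-diagonal entry of $-(A-c^{-1}bb^*)=-A+c^{-1}bb^*$ is a sum of two terms each lying, when nonzero, in the open cone of arguments $(-\alpha_{n-1},\alpha_{n-1})$; by \emph{(i)} it lies in that cone whenever it is nonzero, so in particular the off-diagonal support of $A-c^{-1}bb^*$ contains that of $A$.

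Suppose first that $A$ is irreducible; then so is $A-c^{-1}bb^*$, and the inductive hypothesis applies both to $A$ (whose $-A$ has off-diagonal arguments in $(-\alpha_n,\alpha_n)\subset(-\alpha_{n-1},\alpha_{n-1})$) and to $A-c^{-1}bb^*$. This yields: every entry of $A^{-1}$ is nonzero with argument in $(-\pi/2+\alpha_{n-1},\pi/2-\alpha_{n-1})$, and every off-diagonal entry of $F$ has argument in $(-\pi/2+\alpha_{n-1},\pi/2-\alpha_{n-1})\subset(-\pi/2+\alpha_n,\pi/2-\alpha_n)$. Since $M$ is irreducible, $b\neq0$, so $g_i=h\sum_j[A^{-1}]_{ij}(-[b]_j)$ is a nonempty sum of nonzero terms, each of argument in $(-\pi/2+\alpha_{n-1}-\alpha_n,\;\pi/2-\alpha_{n-1}+\alpha_n)=(-\pi/2+\alpha_n,\pi/2-\alpha_n)$ via the cancellation $\alpha_{n-1}-\alpha_n=\alpha_n$; by \emph{(i)}, $\arg g_i\in(-\pi/2+\alpha_n,\pi/2-\alpha_n)$. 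If instead $A$ is reducible, permute it into irreducible diagonal blocks $A_1,\dots,A_m$ of sizes $n_1,\dots,n_m\le n-1$; irreducibility of $M$ forces $b$ to have a nonzero entry indexed in each block, whence the cross-block entries of $A-c^{-1}bb^*$ connect all blocks and, with the support statement above, $A-c^{-1}bb^*$ is irreducible. Applying the hypothesis blockwise to $A$ (a $1\times1$ block is a positive real; a block of size $n_k\ge2$ has inverse entries of argument in $(-\pi/2+\alpha_{n_k},\pi/2-\alpha_{n_k})\subseteq(-\pi/2+\alpha_{n-1},\pi/2-\alpha_{n-1})$) and to $A-c^{-1}bb^*$, the bounds for $F$ and $g$ follow exactly as in the irreducible case. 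Assembling the bounds on $F$, $g$, $g^*$, and $h$ completes the induction.

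The main obstacle is the arithmetic of the angular bounds: one must verify that passing to the Schur complement $A-c^{-1}bb^*$ at worst doubles the angular spread (observation \emph{(ii)}) while the admissible threshold simultaneously halves from $\alpha_n$ to $\alpha_{n-1}$, so the inductive hypothesis applies with exactly the slack required, and that the cancellation $\alpha_{n-1}-\alpha_n=\alpha_n$ is precisely what keeps $\arg g_i$ inside $(-\pi/2+\alpha_n,\pi/2-\alpha_n)$. As in the real case (Lemma~\ref{lemma:previous}), a secondary point needing care is the reducible-$A$ sub-case, where the non-vanishing of the relevant entries must be derived from irreducibility of $M$ together with observation \emph{(i)}, rather than from entrywise positivity of $A^{-1}$.
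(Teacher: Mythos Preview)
Your proposal is correct and follows essentially the same inductive Schur-complement argument as the paper: partition off a scalar corner, observe that passing to $A-c^{-1}bb^*$ at worst doubles the angular spread (so the $(n-1)$-size hypothesis applies), and combine the blockwise bound on $A^{-1}$ with the bound on $-b$ to control $g$ via the identity $\alpha_{n-1}-\alpha_n=\alpha_n$. Your treatment is in fact more careful than the paper's terse version---you explicitly verify that $A-c^{-1}bb^*$ is irreducible (which the paper tacitly assumes when invoking the hypothesis for $F$) and you separately handle the reducible-$A$ sub-case, mirroring the real-case Lemma~\ref{lemma:previous}, whereas the paper's proof of Lemma~\ref{lemma:complexprevious} glosses over this point.
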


\begin{proof} 
It is easy to verify the lemma for $2\times 2$ matrices. Assume that
the statement holds for sizes up to $n\times n$ and consider an
$(n+1)\times (n+1)$ matrix $M$ that satisfies the conditions of the
lemma. Partition
\[
M=\left[
    \begin{array}{cc}
      A & b \\
      b^* & c \\
    \end{array}
  \right]
\]
with $A$ is of size $n\times n$, and conformably,
\[
M^{-1}=\left[
    \begin{array}{cc}
      F & g \\
      g^* & h \\
    \end{array}
  \right].
\]
By assumption non-zero entries of $-A$ and $-b$ have their argument in $\left(-\frac{\pi}{2^{n+1}}, ~\frac{\pi}{2^{n+1}}\right)$.
Then, by bounding the possible contribution of the respective terms, it follows that for the argument of each of the entries of $-A+bc^{-1}b^*$ is in
$\left(-\frac{\pi}{2^n}, ~\frac{\pi}{2^n}\right)$. Then, the argument of each entry of $F=(A-bc^{-1}b^*)^{-1}$ is in
$\left(-\frac{\pi}{2}+\frac{\pi}{2^n}, ~ \frac{\pi}{2}-\frac{\pi}{2^n}\right)$; this follows by assumption since $F$ is $n\times n$.
Clearly, $\left(-\frac{\pi}{2}+\frac{\pi}{2^n}, ~ \frac{\pi}{2}-\frac{\pi}{2^n}\right) \subset \left(-\frac{\pi}{2}+\frac{\pi}{2^{n+1}}, ~\frac{\pi}{2}-\frac{\pi}{2^{n+1}}\right)$. Regarding $g$, by bounding the possible contribution of respective terms, we similarly conclude that
the argument of each of its non-zero entries is in $\left(-\frac{\pi}{2}+\frac{\pi}{2^{n+1}}, ~ \frac{\pi}{2}-\frac{\pi}{2^{n+1}}\right)$.
\end{proof}

\begin{lemma}\label{lemma:complexnext}
Let $M\in\Herm_{n,+}$ be irreducible. If the argument of each non-zero off-diagonal entry of $-M$ is in $\left(-\frac{\pi}{2^n},~\frac{\pi}{2^n}\right)$,
then $\rank(M)\geq n-1$.
\end{lemma}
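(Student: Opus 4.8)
The plan is to follow the proof of Lemma~\ref{lemma:next} step by step, replacing the sign hypothesis ``$M\preceqe 0$'' by the present hypothesis on arguments and using Lemma~\ref{lemma:complexprevious} in place of Lemma~\ref{lemma:previous}. If $M$ is nonsingular there is nothing to prove, so assume $r:=\rank(M)\le n-2$ and aim for a contradiction with irreducibility. After permuting rows and columns, partition
\[
M=\left[\begin{matrix}A & B\\ B^{*} & C\end{matrix}\right],
\]
with $A$ nonsingular of maximal size, which equals $\rank(M)=r$ since $M\ge 0$; then $A$ is an $r\times r$ principal submatrix of a positive semidefinite matrix and is nonsingular, so $A>0$, and the Schur complement of $A$ vanishes, giving $C=B^{*}A^{-1}B$. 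Here $C$ has size $n-r\ge 2$, and every nonzero off-diagonal entry of $-A$ and of $-C$, as well as every nonzero entry of $-B$, has argument in $\left(-\tfrac{\pi}{2^{n}},\tfrac{\pi}{2^{n}}\right)$, being a nonzero off-diagonal entry of $-M$. I would then bound the arguments of the entries of $A^{-1}$: if $A$ is irreducible, Lemma~\ref{lemma:complexprevious} applies (its hypothesis holds a fortiori since $r\le n$), and every entry of $A^{-1}$ has argument in $\left(-\tfrac{\pi}{2}+\tfrac{\pi}{2^{r}},\tfrac{\pi}{2}-\tfrac{\pi}{2^{r}}\right)$; if $A$ is reducible, apply Lemma~\ref{lemma:complexprevious} to each irreducible diagonal block $A_{s}$ (of size $r_{s}\le r$), note $\left(-\tfrac{\pi}{2}+\tfrac{\pi}{2^{r_{s}}},\tfrac{\pi}{2}-\tfrac{\pi}{2^{r_{s}}}\right)\subseteq\left(-\tfrac{\pi}{2}+\tfrac{\pi}{2^{r}},\tfrac{\pi}{2}-\tfrac{\pi}{2^{r}}\right)$, and recall that entries of $A^{-1}$ joining two different blocks vanish. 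In all cases every nonzero entry of $A^{-1}$ has argument in $\left(-\tfrac{\pi}{2}+\tfrac{\pi}{2^{r}},\tfrac{\pi}{2}-\tfrac{\pi}{2^{r}}\right)$, with the diagonal entries real and positive.

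The heart of the proof is the following estimate. Fix $k\ne\ell$ and expand $C_{k\ell}=(B^{*}A^{-1}B)_{k\ell}=\sum_{i,j}\overline{B_{ik}}\,(A^{-1})_{ij}\,B_{j\ell}$. Writing each nonzero $B_{ik}$ as $-|B_{ik}|e^{\ii\theta_{ik}}$ with $|\theta_{ik}|<\tfrac{\pi}{2^{n}}$, every nonzero summand has argument $(\theta_{j\ell}-\theta_{ik})+\arg\!\big((A^{-1})_{ij}\big)$, which lies in $\left(-\tfrac{\pi}{2}+\delta,\tfrac{\pi}{2}-\delta\right)$ with $\delta:=\tfrac{\pi}{2^{r}}-\tfrac{\pi}{2^{n-1}}$; crucially $\delta>0$ precisely because $r\le n-2$. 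Thus all nonzero summands lie in a single open cone of half-angle $\tfrac{\pi}{2}-\delta<\tfrac{\pi}{2}$, and by the elementary fact that a sum of complex numbers all lying in one open cone of half-angle less than $\tfrac{\pi}{2}$ is $0$ only if every summand is $0$, and otherwise again lies in that cone, $C_{k\ell}$ is either $0$ or has argument in $\left(-\tfrac{\pi}{2},\tfrac{\pi}{2}\right)$. On the other hand $C_{k\ell}$ is an off-diagonal entry of $M$, so if $C_{k\ell}\ne0$ then $-C_{k\ell}$ has argument in $\left(-\tfrac{\pi}{2^{n}},\tfrac{\pi}{2^{n}}\right)$ and hence $C_{k\ell}$ has argument in $\left(\pi-\tfrac{\pi}{2^{n}},\pi+\tfrac{\pi}{2^{n}}\right)$, disjoint from $\left(-\tfrac{\pi}{2},\tfrac{\pi}{2}\right)$. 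Therefore $C_{k\ell}=0$ for all $k\ne\ell$, so $C$ is diagonal and moreover $\overline{B_{ik}}\,(A^{-1})_{ij}\,B_{j\ell}=0$ for all $i,j$ and all $k\ne\ell$.

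It remains to reach a contradiction with irreducibility. Since $(A^{-1})_{ij}\ne0$ whenever $i$ and $j$ lie in a common irreducible block of $A$, the vanishing of $\overline{B_{ik}}\,(A^{-1})_{ij}\,B_{j\ell}$ for $k\ne\ell$ forces every index of such a block to be coupled, through $B$, to at most one and the same $C$-index. Combined with $C$ being diagonal (so distinct $C$-indices are mutually uncoupled), this shows that the $C$-indices $r+1$ and $r+2$ lie in distinct connected components of $M$, i.e.\ $M$ is reducible, contradicting the hypothesis; hence $\rank(M)\ge n-1$. I expect the main obstacle to be the argument bookkeeping in the central estimate: combining the contributions of the entries of $B$ and of $A^{-1}$ into an interval strictly inside $\left(-\tfrac{\pi}{2},\tfrac{\pi}{2}\right)$, which works exactly in the regime $r\le n-2$, and doing so uniformly over the irreducible and reducible cases for $A$ (the latter needing the blockwise application of Lemma~\ref{lemma:complexprevious}).
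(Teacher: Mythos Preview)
Your proof is correct and follows essentially the same route as the paper: partition $M$ with a maximal nonsingular principal block $A$, use Lemma~\ref{lemma:complexprevious} to bound the arguments of the entries of $A^{-1}$, and compare the argument of each off-diagonal entry $C_{k\ell}$ computed as $(B^{*}A^{-1}B)_{k\ell}$ with the hypothesis on $-C_{k\ell}$ to force $C$ diagonal. In fact you are more careful than the paper on two points it glosses over: you treat the case where $A$ is reducible by applying Lemma~\ref{lemma:complexprevious} blockwise, and you justify the final implication ``$C$ diagonal $\Rightarrow$ $M$ reducible'' by first observing that every individual summand $\overline{B_{ik}}(A^{-1})_{ij}B_{j\ell}$ must vanish (the open-cone argument) and then running the connectivity argument through the irreducible blocks of $A$.
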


\begin{proof}
First rearrange rows and columns of $M$, and partition as
\[
M=\left[
  \begin{array}{cc}
    A & B \\
    B^* & C \\
  \end{array}
\right]
\]
so that $A$ is nonsingular and of size equal to the rank of $M$, which we denote by $r$. Then
\begin{equation}\label{eq:CBAB}
C=B^*A^{-1}B
\end{equation}
and has size equal to the nullity of $M$. We now compare the
argument of the off-diagonal entries of $C$ and $B^*A^{-1}B$, and
show they cannot be equal unless $C$ is a scalar. Since the
off-diagonal entries of $-A$ have their argument in
$\left(-\frac{\pi}{2^n}, ~\frac{\pi}{2^n}\right)\subset
\left(-\frac{\pi}{2^r}, ~\frac{\pi}{2^r}\right)$, the off-diagonal
entries of $A^{-1}$ have their argument in
$\left(-\frac{\pi}{2}+\frac{\pi}{2^r}, ~
\frac{\pi}{2}-\frac{\pi}{2^r}\right)$ from Lemma
\ref{lemma:complexprevious}. Now, the $(k,\ell)$ entry of
$B^*A^{-1}B$ is
\begin{align*}
[B^*A^{-1}B]_{k\ell}=\sum_{i,j}[B^*]_{ki}[A^{-1}]_{ij}[B]_{j \ell}
\end{align*}
and the phase of each summand is
\[
\arg([B^*]_{ki}[A^{-1}]_{ij} [B]_{j \ell}) \in\left(-\frac{\pi}{2}+\frac{\pi}{2^r}-\frac{\pi}{2^{n-1}},~ \frac{\pi}{2}-\frac{\pi}{2^r}+\frac{\pi}{2^{n-1}}\right).
\]
Thus, the non-zero off-diagonal entries of $B^*A^{-1}B$ have positive real part while
 \[
 \arg(-[C]_{k\ell})\in \left(-\frac{\pi}{2^n},~\frac{\pi}{2^n}\right) .
 \]
Hence, either the off-diagonal entries of $B^*A^{-1}B$ and $C$ are
zero, in which case these are diagonal matrices and $M$ must be
reducible, or $B^*A^{-1}B$ and $C$ are both scalars. This concludes
the proof.
\end{proof}

\begin{thm}\label{prop:complexReiersol}
Let $\Sigma\in\Herm_{n,+}$ be irreducible.
If the argument of each non-zero off-diagonal entry of $-\Sigma$ is in
$\left(-\frac{\pi}{2^n},~ \frac{\pi}{2^n}\right)$, then
$\mr(\Sigma)=n-1$.
\end{thm}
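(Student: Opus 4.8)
The plan is to prove Theorem~\ref{prop:complexReiersol} by reducing it to the (analog of the) necessity part of Shapiro's theorem, exactly as Corollary~\ref{prop:weaker} follows from Lemma~\ref{lemma:next}, but now with the complex analog Lemma~\ref{lemma:complexnext} playing the role of Lemma~\ref{lemma:next}. Concretely, let $\Sigma\in\Herm_{n,+}$ be irreducible and suppose the argument of each nonzero off-diagonal entry of $-\Sigma$ lies in $\left(-\tfrac{\pi}{2^n},\tfrac{\pi}{2^n}\right)$. Take any admissible decomposition $\Sigma=\hat\Sigma+D$ with $D$ diagonal and $\hat\Sigma\ge 0$. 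Since $D$ only affects the diagonal, the off-diagonal entries of $-\hat\Sigma$ coincide with those of $-\Sigma$, so $\hat\Sigma$ satisfies the same phase hypothesis; moreover $\hat\Sigma$ inherits irreducibility from $\Sigma$ (the off-diagonal zero-pattern is unchanged). Applying Lemma~\ref{lemma:complexnext} to $M=\hat\Sigma$ gives $\rank(\hat\Sigma)\ge n-1$, hence $\nullity(\hat\Sigma)\le 1$, i.e.\ $\rank(\hat\Sigma)\in\{n-1,n\}$.

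The remaining point is to rule out the trivial decomposition $\rank(\hat\Sigma)=n$ forced by some degenerate choice, and more importantly to exhibit a genuine decomposition achieving $\rank(\hat\Sigma)=n-1$, so that $\mr(\Sigma)=n-1$ rather than merely $\mr(\Sigma)\ge n-1$. For the lower bound $\mr(\Sigma)\ge n-1$ the argument above suffices: every admissible $\hat\Sigma$ has rank at least $n-1$. For the matching upper bound $\mr(\Sigma)\le n-1$, I would mimic the scalar-diagonal construction used in the $3\times3$ base case of Theorem~\ref{thm:dualreiersol}: choose a strictly positive real vector $v$ (or more generally adjust the diagonal) so that $\hat\Sigma:=\Sigma-D$ is singular while remaining positive semidefinite. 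Since the phase hypothesis is a (complex, scaled) diagonal-dominance-type condition, one expects that subtracting an appropriate positive diagonal matrix $D$ lowers the smallest eigenvalue of $\Sigma$ to exactly $0$ while keeping $\Sigma-D\ge 0$; concretely, by a Gershgorin-type / Perron argument (as in Proposition~\ref{shapiro_parametrization}), one can pick $D$ so that a strictly positive eigenvector is annihilated. This produces $\hat\Sigma\ge 0$, singular, with the correct off-diagonal phases, hence of rank exactly $n-1$ by the lower bound, certifying $\mr(\Sigma)=n-1$.

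The main obstacle I anticipate is the upper-bound construction in the complex setting. In the real Shapiro case one uses $\Sigma\preceqe 0$ together with the Perron--Frobenius/Gershgorin machinery of Proposition~\ref{shapiro_parametrization} to produce a singular $\Sigma-D\ge 0$ along a positive null vector. In the complex case the off-diagonal entries are not real and nonpositive but merely have small argument, so the Gershgorin discs are centered on the positive real axis with radii $\sum_{j\ne i}|[\hat\Sigma]_{ij}|$; one needs to verify that a diagonal perturbation can be chosen so that (a) positive semidefiniteness is preserved and (b) exactly one eigenvalue reaches $0$. The cleanest route is probably: shift by a scalar, $\lambda I-\Sigma\ge 0$ for $\lambda$ large, observe that the off-diagonal phase condition on $-\Sigma$ translates into one on $\lambda I-\Sigma$ of the same type (the off-diagonal entries are negated relative to $\Sigma$, up to the overall sign flip already allowed), and then invoke a complex analog of Theorem~\ref{thm:dualreiersol}'s sufficiency direction to get $\mrdual=n-1$, transferring back via the identity $\Sigma-D=(\lambda I-D)-(\lambda I-\Sigma)$ as in \eqref{eq:mrmrdual}. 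Verifying that the phase bounds survive this shift-and-dualize step, and tracking the factor-of-two degradation $\tfrac{\pi}{2^n}$ through it, is where the real care is needed; everything else is a direct transcription of the real-case proofs with Lemma~\ref{lemma:complexnext} in place of Lemma~\ref{lemma:next}.
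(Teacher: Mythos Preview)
Your lower-bound argument is exactly the paper's proof: for any diagonal $D$ with $\hat\Sigma=\Sigma-D\ge 0$, the off-diagonal entries (and hence irreducibility and the phase hypothesis) are unchanged, so Lemma~\ref{lemma:complexnext} gives $\rank(\hat\Sigma)\ge n-1$. That is the entire content of the paper's proof.

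Where you go astray is in treating the upper bound $\mr(\Sigma)\le n-1$ as ``the main obstacle.'' It is not an obstacle at all: for \emph{any} $\Sigma\in\Herm_{n,+}$ the bound $\mr(\Sigma)\le n-1$ is trivial, since $D=\lambda_{\min}(\Sigma)\,I$ is diagonal (and real, and $\ge 0$ for that matter) and makes $\Sigma-D\ge 0$ singular. The elaborate program you sketch---Gershgorin discs, a complex analog of Proposition~\ref{shapiro_parametrization}, dualizing via $\lambda I-\Sigma$ and tracking how the phase bound $\pi/2^n$ survives the shift---is entirely unnecessary, and the paper does not mention the upper bound at all precisely because it is automatic. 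So your proposal is correct in substance and identical to the paper's for the nontrivial direction; the remainder should simply be deleted and replaced by the one-line observation above.
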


\begin{proof}
The matrix $\Sigma-D$ is irreducible since $D$ is diagonal.
If $\Sigma-D\geq0$ and singular, and since the argument of each non-zero off-diagonal entry of $-(\Sigma-D)$ is in
$\left(-\frac{\pi}{2^n},~ \frac{\pi}{2^n}\right)$, Lemma \ref{lemma:complexnext} applies and gives that $\rank(\Sigma-D)=n-1$.
\end{proof}

Clearly, since $\mr_+(\Sigma)\geq\mr(\Sigma)$, under the condition of Theorem \ref{prop:complexReiersol}, $\mr_+(\Sigma)=n-1$.
It is also clear that for $S\in\Herm_{n,+}$ irreducible with all non-zero off-diagonal entries having argument in $\left(-\frac{\pi}{2^n},~ \frac{\pi}{2^n}\right)$, we also conclude that
$\mrdual(S)=n-1$.

\section{Trace minimization heuristics}\label{sec:MinTrace}

The rank of a matrix is a non-convex function of its elements and the problem to find the matrix of minimal rank within a given set  is a difficult one, in general.
Therefore, certain heuristics have been developed over the years to obtain approximate solutions.
In particular, in the context of factor analysis, trace minimization has been pursued as a suitable heuristic \cite{Ledermann1940,Shapiro,Shapiro1982b} thereby relaxing the Frisch problem into
\begin{align}\nonumber
&\min_{D: \Sigma\geq D\geq0} \trace(\Sigma-D),
\end{align}
for a diagonal matrix $D$; with a relaxation of $D\geq 0$ corresponding to Shapiro's problem. The theoretical basis for using the trace and, more generally, the nuclear norm for non-symmetric matrices, as a surrogate for the rank was provided by
Fazel {\em etal.} \cite{Fazel2001} who proved that these constitute convex envelops of the rank function on  bounded sets of matrices.

The relation between minimum trace factor analysis and minimum rank factor analysis goes back to Ledermann in \cite{Ledermann1939} (see \cite{Della1982} and \cite{Saunderson2012}). Herein we only refer to two propositions which characterize minimizers for the two problems, Frisch's and Shapiro's, respectively.

\begin{subequations}
\begin{prop}[\cite{Della1982}]\label{prop:mintrace}
Let $\Sigma=\hat\Sigma_1+D_1>0$ for a diagonal $D_1\geq0$. Then,
\begin{align}\label{trmc}
 &(\hat\Sigma_1,D_1)=\arg\min\{ \trace(\hat\Sigma) \mid \Sigma=\hat\Sigma+D>0,\;\hat\Sigma\geq 0,\;\mbox{diagonal }D\geq 0\}\\
&  \Leftrightarrow~ \exists~ \Lambda_1 \geq0 ~:~ \hat\Sigma_1 \Lambda_1=0 \text{~and~} \left\{
                         \begin{array}{ll}
                           [\Lambda_1]_{ii}=1, & \text{~if~} [D_1]_{ii}>0, \\
                           \left[ \Lambda_1\right]_{ii}\geq1, &\text{~if~}[D_1]_{ii}=0.\nonumber
                         \end{array}
                       \right.
\end{align}
\end{prop}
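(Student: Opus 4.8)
The statement is a standard KKT-type characterization for the convex program
$$\min\{\trace(\hat\Sigma)\mid \Sigma=\hat\Sigma+D>0,\ \hat\Sigma\geq 0,\ D\geq 0\text{ diagonal}\},$$
so the plan is to write down the Lagrangian, derive the optimality conditions, and identify the dual variable associated with the constraint $\hat\Sigma\geq 0$ with the matrix $\Lambda_1$ in the statement. Concretely, I would eliminate $\hat\Sigma=\Sigma-D$ and view the problem as minimizing $\trace(\Sigma-D)=\trace(\Sigma)-\sum_i [D]_{ii}$ over diagonal $D$ subject to $D\geq 0$ and $\Sigma-D\geq 0$; equivalently, maximizing $\sum_i[D]_{ii}$. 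Introduce a symmetric positive semidefinite multiplier $\Lambda$ for the constraint $\Sigma-D\geq 0$ and a nonnegative multiplier $\mu_i\geq 0$ for each constraint $[D]_{ii}\geq 0$. Stationarity in $[D]_{ii}$ gives $1=[\Lambda]_{ii}-\mu_i$, i.e. $[\Lambda]_{ii}=1+\mu_i\geq 1$, with $\mu_i=0$ whenever $[D]_{ii}>0$ by complementary slackness; this yields exactly the stated alternative $[\Lambda_1]_{ii}=1$ if $[D_1]_{ii}>0$ and $[\Lambda_1]_{ii}\geq 1$ if $[D_1]_{ii}=0$. Complementary slackness for the PSD constraint gives $\Lambda(\Sigma-D_1)=0$, i.e. $\Lambda\hat\Sigma_1=0$, which is the remaining condition. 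Finally I would note that only the diagonal entries of $\Lambda$ are pinned down by stationarity (since $D$ is diagonal, the off-diagonal stationarity conditions are vacuous), which is why the theorem only constrains $[\Lambda_1]_{ii}$.

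For the forward implication ($\Rightarrow$), the cleanest route is to invoke strong duality: the problem is convex with a strictly feasible point (any $D$ with $0<[D]_{ii}$ small makes $\Sigma-D>0$ and $D>0$ by the hypothesis $\Sigma>0$), so Slater's condition holds and the KKT conditions are necessary at the optimum $(\hat\Sigma_1,D_1)$; extracting them produces the desired $\Lambda_1$. For the reverse implication ($\Leftarrow$), I would argue directly without duality: given any competing decomposition $\Sigma=\hat\Sigma+D$ with $\hat\Sigma\geq 0$ and $D\geq 0$ diagonal, compute
$$\trace(\hat\Sigma)-\trace(\hat\Sigma_1)=\trace(D_1)-\trace(D)=\sum_i\bigl([D_1]_{ii}-[D]_{ii}\bigr).$$
Using $[\Lambda_1]_{ii}\geq 1$ with equality when $[D_1]_{ii}>0$, together with $[D]_{ii}\geq 0$, one gets $\sum_i([D_1]_{ii}-[D]_{ii})\geq \sum_i[\Lambda_1]_{ii}\bigl([D_1]_{ii}-[D]_{ii}\bigr)=\trace\bigl(\Lambda_1(D_1-D)\bigr)$. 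Now substitute $D_1-D=\hat\Sigma-\hat\Sigma_1$ and use $\Lambda_1\hat\Sigma_1=0$ to reduce this to $\trace(\Lambda_1\hat\Sigma)\geq 0$, which is nonnegative since both $\Lambda_1\geq 0$ and $\hat\Sigma\geq 0$. Hence $\trace(\hat\Sigma)\geq\trace(\hat\Sigma_1)$, establishing optimality of $(\hat\Sigma_1,D_1)$.

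The one point requiring a little care — and the place I expect the main friction — is the step $\sum_i([D_1]_{ii}-[D]_{ii})\geq\sum_i[\Lambda_1]_{ii}([D_1]_{ii}-[D]_{ii})$: it is not term-by-term true in general, but it is valid after grouping, because when $[D_1]_{ii}>0$ one has $[\Lambda_1]_{ii}=1$ so those terms match exactly, and when $[D_1]_{ii}=0$ the difference $[D_1]_{ii}-[D]_{ii}=-[D]_{ii}\leq 0$ is multiplied by $[\Lambda_1]_{ii}\geq 1$, which can only decrease it; so the inequality goes the right way. Writing this dichotomy out carefully is the substance of the argument. The subtlety about $[D_1]_{ii}>0$ versus $=0$ is exactly what the two-case statement of the proposition encodes, so the bookkeeping is unavoidable but entirely elementary. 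I would also remark at the end that an identical argument, with the constraint $D\geq 0$ dropped (equivalently $\mu_i\equiv 0$, forcing $[\Lambda_1]_{ii}=1$ for all $i$), gives the companion characterization for Shapiro's problem.
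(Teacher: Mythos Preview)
Your argument is correct. The paper itself does not supply a proof of this proposition; it is stated with a citation to \cite{Della1982} and immediately followed by discussion, so there is nothing to compare against beyond noting that your KKT derivation is exactly the standard route one would expect that reference to take. Your treatment of the case split on $[D_1]_{ii}>0$ versus $[D_1]_{ii}=0$ in the sufficiency direction is handled carefully, and your closing remark that dropping the constraint $D\geq 0$ (hence $\mu_i\equiv 0$) recovers the companion characterization in Proposition~\ref{prop:MTFAShapiro} is also on point.
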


\begin{prop}[\cite{Saunderson2012}]\label{prop:MTFAShapiro}
Let $\Sigma=\hat\Sigma_2+D_2>0$ for a diagonal $D_2$.
Then,\begin{align}\label{trmc2}
 &(\hat\Sigma_2,D_2)=\arg\min\{ \trace(\hat\Sigma) \mid \Sigma=\hat\Sigma+D>0,\;\hat\Sigma\geq 0,\;\mbox{diagonal }D\}\\
& \Leftrightarrow~\exists~ \Lambda_2 \geq0 ~:~ \hat\Sigma_2 \Lambda_2=0 \text{~and~} [\Lambda_2]_{ii}=1~ \forall i.\nonumber
\end{align}
\end{prop}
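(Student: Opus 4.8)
The plan is to read the optimization in \eqref{trmc2} as a semidefinite program and to obtain the stated characterization from Lagrangian duality together with complementary slackness. Since $\trace(\hat\Sigma)=\trace(\Sigma)-\trace(D)$ with $\Sigma$ fixed, minimizing $\trace(\hat\Sigma)$ over the feasible set is equivalent to solving
\[
\max\{\trace(D)\mid D=\diag^*(d),\ d\in\mR^n,\ \Sigma-D\geq 0\},
\]
a linear objective over the spectrahedron $\{d\mid \Sigma-\diag^*(d)\geq 0\}$. First I would attach a multiplier $\Lambda\geq 0$ to the constraint $\Sigma-D\geq 0$ and form
\[
L(d,\Lambda)=\trace(D)+\trace(\Lambda(\Sigma-D))=\trace(\Lambda\Sigma)+\sum_{i=1}^n[d]_i(1-[\Lambda]_{ii}),
\]
and observe that $\sup_{d\in\mR^n}L(d,\Lambda)$ is finite, and then equals $\trace(\Lambda\Sigma)$, exactly when $[\Lambda]_{ii}=1$ for all $i$. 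Hence the dual program is
\[
\min\{\trace(\Lambda\Sigma)\mid \Lambda\geq 0,\ [\Lambda]_{ii}=1,\ i=1,\ldots,n\},
\]
whose feasible set (the correlation matrices) is compact, so a dual optimum is attained.

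Next I would record the basic duality inequality. For any primal-feasible $D$, equivalently $\hat\Sigma:=\Sigma-D\geq 0$, and any dual-feasible $\Lambda$,
\[
\trace(\Lambda\Sigma)-\trace(D)=\trace(\Lambda(\Sigma-D))=\trace(\Lambda\hat\Sigma)\geq 0,
\]
using $[\Lambda]_{ii}=1$ in the first equality and $\Lambda,\hat\Sigma\geq 0$ in the inequality. This is weak duality, and it shows that a feasible pair is simultaneously primal- and dual-optimal if and only if $\trace(\Lambda\hat\Sigma)=0$, which for positive semidefinite matrices is equivalent to $\hat\Sigma\Lambda=0$. Strong duality, ensuring that the two optimal values agree, follows from Slater's condition, since $D=0$ makes $\Sigma-D=\Sigma>0$ strictly feasible.

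It then remains to assemble the two directions. If $(\hat\Sigma_2,D_2)$ is optimal, strong duality together with the attainment of the dual optimum produces some $\Lambda_2\geq 0$ with unit diagonal and zero duality gap, whence $\hat\Sigma_2\Lambda_2=0$ --- this is the certificate asserted in \eqref{trmc2}. Conversely, given such a $\Lambda_2$, the pair $(\hat\Sigma_2,\Lambda_2)$ is primal--dual feasible with $\trace(\Lambda_2\hat\Sigma_2)=0$, so by the weak-duality inequality above $\hat\Sigma_2$ already attains the minimum; this direction needs only weak duality. The delicate half is the forward implication: it relies on strong duality and on the existence (attainment) of an optimal dual multiplier, whereas the converse is essentially automatic. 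I would also note, for context, that this is the sign-unconstrained analogue of Proposition~\ref{prop:mintrace}; adding the constraint $D\geq 0$ introduces nonnegative multipliers $\mu_i$ on the entries $[D]_{ii}\geq 0$, turning the stationarity condition into $[\Lambda]_{ii}=1+\mu_i\geq 1$ and, via complementary slackness $\mu_i[D_1]_{ii}=0$, exactly reproducing the relaxed condition $[\Lambda_1]_{ii}\geq 1$ on the coordinates with $[D_1]_{ii}=0$ in \eqref{trmc}.
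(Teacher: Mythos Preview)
Your argument is correct: the problem is a semidefinite program, $D=0$ gives a strictly feasible point so Slater's condition holds, the dual feasible set (correlation matrices) is compact so a dual optimum exists, and the complementary-slackness identity $\trace(\Lambda\hat\Sigma)=0$ is equivalent to $\hat\Sigma\Lambda=0$ for positive semidefinite factors. The only implicit assumption worth making explicit is that $\hat\Sigma_2\geq 0$ (i.e., $(\hat\Sigma_2,D_2)$ is primal-feasible) in the converse direction; otherwise the weak-duality step does not apply.

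As for comparison: the paper does not supply its own proof of this proposition. It is quoted from \cite{Saunderson2012} (and its companion Proposition~\ref{prop:mintrace} from \cite{Della1982}) and used without argument; the surrounding discussion merely interprets the condition as the existence of a correlation matrix in the null space of $\hat\Sigma_2$ and notes that $\Lambda_2$ arises as a Lagrange multiplier. Your SDP-duality derivation is exactly the standard route to these optimality conditions and matches the treatment in the cited references.
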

\end{subequations}

\noindent Evidently, when the solutions to these two problems differ
and $D_1\neq D_2$, then there exists $k\in\left\{1, \ldots, n\right\}$ such that
\[
[D_2]_{kk}<0 \text{~and~} [D_1]_{kk}=0.
\]
Further, the essence of Proposition \ref{prop:MTFAShapiro} is that
a singular $\hat\Sigma$ originates from such a minimization problem if and only if there is a correlation matrix in its null space. The matrices $\Lambda_1$ and $\Lambda_2$ appear as Lagrange multipliers in the respective problems.

\newcommand{\cA}{{\mathcal A}}
{Factor analysis is closely related to {\em low-rank matrix completion} as well as to {\em sparse and low-rank decomposition} problems. Typically, low-rank matrix completion asks for a matrix $X$ which satisfies a linear constraint $\cA(X)=b$ and has low/minimal rank ($\cA(\cdot)$ denotes a linear map $\cA\,:\,{\mR}^{n\times n}\rightarrow {\mR}^p$). Thus, factor analysis corresponds to the special case where $\cA(\cdot)$ maps $X$ onto its off-diagonal entries. In a recent work by Recht {\em etal.}~\cite{Recht2010guaranteed}, the nuclear norm of $X$ was considered as a convex relaxation of $\rank(X)$ for such problems and a sufficient condition for exact recovery was provided. However, this sufficient condition amounts to the requirement that the null space of $\cA(\cdot)$ contains no matrix of low-rank. Therefore, since in factor analysis diagonal matrices are in fact contained in the null space of $\cA(\cdot)$ and include matrices of low-rank, the condition in \cite{Recht2010guaranteed} does not apply directly. Other works on low-rank matrix completion (see, e.g., \cite{Recht2010guaranteed,Candes2009exact}) mainly focus on assessing the probability of exact recovery and on constructing efficient computational algorithms for {\em large-scale} low-rank completion problems \cite{Keshavan2010matrix,Keshavan2010noisy}.
On the other hand, since diagonal matrices are sparse (most of their entries are zero), the work on matrix decomposition into sparse and low-rank components by Chandrasekaran {\em etal.} \cite{Chandrasekaran2011rank} is very pertinent. In this, the $\ell_1$ and nuclear norms were used as surrogates for sparsity and rank, respectively, and a sufficient condition for exact recovery was provided which captures a certain ``rank-sparsity incoherence''; an analogous but stronger sufficient ``incoherence'' condition which applies to problem
\eqref{trmc2} is given in \cite{Saunderson2012}.}

\subsection{Weighted minimum trace factor analysis}

Both $\mr(\Sigma)$ and $\mr_+(\Sigma)$ in \eqref{eq:mc} and
\eqref{eq:mc2}, respectively, remain invariant under scaling of rows
and the corresponding columns of $\Sigma$ by the same coefficients.
On the other hand, the minimizers in \eqref{trmc} and \eqref{trmc2}
and their respective ranks are not invariant under scaling. This
fact motivates weighted-trace minimization,
\begin{align}\label{eq:Dw}
\min\left\{ \trace(W\hat\Sigma) \mid \Sigma=\hat\Sigma+D,~\hat\Sigma\geq 0,~\mbox{diagonal }D\geq 0 \right\},
\end{align}
given $\Sigma>0$ and a diagonal weight $W>0$.
As before the characterization of minimizers relates to a suitable condition for the corresponding Lagrange multipliers:

\begin{prop}[{\rm \cite{Shapiro1982b}}]\label{prop:WMTFAShapiro}
Let $\Sigma=\hat\Sigma_0+D_0>0$ for a diagonal matrix $D_0\geq0$ and consider a diagonal $W>0$. Then,
\begin{align}\label{trmc3}
 &(\hat\Sigma_0,D_0)=\arg\min\{ \trace(W\hat\Sigma) \mid \Sigma=\hat\Sigma+D>0,\;\hat\Sigma\geq
 0,\;\mbox{diagonal }D\geq 0\}\\
&  \Leftrightarrow~ \exists~ \Lambda_0 \geq0 ~:~
\hat\Sigma \Lambda_0=0 \text{~and~} \left\{
                         \begin{array}{ll}
                           [\Lambda_0]_{ii}=[W]_{ii}, & \text{~if~} [D_0]_{ii}>0, \\
                           \left[ \Lambda_0\right]_{ii}\geq [W]_{ii}, &\text{~if~}[D_0]_{ii}=0.\nonumber
                         \end{array}
                       \right.\nonumber
\end{align}
\end{prop}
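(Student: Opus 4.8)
The plan is to reduce the weighted problem \eqref{trmc3} to the unweighted version in Proposition~\ref{prop:mintrace} by a change of variables, so that no new convex-analysis argument is needed. Since $W>0$ is diagonal, write $W = W^{1/2}W^{1/2}$ with $W^{1/2}$ diagonal and positive, and set $\Sigma_W := W^{1/2}\Sigma W^{1/2}$, $\hat\Sigma_W := W^{1/2}\hat\Sigma W^{1/2}$, $D_W := W^{1/2} D W^{1/2}$. Conjugation by a positive diagonal matrix preserves the cone of positive semidefinite matrices, preserves diagonality, and preserves the nonnegativity of diagonal entries (since $[D_W]_{ii} = [W]_{ii}[D]_{ii}$). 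Hence the feasible set $\{(\hat\Sigma,D)\}$ of \eqref{trmc3} is in bijective, cone-preserving correspondence with the feasible set of the unweighted problem for $\Sigma_W$. The key observation is that the objectives match: $\trace(W\hat\Sigma) = \trace(W^{1/2}\hat\Sigma W^{1/2}) = \trace(\hat\Sigma_W)$. Therefore $(\hat\Sigma_0,D_0)$ solves \eqref{trmc3} for $\Sigma$ and $W$ if and only if $(\hat\Sigma_{0,W}, D_{0,W})$ solves the unweighted trace-minimization problem for $\Sigma_W$.

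Next I would invoke Proposition~\ref{prop:mintrace} applied to $\Sigma_W$: this solvability is equivalent to the existence of $\Lambda\geq 0$ with $\hat\Sigma_{0,W}\Lambda = 0$, $[\Lambda]_{ii}=1$ whenever $[D_{0,W}]_{ii}>0$, and $[\Lambda]_{ii}\geq 1$ whenever $[D_{0,W}]_{ii}=0$. Now translate this multiplier back. Because $[D_{0,W}]_{ii}>0 \iff [D_0]_{ii}>0$ (as $[W]_{ii}>0$), the case distinction is unchanged. Setting $\Lambda_0 := W^{1/2}\Lambda W^{1/2}$, we get $\Lambda_0\geq 0$, and
\[
\hat\Sigma_0\Lambda_0 = W^{-1/2}\hat\Sigma_{0,W}W^{-1/2}\cdot W^{1/2}\Lambda W^{1/2} = W^{-1/2}(\hat\Sigma_{0,W}\Lambda)W^{1/2} = 0.
\]
The diagonal entries transform as $[\Lambda_0]_{ii} = [W]_{ii}[\Lambda]_{ii}$, so the normalization $[\Lambda]_{ii}=1$ becomes $[\Lambda_0]_{ii}=[W]_{ii}$ and $[\Lambda]_{ii}\geq 1$ becomes $[\Lambda_0]_{ii}\geq [W]_{ii}$, matching exactly the right-hand side of \eqref{trmc3}. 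Conversely, given such a $\Lambda_0$ one recovers $\Lambda = W^{-1/2}\Lambda_0 W^{-1/2}$ satisfying the hypotheses of Proposition~\ref{prop:mintrace} for $\Sigma_W$, closing the equivalence.

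The argument is essentially bookkeeping once the reduction is set up, so there is no single hard step; the only point requiring care is verifying that every structural constraint of the problem (positive semidefiniteness, diagonality of $D$, sign of each $[D]_{ii}$, and the complementarity pattern in the multiplier condition) is genuinely invariant under conjugation by $W^{1/2}$, and that the correspondence $\Lambda \leftrightarrow \Lambda_0$ is the right one to make the diagonal normalizations line up. One should also note that the strict feasibility assumption $\Sigma=\hat\Sigma_0+D_0>0$ passes to $\Sigma_W>0$, so the hypothesis of Proposition~\ref{prop:mintrace} is met. If one prefers to avoid relying on Proposition~\ref{prop:mintrace} as a black box, the same result follows directly by writing the KKT/Lagrangian conditions for the convex program \eqref{trmc3}: the Lagrangian is $\trace(W\hat\Sigma) - \trace(\Lambda\hat\Sigma) + (\text{diagonal multipliers for } D\geq 0) + (\text{multiplier for the linear constraint})$, and stationarity in $\hat\Sigma$ forces the linear-constraint multiplier to have off-diagonal part equal to that of $W-\Lambda$ and diagonal part zero, while stationarity in $D$ together with complementary slackness yields exactly the stated conditions on $[\Lambda_0]_{ii}$ versus $[W]_{ii}$; but the reduction route above is shorter and reuses work already in the paper.
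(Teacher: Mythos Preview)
The paper does not supply its own proof of Proposition~\ref{prop:WMTFAShapiro}; it is simply quoted from \cite{Shapiro1982b} as a known result, so there is nothing in the paper to compare your argument against line by line. That said, your reduction is correct and is the natural way to obtain the weighted statement from Proposition~\ref{prop:mintrace}: conjugation by the positive diagonal $W^{1/2}$ is a bijection of the feasible set onto that of the unweighted problem for $\Sigma_W$, the objectives coincide via $\trace(W\hat\Sigma)=\trace(\hat\Sigma_W)$, and the multiplier correspondence $\Lambda_0=W^{1/2}\Lambda W^{1/2}$ carries the normalization $[\Lambda]_{ii}=1$ (resp.\ $\geq 1$) to $[\Lambda_0]_{ii}=[W]_{ii}$ (resp.\ $\geq [W]_{ii}$) because $W^{1/2}$ is diagonal. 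The only detail worth making explicit is that $[\Lambda_0]_{ii}=[W]_{ii}[\Lambda]_{ii}$ relies on $W^{1/2}$ being diagonal (not merely on $\Lambda$ being diagonal, which it is not), and you do use this implicitly. Your alternative KKT sketch is also valid and is presumably closer in spirit to the original proof in \cite{Shapiro1982b}, but the reduction route is cleaner here since Proposition~\ref{prop:mintrace} is already available.
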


A corresponding sufficient and necessary condition for $(\hat\Sigma, D)$ to be a minimizer in Shapiro's problem is that there exists a Grammian in the null space of $\hat\Sigma$ whose diagonal entries are equal to the diagonal entries of $W$.

Minimum-rank solutions may be recovered as solutions to \eqref{trmc3} using suitable choices of weight.
However, these choices depend on $\Sigma$ and are not known in advance --this motivates a selection of certain canonical $\Sigma$-dependent
weight as well as iteratively improving the choice of weight. One should note that since $D$ is diagonal, letting $W$ be a not-necessarily
diagonal matrix does not change the problem --only the diagonal entries of $W$ determine the minimizer.

We first consider taking $W=\Sigma^{-1}$. A rationale for this
choice is that the minimal value in \eqref{eq:Dw} bounds
$\mr_+(\Sigma)$ from below, since for any decomposition
$\Sigma=\hat\Sigma+D$,
\begin{align}\nonumber
\rank(\hat \Sigma) =&~ \trace (\hat\Sigma^\sharp \hat\Sigma)\\\nonumber
\geq&~ \trace((\hat\Sigma+D)^{-1} \hat\Sigma)\\
=&~ \trace(\Sigma^{-1} \hat\Sigma)\label{eq:rankjustify}
\end{align}
where $^\sharp$ denotes the Moore-Penrose pseudo inverse. Continuing
with this line of analysis
\begin{align}
\rank(\hat \Sigma) =&~ \trace (\hat\Sigma^\sharp \hat\Sigma)\nonumber\\
\geq&~ \trace((\hat\Sigma+\epsilon I)^{-1} \hat\Sigma)\label{eq:ranktrace}
\end{align}
for any $\epsilon>0$, suggests the iterative re-weighting process
\begin{align}\label{minimizer_a}
D_{(k+1)}:=&~\arg\min_{D}\trace\left((\Sigma-D_{(k)} +\epsilon I)^{-1}(\Sigma-D)\right)
\end{align}
for $k=1,\,2,\,\ldots$ and $D_{(0)}:=0$.
In fact, as pointed out in \cite{Fazel2003}, \eqref{minimizer_a} corresponds to minimizing
$\log\det(\Sigma-D+\epsilon I)$
by local linearization.

Next we provide a sufficient condition for $\hat\Sigma$ to be such a
stationary point \eqref{minimizer_a}, i.e., 
for $\hat\Sigma$ to satisfy
\begin{align}\label{stationary_a}
\arg\min_{D}\trace\left((\hat\Sigma+\epsilon I)^{-1}(\hat\Sigma-D)\right)=0.
\end{align}
The notation $\circ$ used below denotes the
element-wise product between vectors or matrices which is also known
as \emph{Schur product} \cite{Horn1990matrix} and, likewise, for
vectors $a, b \in \mR^{n\times 1}$, $a\circ b\in \mR^{n\times 1}$
with $[a\circ b]_i=[a]_i[b]_i$.

\begin{prop}\label{prop:stationary_a}
Let $\hat\Sigma\in\Sym_{n,+}$ and let the columns of $U$ form a basis of $\cR(\hat\Sigma)$. If
\begin{align}\label{eq:stationary_a}
\cR(U\circ U) \subset \cR(\Pi_{\cN(\hat\Sigma)}\circ\Pi_{\cN(\hat\Sigma)} ),
\end{align}
then $\hat\Sigma$ satisfies \eqref{stationary_a} for all $\epsilon\in(0,\; \epsilon_1)$ and some $\epsilon_1>0$.
\end{prop}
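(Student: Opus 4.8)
The plan is to read \eqref{stationary_a} as a conic linear program in the diagonal matrix $D$ and to produce a dual certificate under the hypothesis \eqref{eq:stationary_a}. Put $w_i:=[(\hat\Sigma+\epsilon I)^{-1}]_{ii}>0$; the objective $\trace((\hat\Sigma+\epsilon I)^{-1}(\hat\Sigma-D))=\trace((\hat\Sigma+\epsilon I)^{-1}\hat\Sigma)-\sum_i w_i[D]_{ii}$ is linear in $D$, and $D=0$ is feasible (the feasible $D$ being the diagonal ones with $\hat\Sigma-D\geq0$, together with $D\geq0$ in the Frisch variant). By weak Lagrangian duality---the mechanism underlying Propositions~\ref{prop:mintrace}--\ref{prop:WMTFAShapiro}, now applied with a possibly singular ``$\Sigma$'' $=\hat\Sigma$---it suffices to exhibit $\Lambda\geq0$ with $\hat\Sigma\Lambda=0$ and $\diag(\Lambda)=\diag((\hat\Sigma+\epsilon I)^{-1})$ (this is exactly the ``Grammian in the null space with prescribed diagonal'' of the remark after Proposition~\ref{prop:WMTFAShapiro}): for such $\Lambda$, the quantity $\trace((\hat\Sigma+\epsilon I)^{-1}(\hat\Sigma-D))-\langle\Lambda,\hat\Sigma-D\rangle$ is independent of $D$ and equals $\trace((\hat\Sigma+\epsilon I)^{-1}\hat\Sigma)$, while $\langle\Lambda,\hat\Sigma-D\rangle\geq0$ on the feasible set, so $D=0$ attains the minimum. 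Everything thus reduces to writing $w:=\diag((\hat\Sigma+\epsilon I)^{-1})$ as the diagonal of a positive semidefinite matrix whose range lies in $\cN(\hat\Sigma)$.

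I would then split $w$ using a spectral decomposition $\hat\Sigma=V\Delta V'$ with $\Delta=\diag^*(\Delta_1,\dots,\Delta_r)>0$, $r=\rank(\hat\Sigma)$, and $V$ with orthonormal columns $v^{(1)},\dots,v^{(r)}$ spanning $\cR(\hat\Sigma)$. Then $\Pi_{\cN(\hat\Sigma)}=I-VV'$, $(\hat\Sigma+\epsilon I)^{-1}=\tfrac1\epsilon\Pi_{\cN(\hat\Sigma)}+V(\Delta+\epsilon I)^{-1}V'$, and taking diagonals
\[
w=\tfrac1\epsilon\,\diag(\Pi_{\cN(\hat\Sigma)})+r(\epsilon),\qquad r(\epsilon):=\sum_{k=1}^r(\Delta_k+\epsilon)^{-1}\bigl(v^{(k)}\circ v^{(k)}\bigr),
\]
where $r(\epsilon)$ is a nonnegative combination of columns of $V\circ V$ (hence $r(\epsilon)\in\cR(V\circ V)$) and stays bounded as $\epsilon\downarrow0$. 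On the dual side, writing $\Pi_{\cN(\hat\Sigma)}=PP'$ with $P$ orthonormal, every $\Lambda\geq0$ with $\hat\Sigma\Lambda=0$ has the form $PQP'$ with $Q\geq0$, so $[\Lambda]_{ii}$ is a positive semidefinite quadratic form at the $i$-th row of $P$; hence $\mathcal C:=\{\diag(\Lambda):\Lambda\geq0,\ \hat\Sigma\Lambda=0\}$ is a convex cone, and a short computation (dualizing, or expanding $\Pi_{\cN(\hat\Sigma)}\circ\Pi_{\cN(\hat\Sigma)}$ as a sum of outer products of Hadamard products of columns of $P$) identifies $\operatorname{span}(\mathcal C)=\cR\bigl(\Pi_{\cN(\hat\Sigma)}\circ\Pi_{\cN(\hat\Sigma)}\bigr)$. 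Taking $Q=I$ shows $\diag(\Pi_{\cN(\hat\Sigma)})\in\mathcal C$, in fact in its relative interior, being the image of the interior point $I>0$ of the positive semidefinite cone.

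Now the hypothesis \eqref{eq:stationary_a}, read with $U=V$, does two things: it places $r(\epsilon)\in\cR(V\circ V)\subset\cR(\Pi_{\cN(\hat\Sigma)}\circ\Pi_{\cN(\hat\Sigma)})=\operatorname{span}(\mathcal C)$, and it forces $[\Pi_{\cN(\hat\Sigma)}]_{ii}>0$ for each $i$ (otherwise the $i$-th row of $\Pi_{\cN(\hat\Sigma)}\circ\Pi_{\cN(\hat\Sigma)}$ vanishes while $e_i\in\cR(\hat\Sigma)$ makes some column of $V\circ V$ have positive $i$-th coordinate, contradicting the inclusion), so $c_0:=\diag(\Pi_{\cN(\hat\Sigma)})$ is coordinatewise positive. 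I would finish with the elementary cone fact: if $c_0\in\operatorname{relint}(\mathcal C)$ and $b(\epsilon)\in\operatorname{span}(\mathcal C)$ stays bounded as $\epsilon\downarrow0$, then $c_0+\epsilon\,b(\epsilon)\to c_0$, which lies in the relatively open set $\operatorname{relint}(\mathcal C)$, so $c_0+\epsilon\,b(\epsilon)\in\mathcal C$ for $\epsilon$ small, whence $\tfrac1\epsilon c_0+b(\epsilon)\in\mathcal C$ since $\mathcal C$ is a cone. With $b(\epsilon)=r(\epsilon)$ this gives, for all $\epsilon\in(0,\epsilon_1)$, a $\Lambda\geq0$ with $\hat\Sigma\Lambda=0$ and $\diag(\Lambda)=w$, i.e.\ the certificate of the first paragraph, so $D=0$ is a minimizer in \eqref{stationary_a}.

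The main obstacle is this last step: $r(\epsilon)$ is only known to lie in the \emph{span} of $\mathcal C$, not in $\mathcal C$, and what saves the argument is that the dominant term $\tfrac1\epsilon\diag(\Pi_{\cN(\hat\Sigma)})$ is a relative-interior point of $\mathcal C$ scaled up, so it can absorb a bounded perturbation within $\operatorname{span}(\mathcal C)$---but only once $\epsilon$ is small, which is the source of the threshold $\epsilon_1$; both the relative-interiority and the coordinatewise positivity of $\diag(\Pi_{\cN(\hat\Sigma)})$ hinge on \eqref{eq:stationary_a}. A minor point is the basis-dependence of $\cR(U\circ U)$: the computation wants $U$ to be an orthonormal eigenbasis of $\hat\Sigma$ (equivalently one may phrase the hypothesis with the basis-free, and stronger, $(I-\Pi_{\cN(\hat\Sigma)})\circ(I-\Pi_{\cN(\hat\Sigma)})$ in place of $U\circ U$), and this reading should be adopted or verified.
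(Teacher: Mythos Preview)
Your argument is correct and follows essentially the same route as the paper: both split $(\hat\Sigma+\epsilon I)^{-1}=\tfrac{1}{\epsilon}\Pi_{\cN(\hat\Sigma)}+(\hat\Sigma+\epsilon\Pi_{\cR(\hat\Sigma)})^\sharp$, invoke the weighted-trace optimality condition (Proposition~\ref{prop:WMTFAShapiro}) to reduce to finding $\Lambda=PQP'\geq0$ with prescribed diagonal, use Lemma~\ref{lemma:trace} to identify the achievable diagonals with $\cR(\Pi_{\cN(\hat\Sigma)}\circ\Pi_{\cN(\hat\Sigma)})$, and then argue that the correction term is absorbable for small $\epsilon$. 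Your phrasing via the relative interior of the cone $\mathcal C$ is a slightly cleaner packaging of the paper's ``$M=I+\Delta$ with $\Delta$ small by continuity'' step, and your remark on the basis-dependence of $\cR(U\circ U)$ (the proof implicitly takes $U$ to be an orthonormal eigenbasis) is a fair caveat that the paper leaves implicit.
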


We first need the following result which generalizes  \cite[Theorem 3.1]{Shapiro1985}.
\begin{lemma}\label{lemma:trace}
For $A\in \mR^{n\times p}$ and $B\in \mR^{n\times q}$ having columns $a_1, \ldots, a_p$ and $b_1, \ldots, b_q$, respectively, we let
\begin{align*}
C&=[a_1\circ b_1, a_1\circ b_2, \ldots,a_2\circ b_1\dots a_p \circ b_q]\in \mR^{n\times pq},\\
\phi &: ~\mR^n \hspace*{.4cm}\rightarrow \mR^n \hspace*{.57cm} d \mapsto \diag(AA'\diag^*(d)BB'), \mbox{ and}\\
\psi &: ~\mR^{p\times q} \rightarrow \mR^n \hspace*{.5cm}  \Delta \mapsto \diag(A\Delta B').
\end{align*}
Then $\cR(\phi)=\cR(\psi)=\cR((AA')\circ (BB'))=\cR(C)$.
\end{lemma}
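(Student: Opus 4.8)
The plan is to establish the chain of equalities by first proving the two ``easy'' identities $\cR(\psi)=\cR((AA')\circ(BB'))=\cR(C)$, and then showing $\cR(\phi)=\cR(\psi)$ via a duality (transpose) argument. For the identity $\cR(\psi)=\cR(C)$: note that $\diag(A\Delta B')$ is linear in $\Delta$, and writing $\Delta=\sum_{k,\ell}\Delta_{k\ell}e_ke_\ell'$ gives $\diag(A\Delta B')=\sum_{k,\ell}\Delta_{k\ell}\,\diag(a_k b_\ell')=\sum_{k,\ell}\Delta_{k\ell}\,(a_k\circ b_\ell)$, since the $i$-th diagonal entry of $a_kb_\ell'$ is $[a_k]_i[b_\ell]_i$. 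Hence the image of $\psi$ is exactly the span of the columns $\{a_k\circ b_\ell\}$, which is $\cR(C)$. For $\cR(C)=\cR((AA')\circ(BB'))$: expand $(AA')\circ(BB')$ entrywise, $[(AA')\circ(BB')]_{ij}=\big(\sum_k [a_k]_i[a_k]_j\big)\big(\sum_\ell [b_\ell]_i[b_\ell]_j\big)=\sum_{k,\ell}[a_k\circ b_\ell]_i[a_k\circ b_\ell]_j$, so $(AA')\circ(BB')=CC'$, whence $\cR((AA')\circ(BB'))=\cR(CC')=\cR(C)$.

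It remains to show $\cR(\phi)=\cR(\psi)$. The key observation is that $\phi$ and $\psi$ are adjoint-related through the Schur-product structure. Concretely, for $d\in\mR^n$ one computes $[\phi(d)]_i=[AA'\diag^*(d)BB']_{ii}=\sum_j [AA']_{ij}[d]_j[BB']_{ji}=\sum_j[(AA')\circ(BB')]_{ij}[d]_j$, so $\phi(d)=\big((AA')\circ(BB')\big)d=CC'd$. Therefore $\cR(\phi)=\cR(CC'(\cdot))=\cR(CC')=\cR(C)$, which closes the loop since we already identified $\cR(\psi)=\cR(C)$. (The map $\psi$ takes values in $\mR^n$ and its image is $\cR(C)$; the map $\phi$ is literally multiplication by the symmetric matrix $CC'$, whose range is also $\cR(C)$.)

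The main obstacle — really the only place requiring care — is bookkeeping the index conventions so that the factorization $(AA')\circ(BB')=CC'$ comes out cleanly; everything then reduces to the elementary fact that $\cR(NN')=\cR(N)$ for any real matrix $N$. Once the entrywise identity $[a_kb_\ell']_i\cdot(\text{stuff})$ is written correctly, the three ranges $\cR(\phi)$, $\cR(\psi)$, $\cR((AA')\circ(BB'))$, $\cR(C)$ all collapse to $\cR(C)$ with no further work. I expect no genuine difficulty beyond organizing these computations.
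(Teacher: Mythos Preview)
Your proposal is correct and matches the paper's proof essentially step for step: the paper also shows $\phi(d)=((AA')\circ(BB'))d$, expands $\diag(A\Delta B')=\sum_{i,j}[\Delta]_{ij}\,a_i\circ b_j$ to get $\cR(\psi)=\cR(C)$, and verifies $(AA')\circ(BB')=CC'$ to close the loop. The only cosmetic difference is the order in which the three identities are established; your aside about an ``adjoint/duality'' argument is unnecessary since, as you note yourself, $\phi$ is literally multiplication by $CC'$.
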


\begin{proof}
Since $\diag(AA'\diag^*(d)BB')=((AA')\circ (BB'))d$, it follows that
\[\cR(\phi)=\cR((AA')\circ (BB').\]
Moreover,
$\diag(A\Delta B')= \sum_{i=1}^p\sum_{j=1}^q a_i\circ b_j [\Delta]_{ij}$, and then
$\cR(\psi)=\cR(C)$.
We only need to show that $\cR(C)=\cR((AA')\circ (BB'))$. This follows from
\begin{align*}
(AA')\circ (BB')       =&~\sum_{i=1}^p\sum_{j=1}^q (a_ia_i')\circ (b_jb_j')\\
         =&~\sum_{i=1}^p\sum_{j=1}^q (a_i\circ b_j) (a_i\circ b_j)'
         =CC'.
\end{align*}
Thus $\cR(C)=\cR((AA')\circ (BB'))$.
\end{proof}

\begin{proof} {\em [Proof of Proposition \ref{prop:stationary_a}:]}
Assume that $\hat\Sigma$ satisfies \eqref{stationary_a}.
If $\rank(\hat\Sigma)=r$, let $\hat\Sigma=USU'$ be the eigendecomposition of $\hat\Sigma$ with $S=\diag^*(s)$ with $s\in \mR^r$. Let the columns of $V$ be an orthogonal basis of the null space of $\hat\Sigma$, i.e., $\Pi_{\cN(\hat\Sigma)}=VV'$.
Then
\begin{align*}
(\hat\Sigma+\epsilon I)^{-1}=(\hat\Sigma+\epsilon \Pi_{\cR(\hat\Sigma)}+\epsilon  \Pi_{\cN(\hat\Sigma)})^{-1} =(\hat\Sigma+\epsilon \Pi_{\cR(\hat\Sigma)})^\sharp+\frac{1}{\epsilon} \Pi_{\cN(\hat\Sigma)},
\end{align*}
and
\begin{align*}
\arg\min_{D:\hat\Sigma\geq D} \trace\left((\hat\Sigma+\epsilon I)^{-1}(\hat\Sigma-D)\right)& =\\
&\hspace*{-1.5cm}\arg\min_{D:\hat\Sigma\geq D} \trace\left(\left(\epsilon( \hat\Sigma+\epsilon \Pi_{\cR(\hat\Sigma)})^\sharp+\Pi_{\cN(\hat\Sigma)}\right)(\hat\Sigma-D)\right).
\end{align*}
From Proposition \ref{prop:WMTFAShapiro}, \eqref{stationary_a} holds if there is $M\in \Sym_{r,+}$ such that
\begin{align}\label{stationaryaDiag}
\diag(VMV')=\diag\left( \epsilon( \hat\Sigma+\epsilon \Pi_{\cR(\hat\Sigma)})^\sharp+\Pi_{\cN(\hat\Sigma)}\right).
\end{align}
Obviously, if $\epsilon=0$
 $M=I$ satisfies the above equation. We consider the matrix $M$ of the form $M=I+\Delta$. For \eqref{stationaryaDiag} holds, we need $\diag((\hat\Sigma+\epsilon \Pi_{\cR})^\sharp)$ to be in the range of $\psi$ for
\[
\psi: \Sym_n \rightarrow \mR^n \hspace*{.57cm} \Delta \mapsto \diag(V\Delta V').
\]
From Lemma \ref{lemma:trace} that $\cR(\psi)=\cR(\Pi_{\cN(\hat\Sigma)}\circ\Pi_{\cN(\hat\Sigma)})$. On the other hand, since
\[
\epsilon(\hat\Sigma+\epsilon \Pi_{\cR(\hat\Sigma)})^\sharp=U\diag\left(\left[\frac{\epsilon}{[s]_1+\epsilon}, \ldots, \frac{\epsilon}{[s]_r+\epsilon} \right]\right)U',
\]
then $\diag(\epsilon(\hat\Sigma+\epsilon
\Pi_{\cR(\hat\Sigma)})^\sharp)\in \cR(U\circ U)$. So if
\eqref{eq:stationary_a} holds, there is always a $\Delta$ such that
$M=I+\Delta$ satisfies \eqref{stationaryaDiag}. Morover, it is also
required that $I+\Delta\geq0$. Since the map from $\epsilon$ to
$\Delta$ is continuous, for small enough $\epsilon$, i.e. in a
interval $(0, \epsilon_1)$ the condition $I+\Delta$ can always be
satisfied.
\end{proof}

We note that \eqref{eq:stationary_a} is a sufficient condition for $\hat\Sigma$ to be a stationary point of \eqref{stationary_a} in both Frisch's and Shapiro's settings.

\section{Certificates of minimum rank}\label{sec:CertifMinRank}

We are interested in obtaining bounds on the minimal rank for the
Frisch problem so as to ensure optimality when candidate solutions
are obtained by the earlier optimization approach in \eqref{minimizer_a}.

The following two bounds were proposed in \cite{Woodgate1}, and
follow from Theorem~\ref{thm:Reiersol}. However, both of these
bounds require exhaustive search which may be prohibitively
expensive when $n$ is large.
\begin{subequations}
\begin{cor}\label{cor1}
Let $\Sigma\in\Sym_{n,+}$ and $\Sigma>0.$ If there is an $s_1\times s_1$
principle minor of $\Sigma$ whose inverse is positive, then
 \begin{align}
\mr_+(\Sigma)&\geq s_1-1.
 \end{align}
 If there is an $s_2\times s_2$ principle
minor of $\Sigma^{-1}$ which is element-wise positive, then
 \begin{align}
\mr_+(\Sigma)&\geq s_2-1.
 \end{align}
\end{cor}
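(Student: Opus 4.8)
The plan is to derive both inequalities from Reiers\o l's theorem -- in its primal form (Theorem~\ref{thm:Reiersol}) for the first bound and its dual form (Theorem~\ref{thm:dualreiersol}) for the second -- together with the elementary fact that $\mr_+(\cdot)$ and $\mrdual(\cdot)$ are monotone under passage to principal submatrices. Indeed, if $\Sigma=\hat\Sigma+D$ is feasible for the Frisch problem and $J\subseteq\{1,\dots,n\}$, then restricting every matrix to the rows and columns indexed by $J$ gives a feasible decomposition $\Sigma_J=\hat\Sigma_J+D_J$: a principal submatrix of a positive semidefinite matrix is positive semidefinite, and of a nonnegative diagonal matrix is a nonnegative diagonal matrix, while $\rank(\hat\Sigma_J)\le\rank(\hat\Sigma)$. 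Minimizing over decompositions yields $\mr_+(\Sigma_J)\le\mr_+(\Sigma)$, and the identical argument applied to a decomposition $S=E-\hat S$ underlying $\mrdual$ gives $\mrdual(S_K)\le\mrdual(S)$ for any index set $K$.

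For the first bound, let $J$ index the given $s_1\times s_1$ principal submatrix. Since $\Sigma>0$, the submatrix $\Sigma_J$ is positive definite, and by hypothesis $(\Sigma_J)^{-1}\succe 0$. Theorem~\ref{thm:Reiersol} applied to $\Sigma_J$ then gives $\mr_+(\Sigma_J)=s_1-1$, so with the monotonicity above we conclude $\mr_+(\Sigma)\ge\mr_+(\Sigma_J)=s_1-1$.

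The second bound is less direct, and this is where the main obstacle lies: there is no monotonicity tying $\mr_+(\Sigma)$ to submatrices of $\Sigma^{-1}$, since \eqref{eq:inequality} runs the wrong way ($\mr_+(\Sigma)\le\mrdual(\Sigma^{-1})$) and $\mrdual(\cdot)$ is not lower semicontinuous, so one cannot simply pass to $\Sigma^{-1}$ itself. I would instead invoke the perturbation inequality \eqref{eq:inequality2}, $\mrdual((\Sigma+\epsilon I)^{-1})\le\mr_+(\Sigma)$ for every $\epsilon>0$. Let $K$ index the given $s_2\times s_2$ principal submatrix of $\Sigma^{-1}$, which is element-wise positive and, being also a principal submatrix of a positive definite matrix, is positive definite, irreducible and $\succeqe 0$. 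Write $T_\epsilon$ for the principal submatrix of $(\Sigma+\epsilon I)^{-1}$ on the index set $K$. Since $(\Sigma+\epsilon I)^{-1}\to\Sigma^{-1}$ entrywise as $\epsilon\downarrow 0$ and element-wise positivity of a fixed-size matrix is an open condition, there is $\epsilon_0>0$ such that for all $\epsilon\in(0,\epsilon_0)$ the matrix $T_\epsilon$ is still positive definite, irreducible and element-wise positive. Theorem~\ref{thm:dualreiersol} applied to this $s_2\times s_2$ matrix gives $\mrdual(T_\epsilon)=s_2-1$, and hence, combining monotonicity of $\mrdual(\cdot)$ with \eqref{eq:inequality2},
\[
\mr_+(\Sigma)\ \ge\ \mrdual\bigl((\Sigma+\epsilon I)^{-1}\bigr)\ \ge\ \mrdual(T_\epsilon)\ =\ s_2-1 .
\]
The only delicate point is precisely this detour through the dual quantity $\mrdual$ and the $\epsilon$-regularization, forced by the failure of lower semicontinuity of $\mrdual(\cdot)$; everything else is routine bookkeeping about principal submatrices.
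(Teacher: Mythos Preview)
Your proof is correct. The paper itself does not give a detailed argument for Corollary~\ref{cor1}; it merely states that both bounds ``were proposed in \cite{Woodgate1}, and follow from Theorem~\ref{thm:Reiersol}.'' Your treatment of the first bound is exactly the intended one: monotonicity of $\mr_+$ under principal submatrices plus Reiers\o l's theorem on $\Sigma_J$.

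For the second bound, you correctly observe that the obvious route---applying Reiers\o l's theorem to the Schur complement $((\Sigma^{-1})_K)^{-1}$---does not connect back to $\mr_+(\Sigma)$, because \eqref{eq:inequality} points the wrong way and restriction to $K$ does not commute with inversion in any useful manner. Your detour through $\mrdual$, the $\epsilon$-regularization \eqref{eq:inequality2}, and Theorem~\ref{thm:dualreiersol} is precisely the machinery the paper assembles in Section~\ref{sec:proofReiersol} to prove Reiers\o l's theorem itself, so while more explicit than what the paper writes, it is entirely in the same spirit. An equivalent formulation strips away the abstraction $\mrdual$: given any Frisch decomposition $\Sigma=\hat\Sigma+D$, replace $D$ by $D+\epsilon I$ and apply Lemma~\ref{lemma:decompositions} to get $(\Sigma+\epsilon I)^{-1}=(D+\epsilon I)^{-1}-GG'$ with $\rank(GG')=\rank(\hat\Sigma)$; the $K$-block of $GG'$ then has strictly negative off-diagonal entries for small $\epsilon$ (they equal the negatives of the off-diagonal entries of $((\Sigma+\epsilon I)^{-1})_K$), so Lemma~\ref{lemma:next} forces $\rank((GG')_K)\ge s_2-1$, whence $\rank(\hat\Sigma)\ge s_2-1$. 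This is your argument with the intermediate naming removed.
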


Next we discuss three other bounds that are computationally
more tractable --the first two were proposed by Guttman
\cite{Guttman1954}.
Guttman's bounds are based on a conservative assessment for the admissible range of each of the diagonal entries of $D=\Sigma-\hat\Sigma$.

\begin{prop}\label{prop:Guttman}
 Let $\Sigma\in \Sym_{n,+}$ and let
 \begin{align*}
 D_1&:=\diag^*(\diag(\Sigma))\\
 D_2&:=\left(\diag^*(\diag(\Sigma^{-1}))\right)^{-1}.
 \end{align*}
 Then the following hold,
\begin{align}
&\mr_+(\Sigma)\geq n_+(\Sigma-D_1) \label{Guttman:bound1}\\
&\mr_+(\Sigma)\geq n_+(\Sigma-D_2). \label{Guttman:bound2}\\
\nonumber
\end{align}
Further, $n_+(\Sigma-D_1)\leq n_+(\Sigma-D_2)$.
\end{prop}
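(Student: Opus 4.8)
The plan is to establish each inequality separately by exhibiting $D_1$ and $D_2$ as \emph{feasible} (though generally not optimal) choices of the diagonal summand, and then bound $\mr_+(\Sigma)$ from below by the number of positive eigenvalues of the corresponding residual $\Sigma - D$. The underlying principle is simple: if $\Sigma = \hat\Sigma + D$ is any valid Frisch decomposition with $\hat\Sigma \geq 0$, $D \geq 0$ diagonal, then $\rank(\hat\Sigma) = n_+(\hat\Sigma)$ (since $\hat\Sigma$ is positive semidefinite), and moreover $\hat\Sigma = \Sigma - D$. The key observation is that among all admissible diagonal $D$, the choice $D_1 = \diag^*(\diag(\Sigma))$ is in a precise sense the \emph{largest} one can take on the diagonal while keeping $\hat\Sigma$ with zero diagonal nonpositive --- wait, more carefully: $D_1$ is the largest diagonal matrix such that $\Sigma - D_1$ has zero diagonal. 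By a monotonicity-of-eigenvalue-count argument (Weyl/interlacing), decreasing $D$ from $D_1$ can only increase $n_+(\Sigma - D)$; but the \emph{true} optimal $D^\star$ satisfies $D^\star \leq D_1$ entrywise whenever $\Sigma - D^\star \geq 0$ forces the diagonal entries of $D^\star$ to not exceed those of $\Sigma$. Hence $n_+(\Sigma - D^\star) \geq n_+(\Sigma - D_1)$, and since $\mr_+(\Sigma) = \rank(\Sigma - D^\star) = n_+(\Sigma - D^\star)$, inequality \eqref{Guttman:bound1} follows.

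For \eqref{Guttman:bound2} I would use the dual picture via Lemma \ref{lemma:decompositions}. If $\Sigma - D^\star = FF'$ with $D^\star > 0$ (the generic case, handled by perturbation as in \eqref{eq:inequality2}), then $\Sigma^{-1} = (D^\star)^{-1} - GG'$ with $\rank(GG') = \rank(FF')$. Now $(D^\star)^{-1}$ is diagonal and $\Sigma^{-1} + GG' = (D^\star)^{-1} \geq 0$, so $GG' = (D^\star)^{-1} - \Sigma^{-1} \geq -\Sigma^{-1}$, i.e.\ $(D^\star)^{-1} \geq \Sigma^{-1}$ on the diagonal forces $[(D^\star)^{-1}]_{ii} \geq [\Sigma^{-1}]_{ii}$, hence $[D^\star]_{ii} \leq 1/[\Sigma^{-1}]_{ii} = [D_2]_{ii}$. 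So again $D^\star \leq D_2$ entrywise, and the same monotonicity argument gives $n_+(\Sigma - D^\star) \geq n_+(\Sigma - D_2)$, which is \eqref{Guttman:bound2}. The inequality $D^\star \leq D_1$ and $D^\star \leq D_2$ are the two distinct ``conservative assessments for the admissible range of the diagonal entries of $D$'' referenced in the text preceding the proposition.

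Finally, for the comparison $n_+(\Sigma - D_1) \leq n_+(\Sigma - D_2)$: I would show $D_1 \geq D_2$ entrywise, i.e.\ $[\Sigma]_{ii} \geq 1/[\Sigma^{-1}]_{ii}$ for each $i$. This is the classical fact that a diagonal entry of a positive definite matrix is at least the reciprocal of the corresponding diagonal entry of its inverse --- it follows from $[\Sigma]_{ii} [\Sigma^{-1}]_{ii} \geq 1$, which in turn is the Cauchy--Schwarz / Schur-complement inequality (the $(i,i)$ Schur complement of $\Sigma^{-1}$ is positive). Then $\Sigma - D_1 \leq \Sigma - D_2$ in the Loewner order, so by Weyl's monotonicity theorem the number of positive eigenvalues cannot increase: $n_+(\Sigma - D_1) \leq n_+(\Sigma - D_2)$.

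The main obstacle I anticipate is the monotonicity step ``$D \leq D'$ entrywise $\Rightarrow n_+(\Sigma - D) \geq n_+(\Sigma - D')$'': this is immediate from $\Sigma - D \geq \Sigma - D'$ in the Loewner order (since $D' - D \geq 0$ is diagonal with nonnegative entries) combined with Weyl's inequalities, but one must be slightly careful when the residual is only positive \emph{semi}definite, so that $\rank = n_+$ genuinely holds for the optimizer, and when $D^\star$ is singular one should first perturb $\Sigma \to \Sigma + \epsilon I$ to make the optimal $D^\star$ strictly positive (invoking \eqref{eq:inequality2} and its equality for small $\epsilon$) before passing to the dual form used for \eqref{Guttman:bound2}. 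None of these is deep, but the bookkeeping between the Loewner order and the entrywise order on diagonals is where the argument must be stated with care.
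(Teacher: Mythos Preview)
Your proposal is correct and follows the same route as the paper, whose proof sketch consists of the single observation that any admissible diagonal $D$ (i.e., $0\leq D$ and $\Sigma-D\geq 0$) necessarily satisfies $D\leq D_2\leq D_1$, after which Weyl monotonicity gives both bounds and their comparison. Your detour through Lemma~\ref{lemma:decompositions} and the perturbation argument for \eqref{Guttman:bound2} is unnecessary: the implication $\Sigma\geq D>0 \Rightarrow D^{-1}\geq \Sigma^{-1}$ (hence $[D^{-1}]_{ii}\geq[\Sigma^{-1}]_{ii}$, i.e., $[D]_{ii}\leq[D_2]_{ii}$) is immediate from operator monotonicity of the inverse, and the case $[D]_{ii}=0$ is trivial.
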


\begin{proof} The proof follows from the fact that $\Sigma\geq D$ implies $D\leq D_2\leq D_1$. See \cite{Guttman1954} for details.
\end{proof}

It is also easy to see that $\mr(\Sigma)\geq n_+(\Sigma-D_1)$ which
provides a lower bound for the minimum rank in Shapiro's problem.
Next we return to a bound, which we noted earlier in \eqref{eq:rankjustify}.

\begin{prop}\label{tracebound}
Let $\Sigma\in\Sym_{n,+}$. Then the following holds:
\begin{align}\label{eq:lowerbound3}
\mr_+(\Sigma)\geq \min_{\Sigma\geq D\geq 0}\trace(\Sigma^{-1}(\Sigma-D)).
\end{align}
\end{prop}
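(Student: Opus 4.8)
<br>

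The plan is to derive \eqref{eq:lowerbound3} from the chain of inequalities already displayed in \eqref{eq:rankjustify}, turning that informal observation into a self-contained argument. For any admissible decomposition $\Sigma=\hat\Sigma+D$ with $\Sigma\geq D\geq0$ and $D$ diagonal, write $r=\rank(\hat\Sigma)$. The first step is the identity $r=\trace(\hat\Sigma^\sharp\hat\Sigma)$, which holds because $\hat\Sigma^\sharp\hat\Sigma$ is the orthogonal projection onto $\cR(\hat\Sigma)$ and the trace of a projection equals the dimension of its range. The second step is the key inequality $\trace(\hat\Sigma^\sharp\hat\Sigma)\geq\trace((\hat\Sigma+D)^{-1}\hat\Sigma)$; since $\hat\Sigma+D=\Sigma>0$, the right-hand side equals $\trace(\Sigma^{-1}\hat\Sigma)=\trace(\Sigma^{-1}(\Sigma-D))$. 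Combining, $r\geq\trace(\Sigma^{-1}(\Sigma-D))$ for every admissible $D$, so taking the minimum over $\Sigma\geq D\geq0$ on the right and the minimum over decompositions on the left (the latter being exactly $\mr_+(\Sigma)$) yields the claim.

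I would establish the key inequality $\trace(\hat\Sigma^\sharp\hat\Sigma)\geq\trace(\Sigma^{-1}\hat\Sigma)$ by passing to a basis adapted to $\hat\Sigma$. Let $\hat\Sigma=USU'$ be a reduced eigendecomposition with $U\in\mR^{n\times r}$ having orthonormal columns spanning $\cR(\hat\Sigma)$ and $S=\diag^*(s)>0$ of size $r\times r$. Then $\hat\Sigma^\sharp\hat\Sigma=UU'$, so the left side is $\trace(UU')=r$. For the right side, $\trace(\Sigma^{-1}\hat\Sigma)=\trace(\Sigma^{-1}USU')=\trace(U'\Sigma^{-1}U\,S)$. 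Writing $P:=U'\Sigma^{-1}U$, which is $r\times r$ and positive definite, the inequality reduces to $\trace(PS)\leq r$. Now observe $P^{-1}=(U'\Sigma^{-1}U)^{-1}\preceq U'\Sigma U$ is not quite what is needed; instead the cleaner route is that $\Sigma\succeq\hat\Sigma=USU'$ implies, after conjugating by $\Sigma^{-1/2}$, that $I\succeq\Sigma^{-1/2}USU'\Sigma^{-1/2}$, a positive semidefinite matrix with the same nonzero eigenvalues as $SU'\Sigma^{-1}U=SP$; hence every eigenvalue of $SP$ lies in $[0,1]$ and $\trace(SP)\leq r$. This is the natural place to be careful, because one must correctly track that the nonzero spectrum of $SP$ coincides with that of the symmetric $\Sigma^{-1/2}USU'\Sigma^{-1/2}$, and that $\Sigma\geq\hat\Sigma$ (which follows from $D\geq0$) is what forces the spectral bound.

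The main obstacle, such as it is, is this spectral comparison step; everything else is bookkeeping with the pseudoinverse and the trace. One subtlety worth noting in the write-up: the inequality $\Sigma\geq\hat\Sigma$ used above is $\hat\Sigma+D\geq\hat\Sigma$, i.e.\ $D\geq0$, which is precisely the Frisch (not Shapiro) constraint, so the bound is genuinely a bound for $\mr_+$. I would also remark that the minimum on the right-hand side of \eqref{eq:lowerbound3} is attained, since the feasible set $\{D \mid \Sigma\geq D\geq0,\ D\ \text{diagonal}\}$ is compact and the objective $D\mapsto\trace(\Sigma^{-1}(\Sigma-D))$ is continuous (indeed affine), so writing $\min$ rather than $\inf$ is justified; this also makes the quantity a computable semidefinite-programming lower bound, which is the point of the proposition in the context of Section~\ref{sec:CertifMinRank}.
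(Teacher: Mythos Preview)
Your proposal is correct and follows the paper's own approach, which simply invokes the chain \eqref{eq:rankjustify} without further justification. You go beyond the paper by supplying a clean spectral argument for the key inequality $\trace(\hat\Sigma^\sharp\hat\Sigma)\geq\trace(\Sigma^{-1}\hat\Sigma)$ (via $0\leq\Sigma^{-1/2}\hat\Sigma\,\Sigma^{-1/2}\leq I$ and the coincidence of nonzero spectra), which the paper leaves implicit; your added remarks on attainment of the minimum and the role of $D\geq0$ are also apt.
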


\begin{proof} The statement follows readily from \eqref{eq:rankjustify}.
\end{proof}

Evidently an analogous statement holds for $\mr(\Sigma)$.
We note that \eqref{Guttman:bound1} and \eqref{Guttman:bound2} remain invariant
under scaling of rows and corresponding columns, whereas \eqref{eq:lowerbound3} does not, hence these two cannot be compared directly.
\end{subequations}

\section{Correspondence between decompositions}\label{correspondence}

We now return to the decomposition of the data matrix $X=\hat
X+\tilde X$ as in \eqref{eq:decompose} and its relation to the
corresponding sample covariances. The decomposition of $X$ into
``noise-free'' and ``noisy'' components implies a corresponding
decomposition for the sample covariance, but in the converse
direction, a decomposition $ \Sigma=\hat\Sigma+\tilde\Sigma $ leads
to a family of compatible decompositions for $X$, which corresponds
to the boundary of a matrix-ball. This is discussed next.

\begin{prop}\label{prop:decomposition} Let $X\in\mR^{n\times T}$, and $\Sigma:=XX^\prime$. If
\begin{equation}\label{eq:decompose2}
\Sigma=\hat \Sigma+\tilde \Sigma
\end{equation}
with $\hat \Sigma$,  $\tilde \Sigma$ symmetric and non-negative definite, there exists a decomposition
\begin{subequations}\label{conditions}
\begin{equation}\label{eq:Xdecompose}
X=\hat X+\tilde X
\end{equation}
for which
\begin{eqnarray}
\label{cond2}
&&\hat X \tilde X^\prime  = 0,\\\label{cond3}
&&\hat \Sigma = \hat X \hat X^\prime,\\\label{cond4}
&&\tilde \Sigma = \tilde X\tilde X^\prime.
\end{eqnarray}
\end{subequations}
Further,
all pairs $(\hat X,\,\tilde X)$ that satisfy (\ref{eq:Xdecompose}-\ref{cond4})
are of the form
\begin{equation}\label{parametrization}
\hat X=\hat\Sigma \Sigma^{-1} X+R^{1/2}V,\;
\tilde X=\tilde\Sigma \Sigma^{-1} X-R^{1/2}V,
\end{equation}
with
\begin{subequations}\label{Rs}
\begin{eqnarray}\label{R}
R&:=&\hat \Sigma - \hat \Sigma \Sigma^{-1} \hat \Sigma\\
&=&\tilde \Sigma - \tilde \Sigma \Sigma^{-1} \tilde \Sigma \label{R2}\\\nonumber
&=&\hat \Sigma \Sigma^{-1}\tilde \Sigma\\\nonumber
&=&\tilde \Sigma \Sigma^{-1}\hat \Sigma,\nonumber
\end{eqnarray}
\end{subequations}
and $V\in\mR^{n\times T}$ such that $VV'=I$, $XV'=0$.
\end{prop}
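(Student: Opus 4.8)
The plan is to first establish existence of one valid decomposition, then show that \eqref{parametrization} captures all of them. For existence, I would start from the candidate $\hat X_0 := \hat\Sigma\Sigma^{-1}X$ and $\tilde X_0 := \tilde\Sigma\Sigma^{-1}X$, which obviously sum to $X$ since $\hat\Sigma+\tilde\Sigma=\Sigma$. These do not quite work because $\hat X_0\hat X_0' = \hat\Sigma\Sigma^{-1}\hat\Sigma \neq \hat\Sigma$ in general; the defect is exactly $R$ as defined in \eqref{R}. The identities in \eqref{Rs} are then routine algebra using $\Sigma = \hat\Sigma+\tilde\Sigma$: for instance $\hat\Sigma - \hat\Sigma\Sigma^{-1}\hat\Sigma = \hat\Sigma\Sigma^{-1}(\Sigma-\hat\Sigma) = \hat\Sigma\Sigma^{-1}\tilde\Sigma$, and symmetrically for the others; in particular $R$ is symmetric and, being equal to $\hat\Sigma\Sigma^{-1}\tilde\Sigma = \hat\Sigma^{1/2}(\hat\Sigma^{1/2}\Sigma^{-1}\hat\Sigma^{1/2})\hat\Sigma^{1/2}$ with $\hat\Sigma^{1/2}\Sigma^{-1}\hat\Sigma^{1/2}\leq I$, it is positive semidefinite, so $R^{1/2}$ makes sense. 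To correct the defect one adds a term $R^{1/2}V$ to $\hat X_0$ and subtracts it from $\tilde X_0$, where $V$ has orthonormal rows orthogonal to the row space of $X$ (such $V$ exists as long as $T \geq n + \operatorname{rank}(R)$, or one can pad $T$; this is the one genuine dimensional hypothesis hiding in the statement). Then $\hat X\hat X' = \hat\Sigma\Sigma^{-1}\hat\Sigma + \hat\Sigma\Sigma^{-1}XV'R^{1/2} + R^{1/2}VX'\Sigma^{-1}\hat\Sigma + R^{1/2}VV'R^{1/2}$; the two cross terms vanish because $XV'=0$, and the remaining terms give $\hat\Sigma\Sigma^{-1}\hat\Sigma + R = \hat\Sigma$. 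The computation for $\tilde X\tilde X' = \tilde\Sigma$ is identical using \eqref{R2}, and for $\hat X\tilde X' = 0$ one gets $\hat\Sigma\Sigma^{-1}\tilde\Sigma - R^{1/2}VV'R^{1/2} + (\text{cross terms}) = R - R = 0$.

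For the converse — that every valid $(\hat X,\tilde X)$ has the form \eqref{parametrization} — I would argue as follows. Given any decomposition satisfying \eqref{cond2}--\eqref{cond4}, multiply $X = \hat X + \tilde X$ on the right by $X'\Sigma^{-1}$: then $\hat X X'\Sigma^{-1} = \hat X(\hat X + \tilde X)'\Sigma^{-1} = (\hat\Sigma + 0)\Sigma^{-1} = \hat\Sigma\Sigma^{-1}$, using \eqref{cond3} and \eqref{cond2}. Hence $\hat X X' = \hat\Sigma$ in fact holds and $\hat X$ agrees with $\hat\Sigma\Sigma^{-1}X$ on the row space of $X$. Write $\hat X = \hat\Sigma\Sigma^{-1}X + N$ where $N := \hat X - \hat\Sigma\Sigma^{-1}X$. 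From $\hat X X' = \hat\Sigma = \hat\Sigma\Sigma^{-1}\Sigma = \hat\Sigma\Sigma^{-1}XX'$ we get $NX' = 0$, so the rows of $N$ lie in the orthogonal complement of the row space of $X$. Then $\hat\Sigma = \hat X\hat X' = \hat\Sigma\Sigma^{-1}\hat\Sigma + NN'$ (cross terms drop since $NX'=0$ implies $N X'\Sigma^{-1}\hat\Sigma = 0$), so $NN' = \hat\Sigma - \hat\Sigma\Sigma^{-1}\hat\Sigma = R$. Thus $N$ has the same Gram matrix as $R^{1/2}$ and rows in $(\operatorname{row} X)^\perp$, which forces $N = R^{1/2}V$ for some $V$ with $VV' = I$ (on the relevant subspace) and $XV' = 0$; and since $\tilde X = X - \hat X = \tilde\Sigma\Sigma^{-1}X - N$, this is exactly \eqref{parametrization}.

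The main obstacle is the bookkeeping around $V$ and the dimension count: one must be careful that "$VV' = I$'' means $I$ of size $\operatorname{rank}(R)\times\operatorname{rank}(R)$ (so $V \in \mR^{\operatorname{rank}(R)\times T}$), and that the decomposition $N = R^{1/2}V$ in the converse is valid — this uses the standard fact that two matrices with the same Gram matrix differ by an orthogonal factor on the left, restricted to the column space of $R^{1/2}$. There is also an implicit assumption $\Sigma > 0$ so that $\Sigma^{-1}$ exists; this is consistent with the rest of the paper but should perhaps be stated. Everything else is the routine algebra of the matrix inversion lemma applied to $\Sigma = \hat\Sigma + \tilde\Sigma$, entirely parallel to Lemma~\ref{lemma:decompositions}.
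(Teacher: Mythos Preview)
Your argument is correct and takes a somewhat different route from the paper's. For existence, the paper invokes Douglas's lemma abstractly: with $S=\operatorname{diag}(\hat\Sigma,\tilde\Sigma)$ and $B=[I\;\;I]S^{1/2}$ one has $BB'=\Sigma=XX'$, so $X=BU$ for some $U$ with $UU'=I$, and $[\hat X',\tilde X']':=S^{1/2}U$ does the job. You instead write down the answer \eqref{parametrization} directly and verify the conditions; this is more constructive and avoids the auxiliary $2n\times 2n$ block. For the parametrization, the paper chooses an orthogonal $U_o$ with $XU_o=[\Sigma^{1/2}\;\;0]$, writes $\hat XU_o=[\hat X_1\;\;\Delta]$, and solves the resulting block equations for $\hat X_1=\hat\Sigma\Sigma^{-1/2}$ and $\Delta\Delta'=R$. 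Your version --- computing $\hat X X'=\hat\Sigma$ directly from \eqref{cond2}--\eqref{cond3}, setting $N:=\hat X-\hat\Sigma\Sigma^{-1}X$, and reading off $NX'=0$, $NN'=R$ --- is the same calculation done coordinate-free, and is arguably cleaner since it never introduces $U_o$ or $\Sigma^{1/2}$.

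One small slip: the identity you wrote for positive semidefiniteness of $R$ is garbled; $\hat\Sigma^{1/2}(\hat\Sigma^{1/2}\Sigma^{-1}\hat\Sigma^{1/2})\hat\Sigma^{1/2}$ equals $\hat\Sigma\Sigma^{-1}\hat\Sigma$, not $\hat\Sigma\Sigma^{-1}\tilde\Sigma$. What you want is $R=\hat\Sigma-\hat\Sigma\Sigma^{-1}\hat\Sigma=\hat\Sigma^{1/2}(I-\hat\Sigma^{1/2}\Sigma^{-1}\hat\Sigma^{1/2})\hat\Sigma^{1/2}\geq 0$ since $\hat\Sigma\leq\Sigma$ gives $\hat\Sigma^{1/2}\Sigma^{-1}\hat\Sigma^{1/2}\leq I$. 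Your caveats about the implicit assumption $\Sigma>0$ and the dimension bookkeeping for $V$ (namely that $VV'=I$ only needs to hold on $\cR(R^{1/2})$, and that existence requires enough room in $(\operatorname{row}X)^\perp$) are well taken and apply equally to the paper's own proof.
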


\begin{proof} The proof relies on a standard lemma (\cite[Theorem 2]{douglas}) which states that if
 $A\in\mR^{n\times T}$, $B\in\mR^{n\times m}$ with $m\leq T$ such that
$A A^\prime = B B^\prime,$
then $A=BU$ for some $U\in\mR^{m\times T}$ with $U U^\prime =I$.
Thus, we let $A:=X$,
\[
S:=\left[\begin{matrix}\hat \Sigma & 0\\0&\tilde \Sigma\end{matrix}\right],
\]
and $B:=\left[\begin{matrix}I&I \end{matrix}\right] S^{1/2}$,
where $S^{1/2}$ is the matrix-square root of $S$.
It follows that there exists a matrix $U$ as above for which $A=BU$, and therefore we can take
\[
\left[\begin{matrix}\hat X \\ \tilde X\end{matrix}\right]:=S^{1/2}U.
\]
This establishes the existence of the decomposition
\eqref{eq:Xdecompose}.

In order to parameterize all such pairs $(\hat X,\,\tilde X)$, let
$U_o$ be an orthogonal (square) matrix such that
\[XU_o=[\Sigma^{1/2} \; 0].
\]
Then $\hat X U_o$ and $\tilde X U_o$ must be of the form
\begin{equation}\label{UXs}
\hat X U_o=: \left[\begin{matrix}\hat X_1&\Delta \end{matrix}\right],\;
\tilde X U_o=: \left[\begin{matrix}\tilde X_1& -\Delta \end{matrix}\right],
\end{equation}
with $\hat X_1$, $\tilde X_1$ square matrices. Since
\[\left[\begin{matrix}\hat X\\\tilde X \end{matrix}\right]
\left[\begin{matrix}\hat X^\prime &\tilde X^\prime  \end{matrix}\right]
=\left[\begin{matrix}\hat \Sigma& 0\\0&\tilde \Sigma \end{matrix}\right],
\]
then
\begin{subequations}
\begin{eqnarray}\label{first}
&&\hat X_1\hat X_1^\prime+\Delta\Delta^\prime=\hat \Sigma\\\label{second}
&&\hat X_1\tilde X_1^\prime-\Delta\Delta^\prime=0\\\label{third}
&&\tilde X_1\tilde X_1^\prime+\Delta\Delta^\prime=\tilde \Sigma.
\end{eqnarray}
\end{subequations}
Substituting $\hat X_1\tilde X_1^\prime$ for $\Delta\Delta^\prime$ into (\ref{first}) and
using the fact that $\tilde X_1=X_1-\hat X_1$ with $X_1=\Sigma^{1/2}$ we obtain that
\begin{eqnarray*}
&&\hat X_1=\hat \Sigma\Sigma^{-1/2}.
\end{eqnarray*}
Similarly, using (\ref{third}) instead, we obtain that
\begin{eqnarray*}
&&\tilde X_1=\tilde \Sigma\Sigma^{-1/2}.
\end{eqnarray*}
Substituting into (\ref{second}), (\ref{first}) and (\ref{third}) we obtain the following three relations
\begin{eqnarray*}
\Delta\Delta^\prime &=& \hat \Sigma \Sigma^{-1}\tilde \Sigma\\
&=&\hat \Sigma - \hat \Sigma \Sigma^{-1} \hat \Sigma\\
&=&\tilde \Sigma - \tilde \Sigma \Sigma^{-1} \tilde \Sigma.
\end{eqnarray*}
Since $\Delta\Delta^\prime$ and the $\Sigma$'s  are all symmetric,
\begin{eqnarray*}
\Delta\Delta^\prime&=&\tilde \Sigma \Sigma^{-1}\hat \Sigma
\end{eqnarray*}
as well. Thus, $\Delta=R^{1/2}V_1$ with $V_1V_1^\prime=I$. The proof
is completed by substituting the expressions for $\hat X_1$ and
$\Delta$ into \eqref{UXs}.
\end{proof}

Interestingly,
\[
\rank(R)+\rank(\Sigma)=\rank \left(\left[
                               \begin{array}{cc}
                                 \hat\Sigma & \hat\Sigma \\
                                 \hat\Sigma & \Sigma \\
                               \end{array}
                             \right] \right)=\rank \left(\left[
                               \begin{array}{cc}
                                 \hat\Sigma & 0 \\
                                 0 & \tilde\Sigma \\
                               \end{array}
                             \right] \right)=\rank(\hat\Sigma)+\rank(\tilde\Sigma),
\]
and hence, the rank of the ``uncertainty radius'' $R$ of the corresponding $\hat X$ and $\tilde X$-matrix spheres is
\[\rank(R)= \rank(\hat\Sigma)+\rank(\tilde\Sigma)-\rank(\Sigma).
\]
In cases where identifying $\hat X$ from the data matrix $X$,
different criteria may be used to quantify uncertainty. One such is
the rank of $R$ while another is its trace, which is the variance of
estimation error in determining $\hat X$. This topic is considered
next and its relation to the Frisch decomposition highlighted.

\section{Uncertainty and worst-case estimation}\label{sec:min-max}
The basic premise of the decomposition (\ref{eq:decompose2}) is
that, in principle, no probabilistic description of the data is
needed. Thus, under the assumptions of
Proposition~\ref{prop:decomposition}, $R$ represents a deterministic
radius of uncertainty in interpreting the data. On the other hand,
when data and noise are probabilistic in nature and represent
samples of jointly Gaussian random vectors $\bx,\;\bhx,\; \btx$ as
in (\ref{eq:xa} - \ref{eq:xc}),  the conditional expectation of
$\bhx$ given $\bx$ is $E\{\bhx |\bx\}=\hat\Sigma\Sigma^{-1} \bx$,
while the variance of the error
\begin{eqnarray*}
E\{(\bhx-\hat\Sigma \Sigma^{-1}\bx)(\bhx-\hat\Sigma \Sigma^{-1}\bx)^\prime\}&=&\hat\Sigma - \hat\Sigma\Sigma^{-1}\hat\Sigma\\
&=&R
\end{eqnarray*}
is the radius of the deterministic uncertainty set. Either way, it is of interest to assess how this radius depends on the decomposition of $\Sigma$.

\subsection{Uniformly optimal decomposition}

Since the decomposition of $\Sigma$ in the Frisch problem is not
unique, it is natural to seek a uniformly optimal choice of the
estimate $K\bx$ for $\bhx$ over all admissible decompositions. To
this end, we denote the mean-squared-error loss function
\begin{eqnarray}\label{eq:LossFunction}
L(K, \hat \Sigma, \tilde\Sigma)&:=&\trace\left(\cE\left( (\bhx-K\bx)(\bhx-K\bx)^\prime\right)\right)\nonumber\\
&\;=&\trace\left(\hat\Sigma-K\hat\Sigma-\hat\Sigma K'+K(\hat\Sigma+\tilde\Sigma) K' \right),\label{eq:loss}
\end{eqnarray}
and define
\begin{align*}
\cS(\Sigma):= \{(\hat\Sigma, \tilde\Sigma) : &~\Sigma=\hat\Sigma+\tilde\Sigma,\; \hat\Sigma,\; \tilde\Sigma\geq0 \text{~and~} \tilde\Sigma \text{~is diagonal} \}
\end{align*}
as the set of all admissible pairs.  Thus, a uniformly-optimal
decomposition of $X$ into signal plus noise relates to the following
min-max problem:
\begin{align}\label{prob:minmax}
\min_{K}\max_{(\hat\Sigma,\tilde\Sigma)\in\cS(\Sigma)}  L(K, \hat \Sigma, \tilde\Sigma).
\end{align}
The minimizer of \eqref{prob:minmax} is the uniformly optimal
estimator gain $K$. Analogous min-max problems, over
different uncertainty sets, have been studied in the literature
\cite{Eldar2004competitive}. In our setting
\begin{subequations}\label{eq:concave}
\begin{eqnarray}
\min_{K}\max_{(\hat\Sigma,\tilde\Sigma)\in\cS(\Sigma)}  L(K, \hat \Sigma,\tilde\Sigma)&\geq&\max_{(\hat\Sigma,\tilde\Sigma)\in\cS(\Sigma)}\min_{K}  L(K, \hat \Sigma,\tilde\Sigma)\label{minmaxmaxmin}\\
&=&\max_{(\hat\Sigma,\tilde\Sigma)\in\cS(\Sigma)} \trace\left(\hat\Sigma-\hat\Sigma\Sigma^{-1}\hat\Sigma\right)\label{eq:concave1}\\
&=&\max_{(\hat\Sigma,\tilde\Sigma)\in\cS(\Sigma)} \trace\left(\tilde\Sigma-\tilde\Sigma\Sigma^{-1}\tilde\Sigma\right).\label{eq:concave2}
\end{eqnarray}
\end{subequations}
The functions to maximize in \eqref{eq:concave1} and
\eqref{eq:concave2} are both strictly concave in $\hat\Sigma$ and
$\tilde\Sigma$. Therefore the maximizer is unique. Thus, we denote
\begin{equation}\label{optsolution}
(K_{\rm opt}, \hat\Sigma_{\rm opt}, \tilde\Sigma_{\rm opt}) :=\arg \max_{(\hat\Sigma,\tilde\Sigma)\in\cS(\Sigma)}\min_{K}  L(K, \hat \Sigma,\tilde\Sigma),
\end{equation}
where, clearly, $K_{\rm opt}=\hat\Sigma_{\rm opt}\Sigma^{-1}$.

In general, the decomposition suggested by the uniformly optimal
estimation problem does not lead to a singular signal covariance
$\hat\Sigma$. The condition for when that happens is given next.
Interestingly, this is expressed in terms of half the candidate
noise covariance utilized in obtaining one of the Guttman bounds
(Proposition \ref{prop:Guttman}).

\begin{prop}\label{prop:maxmin}
Let $\Sigma>0$, and let
\begin{equation}\label{eq:D0}
D_0:=\frac12 \diag^*\left(\diag(\Sigma^{-1})\right)^{-1}
\end{equation}
(which is equal to $\frac12 D_2$ defined in Proposition \ref{prop:Guttman}).
If $\Sigma-D_0\geq0$, then
\begin{subequations}\label{Solution}
\begin{equation}\label{InteriorSolution}
\tilde\Sigma_{\rm opt}=D_0 \text{~and~} \hat\Sigma_{\rm opt}=\Sigma-D_0.
\end{equation}
Otherwise,
\begin{equation}\label{BoundarySolution}
\tilde\Sigma_{\rm opt}\leq D_0 \text{~and~} \hat\Sigma_{\rm opt} \text{~is singular}.
\end{equation}
\end{subequations}
\end{prop}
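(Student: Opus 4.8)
The plan is to exploit the concavity observed in \eqref{eq:concave1}: the maximizer $(\hat\Sigma_{\rm opt},\tilde\Sigma_{\rm opt})$ over $\cS(\Sigma)$ is the unique maximizer of the strictly concave function $f(\hat\Sigma):=\trace(\hat\Sigma-\hat\Sigma\Sigma^{-1}\hat\Sigma)$, equivalently (writing $\hat\Sigma=\Sigma-D$ with $D$ diagonal, $0\le D$, $\Sigma-D\ge0$) of $g(D):=\trace((\Sigma-D)-(\Sigma-D)\Sigma^{-1}(\Sigma-D))$ over the diagonal $D$ in that feasible set. First I would compute the gradient of $g$ with respect to the diagonal entries. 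A short calculation gives $\partial g/\partial [D]_{ii} = -1 + 2[\Sigma^{-1}(\Sigma-D)]_{ii} = -1 + 2(1-[\Sigma^{-1}D]_{ii})$, so that stationarity in the $i$-th coordinate reads $[\Sigma^{-1}D]_{ii}=\tfrac12$. Since $D$ is diagonal this is $[\Sigma^{-1}]_{ii}[D]_{ii}=\tfrac12$, i.e. $[D]_{ii}=\tfrac12\big([\Sigma^{-1}]_{ii}\big)^{-1}$, which is exactly $D_0$ in \eqref{eq:D0}.

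Next I would split into the two cases of the statement. If $\Sigma-D_0\ge0$, then $D_0$ is feasible (it is diagonal, and $D_0\ge0$ since $\Sigma>0$ forces $[\Sigma^{-1}]_{ii}>0$), and it is an interior critical point of the strictly concave $g$ relative to the constraint $D\ge0$ in the coordinates where $[D_0]_{ii}>0$; hence by strict concavity it is the global maximizer over the whole feasible set, giving \eqref{InteriorSolution}. If instead $\Sigma-D_0\not\ge0$, I would argue that the unconstrained-in-$\Sigma\ge D$ maximizer $D_0$ is infeasible, so the maximizer $\tilde\Sigma_{\rm opt}$ lies on the boundary of the feasible set. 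The boundary consists of the faces $\{[D]_{ii}=0\}$ and the face $\{\Sigma-D \text{ singular}\}$. I would rule out, or rather handle, the former: since each coordinate gradient $\partial g/\partial[D]_{ii}=-1+2[\Sigma^{-1}(\Sigma-D)]_{ii}$ is positive whenever $[\Sigma^{-1}(\Sigma-D)]_{ii}>\tfrac12$, and comparing $[D]_{ii}$ with $[D_0]_{ii}$ controls the sign, one sees the maximizer cannot sit on a face $[D]_{ii}=0$ unless that is forced by the active constraint $\Sigma-D\ge0$ becoming singular; a monotonicity/line-search argument (increase $D$ coordinatewise from $0$ toward $D_0$ until $\Sigma-D$ first becomes singular) shows $\hat\Sigma_{\rm opt}=\Sigma-\tilde\Sigma_{\rm opt}$ is singular and, since each coordinate of the stopped $D$ has not exceeded $D_0$, that $\tilde\Sigma_{\rm opt}\le D_0$, which is \eqref{BoundarySolution}.

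The main obstacle I anticipate is the boundary case: making rigorous the claim that the maximizer lands on the singularity face with $\tilde\Sigma_{\rm opt}\le D_0$ rather than on some face $[D]_{ii}=0$. The clean way to do this is via the KKT conditions for $\max g(D)$ subject to $\Sigma-D\ge0$ and $D\ge0$: introduce a multiplier $\Lambda\ge0$ for $\Sigma-D\ge0$ (with $\Lambda(\Sigma-D)=0$) and nonnegative multipliers $\mu_i\ge0$ for $-[D]_{ii}\le0$; stationarity forces $[\Lambda]_{ii}=\partial g/\partial[D]_{ii}+\mu_i=-1+2[\Sigma^{-1}(\Sigma-D)]_{ii}+\mu_i$, and one shows $\Lambda\ne0$ precisely when $D_0$ is infeasible, which makes $\Sigma-D$ singular; then from $\Lambda\ge0$, $\mu_i\ge0$ and complementary slackness one extracts $[D]_{ii}\le[D_0]_{ii}$ coordinatewise. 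I would also need to double-check the harmless direction that the upper-right inequalities in \eqref{eq:concave} are not actually needed — we only use the equalities \eqref{eq:concave1}–\eqref{eq:concave2} and the uniqueness of the maximizer, both already established in the excerpt — so the argument is self-contained.
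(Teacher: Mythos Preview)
Your proposal is correct and matches the paper's approach: both identify $D_0$ as the unconstrained stationary point of the strictly concave objective, settle the interior case by feasibility plus concavity, and handle the boundary case via the KKT/Lagrangian conditions (your $\Lambda,\mu_i$ are the paper's $\Lambda_0,\Lambda_1$), deducing singularity of $\hat\Sigma_{\rm opt}$ from $\Lambda\neq 0$ and $\tilde\Sigma_{\rm opt}\le D_0$ from complementary slackness exactly as you sketch. You are right that the ``line-search toward $D_0$'' heuristic is not a proof (it produces \emph{a} feasible point with the desired properties, not the maximizer), so drop it and go straight to the KKT argument, which is precisely what the paper does.
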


\begin{proof}
From \eqref{eq:concave2},
\begin{eqnarray}
L(K_{\rm opt}, \hat \Sigma_{\rm opt}, \tilde \Sigma_{\rm opt})&=&\max \left\{\tilde \Sigma-\tilde \Sigma\Sigma^{-1}\tilde \Sigma ~\mid~ \Sigma\geq\tilde\Sigma\geq0, \tilde\Sigma \text{~is diagonal}  \right\}\nonumber\\
&\leq& \max \left\{\tilde \Sigma-\tilde \Sigma\Sigma^{-1}\tilde \Sigma ~\mid~\tilde\Sigma \text{~is diagonal}  \right\}\label{relaxedD}\\
&=&\frac12 \trace(D_0)\nonumber
\end{eqnarray}
with the maximum attained for $\tilde\Sigma=D_0$. Then
\eqref{InteriorSolution} follows. In order to prove
\eqref{BoundarySolution}, consider the Lagrangian corresponding to
\eqref{eq:concave2}
\[
\cL(\tilde\Sigma,\Lambda_0, \Lambda_1) =\trace(\tilde\Sigma-\tilde\Sigma\Sigma^{-1}\tilde\Sigma+\Lambda_0(\Sigma-\tilde\Sigma)+\Lambda_1\tilde\Sigma)
\]
where $\Lambda_0,\;\Lambda_1$ are Lagrange multipliers.
The optimal values satisfy
\begin{subequations}
\begin{eqnarray}
&&[I-2\Sigma^{-1}\tilde\Sigma_{\rm opt}-\Lambda_{0}+\Lambda_{1}]_{kk}=0, \;\forall\; k=1,\ldots, n,\label{condition1}\\
&& \Lambda_{0}\hat\Sigma_{\rm opt}=0,\; \Lambda_{0}\geq0,\label{condition2}\\
&& \Lambda_{1}\tilde\Sigma_{\rm opt}=0,\; \Lambda_{1}\geq0 \text{~and is diagonal}.\label{condition3}
\end{eqnarray}
\end{subequations}
If $\Sigma- D_0\not\geq0$ we show that $\hat\Sigma_{\rm opt}$ is
singular. Assume the contrary, i.e., that $\hat\Sigma_{\rm opt}>0$.
From \eqref{condition2}, we see that $\Lambda_{0}=0$, while from
\eqref{condition1}, $ [I-2\Sigma^{-1}\tilde\Sigma_{\rm
opt}]_{kk}\leq 0. $ This gives that
\[
[\tilde\Sigma_{\rm opt}]_{kk}\geq \frac{1}{2[\Sigma^{-1}]_{kk}}= [D_0]_{kk},
\]
for all $k=1, \ldots, n$, which contradicts the fact that $\Sigma-D_0\not\geq0$. Therefore $\hat\Sigma_{\rm opt}$ is singular.
We now assume that $\tilde\Sigma\not \leq D_0$. Then there exists $k$ such that $[\tilde\Sigma_{\rm opt}]_{kk}> [D_0]_{kk}$.
From \eqref{condition3} and \eqref{condition1}, we have that
 \[
 [\Lambda_{1}]_{kk}=0 \text{~and~} [I-2\Sigma^{-1}\tilde\Sigma_{\rm opt}]_{kk}\geq0
 \]
which contradicts the assumption that $[\tilde\Sigma_{\rm opt}]_{kk}>
[D_0]_{kk}$. Therefore $\tilde\Sigma_{\rm opt}\leq D_0$ and
\eqref{BoundarySolution} has been established.
\end{proof}

We remark that while
\begin{eqnarray*}
\cE\left( (\bhx-K\bx)(\bhx-K\bx)^\prime\right)&=&\hat\Sigma-K\hat\Sigma-\hat\Sigma K'+K\Sigma K'\\
&=&(\hat\Sigma\Sigma^{-\frac12}-K\Sigma^{\frac12})(\hat\Sigma\Sigma^{-\frac12}-K\Sigma^{\frac12})^\prime+\hat\Sigma-\hat\Sigma\Sigma^{-1}\hat\Sigma
\end{eqnarray*}
is matrix-convex in $K$ and a unique minimum for
$K=\hat\Sigma\Sigma^{-1}$, the error covariance
$\hat\Sigma-\hat\Sigma\Sigma^{-1}\hat\Sigma $ may not have a unique
maximum in the positive semi-definite sense. To see this, consider
$\Sigma=\left[
                   \begin{array}{cc}
                     2 & 1 \\
                     1 & 2 \\
                   \end{array}
                 \right]
$. In this case $D_0=\frac{3}{4}I$, $\hat\Sigma_{\rm opt}=\left[
                   \begin{array}{cc}
                     5/4 & 1 \\
                     1 & 5/4 \\
                   \end{array}
                 \right]$, and
\begin{equation}\label{eq:Ropt}
\hat\Sigma_{\rm opt}-\hat\Sigma_{\rm opt}\Sigma^{-1}\hat\Sigma_{\rm opt}=\left[
                   \begin{array}{cc}
                     3/8 & 3/16 \\
                     3/16 & 3/8 \\
                   \end{array}
                 \right].
\end{equation}
On the other hand, for $\hat\Sigma=\left[
                   \begin{array}{cc}
                     3/2 & 1 \\
                     1 & 3/2 \\
                   \end{array}
                 \right]$, then
\[
\hat\Sigma-\hat\Sigma\Sigma^{-1}\hat\Sigma=\left[
                   \begin{array}{cc}
                     1/3 & 1/12 \\
                     1/12 & 1/3 \\
                   \end{array}
                 \right]
\]
which is neither larger nor smaller than \eqref{eq:Ropt} in the
sense of semi-definiteness. This is a key reason for considering
scalar loss functions of the error covariance as in
\eqref{eq:loss}.

Next we note that there is no gap between the min-max and max-min
values in the two sides of \eqref{minmaxmaxmin}.
\begin{prop}\label{prop:minmax}
For $\Sigma\in\Sym_{n,+}$, then
\begin{equation}\label{eq:equal}
\min_{K}\max_{(\hat\Sigma, \tilde\Sigma)\in\cS(\Sigma)}  L(K, \hat \Sigma, \tilde\Sigma)=\max_{(\hat\Sigma, \tilde\Sigma)\in\cS(\Sigma)}\min_{K}  L(K, \hat \Sigma, \tilde\Sigma).
\end{equation}
\end{prop}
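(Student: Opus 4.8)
The plan is to deduce \eqref{eq:equal} from Sion's minimax theorem. In the form we use it: if $X$ is a \emph{compact} convex subset of a topological vector space, $Y$ is a convex subset of a topological vector space, and $f:X\times Y\to\mR$ is upper semicontinuous and quasi-concave in its first argument (for each fixed second argument) and lower semicontinuous and quasi-convex in its second argument, then $\max_{x\in X}\inf_{y\in Y}f(x,y)=\inf_{y\in Y}\sup_{x\in X}f(x,y)$. I would apply this with $X=\cS(\Sigma)$, $Y=\mR^{n\times n}$ (the domain of the gain $K$), and $f\bigl((\hat\Sigma,\tilde\Sigma),K\bigr)=L(K,\hat\Sigma,\tilde\Sigma)$. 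Since the reverse inequality $\min_K\max_{\cS}L\geq\max_{\cS}\min_K L$ is automatic, the theorem gives exactly the asserted equality.

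The hypotheses are checked as follows. The set $\cS(\Sigma)$ is cut out by the linear equality $\hat\Sigma+\tilde\Sigma=\Sigma$, the linear constraint that $\tilde\Sigma$ be diagonal, and the convex cone constraints $\hat\Sigma,\tilde\Sigma\geq0$, so it is closed and convex; and on this set $0\leq\hat\Sigma\leq\Sigma$ and $0\leq\tilde\Sigma\leq\Sigma$, so it is bounded, hence compact (and nonempty, as it contains $(\Sigma,0)$). For fixed $K$, reading off \eqref{eq:loss}, $L(K,\cdot,\cdot)$ is an affine — indeed linear — function of the pair $(\hat\Sigma,\tilde\Sigma)$ ($\hat\Sigma$ enters only through $\trace(\hat\Sigma-K\hat\Sigma-\hat\Sigma K'+K\hat\Sigma K')$ and $\tilde\Sigma$ only through $\trace(K\tilde\Sigma K')$), hence continuous and concave on $\cS(\Sigma)$, so both upper semicontinuous and quasi-concave. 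For fixed $(\hat\Sigma,\tilde\Sigma)\in\cS(\Sigma)$, \eqref{eq:loss} shows $L(\cdot,\hat\Sigma,\tilde\Sigma)$ is a quadratic form in $K$ with Hessian $2(\hat\Sigma+\tilde\Sigma)=2\Sigma\geq0$, hence continuous and convex in $K$, so lower semicontinuous and quasi-convex. Thus all hypotheses hold and Sion's theorem delivers \eqref{eq:equal}.

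I do not expect a genuine obstacle; the two points needing a moment's care are (i) confirming that $L$ is \emph{affine}, not merely concave, in the maximization variables $(\hat\Sigma,\tilde\Sigma)$, which is immediate from \eqref{eq:loss}; and (ii) the fact that the minimization variable $K$ ranges over the non-compact space $\mR^{n\times n}$, which is harmless because Sion's theorem requires only the other set to be compact (equivalently, for $\Sigma>0$ the infimum over $K$ is attained at $K=\hat\Sigma\Sigma^{-1}$ and the spectrum of $\Sigma^{-1/2}\hat\Sigma\Sigma^{-1/2}$ lies in $[0,1]$ on $\cS(\Sigma)$, so $K$ may be restricted a priori to a fixed compact ball). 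As a self-contained alternative one can exhibit a saddle point directly: with $(\hat\Sigma_{\rm opt},\tilde\Sigma_{\rm opt})$ as in \eqref{optsolution} and $K_{\rm opt}=\hat\Sigma_{\rm opt}\Sigma^{-1}$, stationarity of the convex quadratic $L(\cdot,\hat\Sigma_{\rm opt},\tilde\Sigma_{\rm opt})$ gives minimality in $K$, while the first-order optimality conditions \eqref{condition1}--\eqref{condition3} from the proof of Proposition~\ref{prop:maxmin}, together with the linearity of $L(K_{\rm opt},\cdot,\cdot)$ on $\cS(\Sigma)$, give maximality in $(\hat\Sigma,\tilde\Sigma)$; the existence of a saddle point then forces equality of min-max and max-min.
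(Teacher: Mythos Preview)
Your proposal is correct and follows essentially the same approach as the paper: observe that $L$ is linear in $(\hat\Sigma,\tilde\Sigma)$ for fixed $K$ and convex in $K$ for fixed $(\hat\Sigma,\tilde\Sigma)$, then invoke a standard minimax theorem. The paper simply cites this as standard (referring to \cite[page 281]{Boyd2004convex}), whereas you spell out the hypotheses of Sion's theorem in detail, including the compactness of $\cS(\Sigma)$ and the handling of the non-compact domain for $K$; your added care is welcome but not a different route.
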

\begin{proof}
We observe that for a fixed $K$, the function $L(K, \hat \Sigma,
\tilde\Sigma)$ is a linear function of $(\hat\Sigma, \tilde\Sigma)$.
For fixed $(\hat\Sigma, \tilde\Sigma)$, the function is a convex
function of $K$. Under this conditions it is standard that
\eqref{eq:equal} holds, see e.g. \cite[page 281]{Boyd2004convex}.
\end{proof}

We remark that when $D_0=\frac12
\diag^*\left(\diag(\Sigma^{-1})\right)^{-1}$ is admissible as noise
covariance, i.e., $\Sigma- D_0\geq0$, the optimal signal covariance
is $\hat\Sigma_{\rm opt}=\Sigma-D_0$, and the gain matrix $K_{\rm
opt}=\hat\Sigma_{\rm opt}\Sigma^{-1}=I-D_0\Sigma^{-1}$ has all
diagonal entries equal to $\frac{1}{2}$. Thus, with $K_{\rm opt}$ in
\eqref{eq:LossFunction} the mean-square-error loss is independent of
$\hat\Sigma$ and equal to $\trace\left(K_{\rm opt}\Sigma K_{\rm
opt}^\prime\right)$ for any admissible decomposition of $\Sigma$.

We also remark that the key condition (Proposition \ref{prop:maxmin})
 \begin{align*}\label{InvariantCondition}
& \Sigma\geq\frac12 \diag^*\left(\diag(\Sigma^{-1}) \right)^{-1}\\
&\Leftrightarrow 2\diag^*\left(\diag(\Sigma^{-1}) \right)\geq \Sigma^{-1}
 \end{align*}
can be equivalently written as $\Sigma^{-1}\circ (2I-{\bf 1}{\bf
1}')\geq0$, and interestingly, amounts to the positive
semi-definitess of a matrix formed by changing the signs of all
off-diagonal entries of $\Sigma^{-1}$. The set of all such matrices, $\left\{S
\mid S\geq 0,~ S\circ (2I-{\bf 1}{\bf 1}')\geq0 \right\}$, is
convex, invariant under scaling rows and corresponding columns, and
contains the set of diagonally dominant matrices $\{S \mid S\geq 0,~
[S]_{ii}\geq \sum_{j\neq i} |[S]_{ij}| \text{~for all ~} i\}$.

We conclude this section by noting that $\trace(R_{\rm opt})$,
 with
 \[
R_{\rm opt}:=\hat\Sigma_{\rm opt}-\hat\Sigma_{\rm opt}\Sigma^{-1}\hat\Sigma_{\rm opt},
\]
quantifies the distance between admissible decompositions of $\Sigma$. This is stated next.

\begin{prop}\label{lemma:radius}
For $\Sigma>0$ and any pair $(\hat\Sigma, \tilde\Sigma)\in \cS(\Sigma)$,
\[
\trace\left( (\hat\Sigma-\hat\Sigma_{\rm opt})\Sigma^{-1}(\hat\Sigma-\hat\Sigma_{\rm opt})' \right)\leq \trace(R_{\rm opt}).
\]
\end{prop}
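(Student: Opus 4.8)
The plan is to reduce the statement to an elementary property of the strictly concave quadratic
\[
f(\hat\Sigma):=\trace\bigl(\hat\Sigma-\hat\Sigma\Sigma^{-1}\hat\Sigma\bigr)
\]
of the signal covariance alone. By \eqref{eq:concave1}, $\min_K L(K,\hat\Sigma,\tilde\Sigma)=f(\hat\Sigma)$, so by the definition \eqref{optsolution} (and the uniqueness of the maximizer noted there) $\hat\Sigma_{\rm opt}$ is the maximizer of $f$ over the convex set $\mathcal F:=\{\hat\Sigma\in\Sym_{n,+}\mid \Sigma-\hat\Sigma\ge 0 \text{ and diagonal}\}$, which is exactly the projection of $\cS(\Sigma)$ onto its first coordinate. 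Two facts will be used: (i) $R:=\hat\Sigma-\hat\Sigma\Sigma^{-1}\hat\Sigma\ge 0$ for every $\hat\Sigma\in\mathcal F$ --- it is the minimal error covariance $\cE\{(\bhx-\hat\Sigma\Sigma^{-1}\bx)(\bhx-\hat\Sigma\Sigma^{-1}\bx)'\}$ discussed at the start of Section~\ref{sec:min-max} and also follows from the factorization in Proposition~\ref{prop:decomposition} --- so that $f(\hat\Sigma)=\trace(R)\ge 0$ on $\mathcal F$ and $f(\hat\Sigma_{\rm opt})=\trace(R_{\rm opt})$; (ii) $\mathcal F$ is convex.

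Next I would write the exact second-order expansion of $f$ about $\hat\Sigma_{\rm opt}$. Putting $\Gamma:=\hat\Sigma-\hat\Sigma_{\rm opt}$, a symmetric matrix, and using cyclicity of the trace together with the symmetry of $\hat\Sigma_{\rm opt}$, $\Gamma$ and $\Sigma^{-1}$ to merge the two cross terms, one obtains the identity
\[
f(\hat\Sigma)=f(\hat\Sigma_{\rm opt})+\bigl(\trace(\Gamma)-2\trace(\hat\Sigma_{\rm opt}\Sigma^{-1}\Gamma)\bigr)-\trace(\Gamma\Sigma^{-1}\Gamma).
\]
The parenthesized expression is precisely the directional derivative of $f$ at $\hat\Sigma_{\rm opt}$ in the direction $\Gamma$.

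Finally I would invoke first-order optimality. Since $\mathcal F$ is convex, $\hat\Sigma_{\rm opt}+t\Gamma\in\mathcal F$ for all $t\in[0,1]$, so $g(t):=f(\hat\Sigma_{\rm opt}+t\Gamma)$ is concave on $[0,1]$ with maximum at $t=0$; hence its right derivative $g'(0^{+})=\trace(\Gamma)-2\trace(\hat\Sigma_{\rm opt}\Sigma^{-1}\Gamma)\le 0$. Substituting this inequality into the displayed identity yields
\[
\trace(\Gamma\Sigma^{-1}\Gamma)\;\le\; f(\hat\Sigma_{\rm opt})-f(\hat\Sigma)\;\le\; f(\hat\Sigma_{\rm opt})=\trace(R_{\rm opt}),
\]
the second inequality because $f(\hat\Sigma)\ge 0$. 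Since $\Gamma$ is symmetric, $\trace(\Gamma\Sigma^{-1}\Gamma)=\trace\bigl((\hat\Sigma-\hat\Sigma_{\rm opt})\Sigma^{-1}(\hat\Sigma-\hat\Sigma_{\rm opt})'\bigr)$, which is the asserted bound. The only step that needs care is invoking optimality at $\hat\Sigma_{\rm opt}$ when it lies on the boundary of $\mathcal F$ --- which happens exactly when $\Sigma-D_0\not\ge 0$, by Proposition~\ref{prop:maxmin} --- but this is handled by the one-sided derivative inequality $g'(0^{+})\le 0$ along the segment, rather than a gradient condition; everything else is routine quadratic algebra.
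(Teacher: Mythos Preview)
Your proof is correct and follows essentially the same route as the paper's. The paper's key inequality $L(K_{\rm opt},\hat\Sigma,\tilde\Sigma)\le\trace(R_{\rm opt})$, drawn from Proposition~\ref{prop:minmax}, is exactly your first-order optimality condition $g'(0^{+})\le 0$ in disguise (indeed $L(K_{\rm opt},\hat\Sigma,\tilde\Sigma)-\trace(R_{\rm opt})=\trace(\Gamma)-2\trace(\hat\Sigma_{\rm opt}\Sigma^{-1}\Gamma)$), and both proofs then combine this with $f(\hat\Sigma)\ge 0$ to finish.
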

\begin{proof}
Clearly
$0\leq\trace(\hat\Sigma-\hat\Sigma\Sigma^{-1}\hat\Sigma)$,
while from Proposition \ref{prop:minmax},
\begin{eqnarray}
L(K_{\rm opt}, \hat\Sigma, \tilde\Sigma)
&=& \trace(\hat\Sigma-2\hat\Sigma_{\rm opt}\Sigma^{-1}\hat\Sigma+\hat\Sigma_{\rm opt}\Sigma^{-1}\hat\Sigma_{\rm opt}')\label{eqB}\\
&\leq& \trace(R_{\rm opt}).\nonumber
\end{eqnarray}
Thus, $\trace(\hat\Sigma\Sigma^{-1}\hat\Sigma-2\hat\Sigma_{\rm opt}\Sigma^{-1}\hat\Sigma+\hat\Sigma_{\rm opt}\Sigma^{-1}\hat\Sigma_{\rm opt}')\leq \trace(R_{\rm opt})$.
\end{proof}

\subsection{Uniformly optimal estimation and trace regularization}\label{sec:regularized}
A decomposition of $\Sigma$ in accordance with the min-max estimation problem of the previous section often produces an invertible signal covariance $\hat\Sigma$. On the other hand, it is often the case and it is the premise of factor analysis, that $\hat\Sigma$ is singular of low rank and, thereby, allows identifying linear relations in the data. In this section we consider combining the mean-square-error loss function with regularization term promoting a low rank for the signal covariance $\hat\Sigma$ \cite{Fazel2001}. More specifically, we consider
\begin{equation}\label{prob:minmaxRank}
J=\min_{K}\max_{(\hat\Sigma,\tilde\Sigma)\in\cS(\Sigma)}  \left(L(K,
\hat \Sigma, \tilde\Sigma)-\lambda\cdot \trace(\hat\Sigma)\right),
\end{equation}
for $\lambda\geq0$, and properties of its solutions.

As noted in Proposition \ref{prop:minmax} (see \cite[page 281]{Boyd2004convex}), here too there is no gap between the min-max and the max-min, which becomes
\begin{subequations}
\begin{align}
&\max_{(\hat\Sigma,\tilde\Sigma)\in\cS(\Sigma)}\min_{K}  L(K, \hat \Sigma, \tilde\Sigma)-\lambda\cdot \trace(\hat\Sigma)\nonumber\\
&= \max_{(\hat\Sigma,\tilde\Sigma)\in\cS(\Sigma)}\min_{K}  \trace\left( (1-\lambda)\hat\Sigma-K\hat\Sigma-\hat\Sigma K'+K(\hat\Sigma+\tilde\Sigma)K'  \right)\nonumber\\
&=\max_{(\hat\Sigma,\tilde\Sigma)\in\cS(\Sigma)} \trace\left( (1-\lambda)\hat\Sigma-\hat\Sigma(\hat\Sigma+\tilde\Sigma)^{-1}\hat\Sigma \right) \label{eq:Kcanceled}\\
&=\max_{(\hat\Sigma,\tilde\Sigma)\in\cS(\Sigma)} \trace\left( -\lambda\Sigma+ (1+\lambda)\tilde\Sigma-\tilde\Sigma(\hat\Sigma+\tilde\Sigma)^{-1}\tilde\Sigma \right). \label{eq:Sigtilde}
\end{align}
\end{subequations}
Since \eqref{eq:Kcanceled}
and \eqref{eq:Sigtilde} are strictly concave functions of
$\hat\Sigma$ and $\tilde\Sigma$, respectively, there is a unique
set of optimal values $(K_{\lambda, \rm opt}, \hat\Sigma_{\lambda,\rm opt}, \tilde\Sigma_{\lambda,\rm opt})$.

\begin{prop}
Let $\Sigma>0$, $D_0=\frac12 \left(\diag^*\diag(\Sigma^{-1})\right)^{-1},$
$\lambda_{\rm min}$ be the smallest eigenvalue of $D_0^{-\frac12}\Sigma D_0^{-\frac12}$,
and $(K_{\lambda, \rm opt}, \hat\Sigma_{\lambda,\rm opt}, \tilde\Sigma_{\lambda,\rm opt})$ as above, for $\lambda\geq0$.
For any
$\lambda\geq\lambda_{\rm min}-1$,
$\hat\Sigma_{\lambda,{\rm opt}}$ is singular.
\end{prop}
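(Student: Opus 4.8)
The plan is to adapt the Lagrangian/KKT argument from the proof of Proposition~\ref{prop:maxmin} to the regularized problem. By the no-gap property recorded just above the statement (cf.\ Proposition~\ref{prop:minmax}), the triple $(K_{\lambda,{\rm opt}},\hat\Sigma_{\lambda,{\rm opt}},\tilde\Sigma_{\lambda,{\rm opt}})$ realizes the max-min in \eqref{eq:Sigtilde}, so, using $\hat\Sigma+\tilde\Sigma=\Sigma$, the matrix $\tilde\Sigma_{\lambda,{\rm opt}}$ is the (unique, by strict concavity) maximizer of
\[
\trace\!\left(-\lambda\Sigma+(1+\lambda)\tilde\Sigma-\tilde\Sigma\Sigma^{-1}\tilde\Sigma\right)
\]
over the compact convex set $\{\tilde\Sigma\ \text{diagonal}:\ 0\leq\tilde\Sigma\leq\Sigma\}$, with $\hat\Sigma_{\lambda,{\rm opt}}=\Sigma-\tilde\Sigma_{\lambda,{\rm opt}}$. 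Since $\tilde\Sigma=\epsilon I$ is strictly feasible for small $\epsilon>0$, Slater's condition holds and the KKT conditions are necessary and sufficient. Introducing multipliers $\Lambda_0\geq0$ for the constraint $\hat\Sigma=\Sigma-\tilde\Sigma\geq0$ and a diagonal $\Lambda_1\geq0$ for $\tilde\Sigma\geq0$, and using that $\tilde\Sigma$ diagonal gives $[\Sigma^{-1}\tilde\Sigma]_{kk}=[\Sigma^{-1}]_{kk}[\tilde\Sigma]_{kk}$, stationarity reads
\[
(1+\lambda)-2[\Sigma^{-1}]_{kk}[\tilde\Sigma_{\lambda,{\rm opt}}]_{kk}-[\Lambda_0]_{kk}+[\Lambda_1]_{kk}=0,\qquad k=1,\dots,n,
\]
together with $\Lambda_0\hat\Sigma_{\lambda,{\rm opt}}=0$ and $\Lambda_1\tilde\Sigma_{\lambda,{\rm opt}}=0$.

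Next I would argue by contradiction. Suppose $\hat\Sigma_{\lambda,{\rm opt}}>0$; then $\Lambda_0\hat\Sigma_{\lambda,{\rm opt}}=0$ forces $\Lambda_0=0$, whence $[\Lambda_1]_{kk}=2[\Sigma^{-1}]_{kk}[\tilde\Sigma_{\lambda,{\rm opt}}]_{kk}-(1+\lambda)\geq0$. Recalling $[D_0]_{kk}=\tfrac{1}{2[\Sigma^{-1}]_{kk}}$, this gives $[\tilde\Sigma_{\lambda,{\rm opt}}]_{kk}\geq(1+\lambda)[D_0]_{kk}$ for every $k$, i.e.\ $\tilde\Sigma_{\lambda,{\rm opt}}\geq(1+\lambda)D_0$ as diagonal matrices, and therefore
\[
\hat\Sigma_{\lambda,{\rm opt}}=\Sigma-\tilde\Sigma_{\lambda,{\rm opt}}\leq\Sigma-(1+\lambda)D_0 .
\]

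Finally I would bring in the hypothesis $\lambda\geq\lambda_{\rm min}-1$, i.e.\ $1+\lambda\geq\lambda_{\rm min}$. Writing $\Sigma-(1+\lambda)D_0=D_0^{1/2}\bigl(D_0^{-1/2}\Sigma D_0^{-1/2}-(1+\lambda)I\bigr)D_0^{1/2}$ and recalling that $\lambda_{\rm min}$ is the smallest eigenvalue of $D_0^{-1/2}\Sigma D_0^{-1/2}$ while $D_0>0$, the matrix $\Sigma-(1+\lambda)D_0$ fails to be positive definite. But if $\hat\Sigma_{\lambda,{\rm opt}}>0$, then $\Sigma-(1+\lambda)D_0=\hat\Sigma_{\lambda,{\rm opt}}+\bigl(\Sigma-(1+\lambda)D_0-\hat\Sigma_{\lambda,{\rm opt}}\bigr)$ is a sum of a positive definite and a positive semidefinite matrix, hence positive definite --a contradiction. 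Since $\hat\Sigma_{\lambda,{\rm opt}}$ is positive semidefinite and cannot be positive definite, it is singular. I do not anticipate a serious obstacle; the only care-points are checking that KKT applies (Slater, as already used in Proposition~\ref{prop:maxmin}) and handling the boundary value $\lambda=\lambda_{\rm min}-1$, where $\Sigma-(1+\lambda)D_0$ is merely singular rather than indefinite --but the last "positive definite plus positive semidefinite" step covers both cases at once.
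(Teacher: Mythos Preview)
Your proof is correct and follows essentially the same route as the paper: identify the unconstrained diagonal maximizer $\tilde\Sigma=(1+\lambda)D_0$ of \eqref{eq:Sigtilde} and show that for $\lambda\geq\lambda_{\rm min}-1$ the constraint $\Sigma-\tilde\Sigma\geq0$ must be active at the optimum. The paper compresses this into three sentences, whereas you spell out the KKT argument (mirroring Proposition~\ref{prop:maxmin}) and, in particular, handle the boundary case $\lambda=\lambda_{\rm min}-1$ more carefully than the paper's ``fails to be positive semidefinite,'' which at that value should read ``fails to be positive definite.''
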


\begin{proof}
The trace of
$( -\lambda\Sigma+ (1+\lambda)\tilde\Sigma-\tilde\Sigma\Sigma^{-1}\tilde\Sigma )$ is maximal for the diagonal choice $\tilde \Sigma = (1+\lambda)D_0$.
For any $\lambda \geq \lambda_{\rm min}-1$,  $\Sigma-(1+\lambda) D_0$ fails to be positive semidefinite. Thus, the constraint $\Sigma-\tilde\Sigma\geq 0$ in \eqref{eq:Sigtilde} is active and $\hat\Sigma_{\lambda, {\rm opt}}$ is singular.
\end{proof}

Note that $\Sigma-2D_0\not\geq 0$ (unless $\Sigma$ is diagonal), and therefore $\lambda_{\rm min}<2$. Hence, for
$\lambda\geq1$, $\hat\Sigma_{\lambda, {\rm opt}}$ is singular.
When $\lambda\to 0$ we recover the solution in \eqref{optsolution}, whereas for $\lambda\to\infty$ we recover the solution in Proposition
\ref{prop:mintrace}.

\section{Accounting for statistical errors}\label{statisticalerrors}

From an applications standpoint $\Sigma$ represents an empirical
covariance, estimated on the basis of a finite observation record in
$X$. Hence \eqref{eq:diagonal} and \eqref{eq:orthogonality} are only
approximately valid, as already suggested in
Section~\ref{sec:datastrcuture}. Thus, in order to account for
sampling errors we can introduce a penalty for the size of
 $C:=\hat X\tilde X^\prime$, conditioned so that
 \[
 \Sigma=\hat\Sigma + \tilde\Sigma +C +C',
 \]
and a penalty for the distance of $\tilde \Sigma$ from the set $\{D \mid D\mbox{ diagonal}\}$.

Alternatively, we can use the  Wasserstein 2-distance
\cite{olkin1982,ning2011} between the respective Gaussian
probability density functions, which can be written in the form of a
semidefinite program
\[
d(\hat\Sigma+D, \Sigma)=\min_{C_1}\left(\trace(\Sigma+\hat\Sigma+D+C_1+C_1') \mid \left[
\begin{array}{cc} \hat\Sigma+D & C_1 \\
                            C_1' & \Sigma \\
\end{array}
\right]\geq0 \right).
\]

Returning to the uncertainty radius of Section \ref{correspondence} and the
problem discussed in Section \ref{sec:min-max}, we note that the problem
\begin{equation}\nonumber
\max\min_{K}  L(K, \hat \Sigma,D)\\
=\max \trace\left(\hat\Sigma-\hat\Sigma(\hat\Sigma+D)^{-1}\hat\Sigma\right)
\end{equation}
can be expressed as the semidefinite program
\begin{equation}\nonumber
\max_Q \left\{ \trace\left(\hat\Sigma-Q \right)\mid
\left[
 \begin{array}{cc}
 Q & \hat\Sigma \\
 \hat\Sigma & \hat\Sigma+D \\
 \end{array}
 \right]\geq 0
\right\}.
\end{equation}
Thus, putting the above together, a formulation that incorporates the various tradeoffs between the dimension of the signal subspace, mean-square-error loss, and statistical errors is to maximize
\begin{equation}\label{eq:maxmin2}
\trace(\hat\Sigma -Q) - \lambda_1\, \trace(\hat\Sigma) -\lambda_2\, \trace(\hat\Sigma + D - C_1-C_1^\prime)
\end{equation}
subject to
\begin{eqnarray*}
\left[
 \begin{array}{cc}
 Q & \hat\Sigma \\
 \hat\Sigma & \hat\Sigma+D \\
 \end{array}
 \right]\geq 0,\;\left[
\begin{array}{cc} \hat\Sigma+D & C_1 \\
                            C_1' & \Sigma \\
\end{array}
\right]\geq0, \mbox{ with }D\geq 0 \mbox{ and diagonal.}
\end{eqnarray*}
The value of the parameters $\lambda_1$, $\lambda_2$ dictate the relative importance that we place on the various terms and determine the tradeoffs in the problem.

We conclude with an example to highlight the potential and limitations of the techniques.
We generate data $X$ in the form
\[
X=FV+\tilde X
\]
where $F\in\mR^{n\times r}$, $V\in \mR^{r\times T}$, and $\tilde X\in \mR^{n\times T}$ with $n=50$, $r=10$, $T=100$. The elements of $F$ and $V$ are generated from normal distributions with mean zero and unit covariance. The columns of $\tilde X$ are generated from a normal distribution with mean zero and diagonal covariance, itself having (diagonal) entries which are uniformly drawn from interval $[1, 10]$. The matrix $\Sigma=XX'$ is subsequently scaled so that $\trace(\Sigma)=1$.
We determine
\[
(\hat\Sigma,Q,D)={\rm arg}\max \left\{ \trace(\hat\Sigma-Q)-\lambda\cdot \trace(\hat\Sigma)\right\}
\]
subject to
\begin{eqnarray*}
\left[\begin{matrix}Q &\hat\Sigma\\ \hat\Sigma & \hat\Sigma+D \end{matrix} \right]\geq0, ~d(\hat\Sigma+D, \Sigma)\leq \epsilon, \text{~with~} \hat\Sigma, D\geq0 \text{~and~} D \text{~diagonal},
\end{eqnarray*}
and tabulate below a typical set of values for the rank of $\hat\Sigma$ (Table 1)
as a function of $\lambda$ and $\epsilon$. We observe a ``plateau'' where the rank stabilizes at $10$ over a small range of values for $\epsilon$ and $\lambda$. Naturally, such a plateau may be taken as an indication of a suitable range of parameters.
Although the current setting where a small perturbation in the empirical covariance $\Sigma$ is allowed, the bounds for the rank
in \eqref{Guttman:bound2} and \eqref{eq:lowerbound3} are still pertinent. In fact, for this example, in $7/10$ instances where the $\rank(\hat\Sigma)=10$ the bound in \eqref{Guttman:bound2} (computed based on the perturbed covariance $\hat\Sigma+D$) has been tight and it thus a valid certificate. For the same range of parameters, the bound in \eqref{eq:lowerbound3} has been lower than the actual rank of $\hat\Sigma$. In general, the bounds in \eqref{Guttman:bound2} and \eqref{eq:lowerbound3} are not comparable as either one may be tighter than the other.\\
\begin{center}
\begin{minipage}[]{.6\textwidth}
\begin{tabular}[t]{|c||c|c|c|c|c|c|c|}
  \hline
   	\backslashbox[.5cm]{$\lambda$}{$\epsilon$}	& $0$& $0.08$ & $0.10$& $0.12$& $0.14$ & $0.16$\\ \hline\hline
  $1$  & 46 & 26 & 24 & 23 & 22 & 22 \\ \hline
  $5$  & 46 & 17 & 14 & 10 & 10 & 9 \\ \hline
  $10$ & 45 & 16 & 12 & 10 & 10 & 8 \\ \hline
  $20$ & 45 & 15 & 12 & 10 & 10 & 8 \\  \hline
  $50$ & 45 & 15 & 12 & 10 & 10 & 8 \\  \hline
  $100$& 45 & 15 & 11 & 10 & 10 & 8 \\  \hline
\end{tabular}\\[.05in]
{Table 1: $\rank(\hat\Sigma)$ as a function of $\lambda$ and $\epsilon$}\\[.05in]
\end{minipage}
\end{center}

\section{Conclusions} \label{sec:conclusion}

In this paper we considered the general problem of identifying
linear relations among variables based on noisy measurements --a
classical problem of major importance in the current era of ``Big
Data.'' Novel numerical techniques and increasingly powerful
computers have made it possible to successfully treat a number of
key issues in this topic in a unified manner. Thus, the goal of the
paper has been to present and develop in a unified manner key ideas
of the theory of noise-in-variables linear modeling.

More specifically, we considered two different viewpoints for the
linear model problem under the assumption of independent noise. From
an estimation viewpoint, we quantify the uncertainty in estimating
``noise-free'' data based on noise-in-variables linear models. We
proposed a min-max estimation problem which aims at a uniformly
optimal estimator --the solution can be obtained using convex
optimization. From the modeling viewpoint, we also derived several
classical results for the Frisch problem that asks for the maximum
number of simultaneous linear relations. Our results provide a
geometric insight to the Reiers\o l theorem, a
 generalization to complex-valued matrices, an
iterative re-weighting trace minimization scheme for obtaining
solutions of low rank along with a characterization of fixed points,
and certain computational tractable lower bounds to serve as
certificates for identifying the minimum rank.  Finally, we consider
regularized min-max estimation problems which integrate various
objectives (low-rank, minimal worst-case estimation error) and
explain their effectiveness in a numerical example.

In recent years, techniques such as the ones presented in this work
are becoming increasingly important in subjects where one has very
large noisy datasets including medical imaging, genomics/proteomics,
and finance. It is our hope that the material we presented in this
paper will be used in these topics. It must be noted that throughout
the present work we emphasized independence of noise in individual
variables. Evidently, more general and versatile structures for the
noise statistics can be treated in a similar manner, and these may
become important when dealing with large databases.

A very important topic for future research is that of dealing with
statistical errors in estimating empirical statistics. It is common
to quantify distances using standard matrix norms --as is done in
the present paper as well. Alternative distance measures such as the
Wasserstein distance mentioned in Section~\ref{statisticalerrors}
and others (see e.g.,  \cite{ning2011}) may become increasingly
important in quantifying statistical uncertainty.

Finally, we raise the question of the asymptotic performance of certificates such as those presented in Section \ref{sec:CertifMinRank}. It is important to know how the tightness of the certificate to the minimal rank of linear models relates to the size of the problem.

\section*{Acknowledgments}

This work was supported in part by grants from NSF, NIH, AFOSR, ONR,
and MDA. This work is part of the National Alliance for Medical
Image Computing (NA-MIC), funded by the National Institutes of
Health through the NIH Roadmap for Medical Research, Grant U54
EB005149. Information on the National Centers for Biomedical
Computing can be obtained from
http://nihroadmap.nih.gov /bioinformatics. Finally, this project was
supported by grants from the National Center for Research Resources
(P41-RR-013218) and the  National Institute of Biomedical Imaging
and Bioengineering (P41-EB-015902) of the National Institutes of
Health.

\bibliographystyle{siam}
\bibliography{IEEEabrv,Frisch_bib}
\end{document}